\DeclarePairedDelimiter\floor{\lfloor}{\rfloor}
\newcommand{\Mod}[1]{\ (\mathrm{mod}\ #1)}
\numberwithin{equation}{section}
\newcommand{\ber}{\begin{red}}
\newcommand{\er}{\end{red}}
\newcommand{\beb}{\begin{blue}}
\newcommand{\eb}{\end{blue}}
\theoremstyle{plain}
\newtheorem{lemma}{Lemma}[section]
\newtheorem{proposition}[lemma]{Proposition}
\newtheorem{theorem}[lemma]{Theorem}
\newtheorem{corollary}[lemma]{Corollary}
\theoremstyle{definition}
\newtheorem{remark}[lemma]{Remark}
\newtheorem{example}[lemma]{Example}
\newtheorem{definition}[lemma]{Definition}
\newtheorem{tdefinition}[lemma]{Theorem-Definition}
\newcommand{\rvline}{\hspace*{-\arraycolsep}\vline\hspace*{-\arraycolsep}}
\newcommand{\n}{\mathfrak{n}}
\newcommand{\g}{\mathfrak{g}}
\newcommand{\Z}{\mathbb{Z}}
\newlength{\mylength}
\def\fg{\mathfrak{g}}
\def\fh{\mathfrak{h}}
\def\calW{\mathcal{W}}
\title{Generators of supersymmetric classical $W$-algebras}
\author
[E. Ragoucy]{Eric Ragoucy}
\address[E. Ragoucy]{Laboratoire de Physique Th\'{e}orique LAPTh,
CNRS, Universit\'{e} Savoie Mont Blanc and U.G.A.,
BP 110, 74941 Annecy-le-Vieux Cedex, France}
\email{eric.ragoucy@lapth.cnrs.fr}
\author
[A. Song]{Arim Song}
\address[A. Song]{Department of Mathematical Sciences, Seoul National University, GwanAkRo 1, Gwanak-Gu, Seoul 08826}
\email{ireansong@snu.ac.kr}
\author[U.R. Suh]{Uhi Rinn Suh}
\address[U.R. Suh]
{ Department of Mathematical Sciences and Research institute of Mathematics, Seoul National University, GwanAkRo 1, Gwanak-Gu, Seoul 08826,
Korea}
\email{uhrisu1@snu.ac.kr}
\date{\today}
\begin{document}

\begin{abstract}

Let $\mathfrak{g}$ be a Lie superalgebra of type $\mathfrak{sl}$ or $\mathfrak{osp}$ with an odd principal nilpotent element $f$. We consider a matrix $\mathcal{A}_{\mathfrak{g},f}$ determined by $\mathfrak{g}$ and $f$ and find a generating set of the supersymmetric classical $W$-algebra $\mathcal{W}(\bar{\mathfrak{g}},f)$ using the row determinant of $\mathcal{A}_{\mathfrak{g},f}$. 

\end{abstract}

\maketitle


\section{Introduction}

The notion of $W$-algebra appeared first in physics, in the works of Zamolodchikov \cite{Zam} and of Lukyanov and Fateev \cite{FatLuky}. In this context, it appeared as an extension of the Virasoro algebra, which is a symmetry of two-dimensionnal conformal field theories. It was then realized that the classical version of $W$-algebras was the new Poisson algebra introduced in \cite{DrSo} by Drinfeld and Sokolov in the study of the center of an affine Lie algebra. Generalizations of classical $W$-algebras  were introduced in relation to 
Hamiltonian reduction of Wess-Zumino-Witten (WZW) to Toda models \cite{BFOFW,BTvD91}. 
They were noted $W(\fg,\fh)$ where $\fg$ is the finite dimensional Lie algebra of the underlying affine Lie algebra. 
 $\fh$ is a subalgebra of $\fg$, defining a nilpotent element in $\fg$. The nilpotent element is defined as the principal nilpotent of $\fh$, so that the $W$-algebras defined by Drinfeld and Sokolov were just the classical version of $W(\fg,\fg)\equiv W(\fg)$.

The case where $\fg$ is a superalgebra was studied  first in \cite{EvHo,KoMoNo}, and led to $W$-superalgebras. Then, in \cite{DeRaSo}, the superfield formalism was introduced to deal with the supersymmetric (SUSY) version of $W$-algebras. A complete classification of classical $W$-algebras and of SUSY classical $W$-algebras  can be found in \cite{FrRaSo}. For classical $W$-algebras and superalgebras, it relies on the classification of $\mathfrak{su}(2)$ embeddings  in $\fg$ \cite{Dynkin}. For SUSY $W$-algebras, it uses the classification of $\mathfrak{osp}(1|2)$ embeddings \cite{LeiSaSe,FrRaSo}  in $\fg$.
Reviews on the subject can be found in \cite{FORTW,BoSch}.

The quantum version of these algebras is done in the framework of BRST cohomology, which deals, at the quantum level, of the Hamiltonian reduction introduced in the classical case. They appeared first in \cite{FeFr}, and the cohomology was resolved in 
 \cite{deBTj} for quantum $W$-algebras. The supersymmetric version was done in \cite{MR94}.

BRST cohomologies can be understood in the theory of vertex algebras 
and  vertex algebra structures of quantum $W$-algebras are written in  \cite{KRW, KW1, DK06}. On the other hand, classical $W$-algebras can be unerstood as quasi-classical limit of quantum $W$-algebras so that they are Poisson vertex algebras \cite{deSoKaVa, DKV06, Suh16}. In these articles, the authors often denote by $W(\g, F,k)$ and $\mathcal{W}(\g,F,k)$ the quantum and classical $W$-algebras $W(\g, \fh)$  and $\mathcal{W}(\g, \fh)$ where  $F$ is the \textit{even} principal nilpotent element of $\fh$ and  $k$ is the central element of the affine Lie superalgebra $\hat{\g}$. For the principal nilpotent element $F$ of $\g$, an explicit construction of  generators of $W(\g,F,k)$ was built in \cite{AraMo} when $\fg=\mathfrak{sl}(N)$. Analogous results on $\mathcal{W}(\g, F,k)$ were written in \cite{MR} when $\fg=\mathfrak{sl}(N), \mathfrak{so}(N),
\mathfrak{sp}(2n)$ and $G_2$.
See also \cite{deSole14} for a review on the subject.

The SUSY quantum $W$-algebras \cite{MRS20} and the SUSY classical $W$-al\-ge\-bras \cite{Suh} are defined as SUSY vertex algebras and SUSY Poisson vertex algebras \cite{Kac, HK06}. Since  a SUSY  $W$-algebra is governed by a Lie superalgebra $\g$ and an $\mathfrak{osp}(1|2)$ embedding which is determined by an \textit{odd} nilpotent element $f$, quantum and classical SUSY $W$-algebras  can be denoted 
 by $W(\bar{\g}, f,k)$ and $\mathcal{W}(\bar{\g}, f,k)$. In addition, in \cite{MRS20},
 explicit forms of generators of the SUSY quantum $W$-algebra $W(\overline{\mathfrak{sl}}(n|n\pm1),f,k)$ has been constructed when $f$ is odd principal nilpotent.  

In this paper, we consider a  finite simple Lie superalgebra of type $\mathfrak{sl}$ or $\mathfrak{osp}$ which possesses an odd principal nilpotent element $f$. Such a Lie superalgebra $\g$ is of type $\mathfrak{sl}(n|n\pm1)$ or of type $\mathfrak{osp}(M|2n)$ with $M=2n\pm1,\,2n+2,
\, 2n$. We provide an explicit constructions of generators of SUSY classical $W$-algebras $\mathcal{W}(\bar{\g}, f)=\mathcal{W}(\bar{\g}, f,k)|_{k=1} $ via row-determinants of some matrices. In the case of $\g=\mathfrak{osp}(M|2n)$ with $M=2n\pm 1$, the construction relies on a folding procedure, used in the context in \cite{folding} to determine the structure of $W$-algebras based on $\fg=\mathfrak{so}(N), \mathfrak{sp}(2n)$ or $\mathfrak{osp}(M|2n)$ Lie (super)algebras. We rephrase it in the framework of SUSY Poisson vertex algebras and find generators of $\mathcal{W}(\bar{\g}, f)$.

The paper is organized as follows. Section \ref{sec:prel} reminds basic results on  Lie superalgebras and SUSY Poisson vertex algebras, that are needed in the following sections. Then, in Section \ref{sec:SUSYW},
we recall the general structure theory of SUSY classical $W$-algebras. Our results on the  generators for  SUSY classical  $W$-algebras are presented in Section \ref{sec:glmn}, \ref{sec:ospodd} and  \ref{sec:ospeven}. 
In each section, we deal with 
$\calW(\overline{\mathfrak{gl}}(n|n\pm1),f)$, $\calW(\overline{\mathfrak{osp}}(2n\pm1|2n),f)$  and  $\calW(\overline{\mathfrak{osp}}(M|2n),f)$ with $M=2n,2n+2$.

\section{Preliminaries\label{sec:prel}}
In this section, we review definitions and properties of Lie superalgebras and supersymmetric  Poisson vertex algebras. For more details, we refer to  
Chapter 1 of \cite{Cheng12} and Section 4 of \cite{HK06}.

 Throughout this paper, we assume that the base field is $\mathbb{C}$.

\subsection{Lie superalgebras} \ 

A vector superspace $V=V_{\bar{0}} \oplus V_{\bar{1}}$ is a vector space with  a $\mathbb{Z}/2\mathbb{Z}$-grading. 
An element $a \in V_{i}$ is called even (resp. odd) if $i = \bar0$ (resp. $i=\bar1$). The parity $p(a)$ of an even (resp. odd) element $a$ is defined by $p(a)=0$ (resp. $p(a)=1$).
An algebra $A=A_{\bar{0}} \oplus A_{\bar{1}}$ is a superalgebra if  it is a vector superspace satisfying 
$A_{i} A_{j} = A_{i+j}$ for $i,j \in \{\bar0,\bar1\}$.

\begin{definition}
Let $\g$ be a vector superspace endowed with a $\mathbb{Z}/2\mathbb{Z}$-grade preserving bilinear bracket 
\[\  [ \ , \ ] : \g \times \g \to \g.\]
If the bracket satisfies 
\begin{itemize}
\item (skewsymmetry) $[a,b]=- (-1)^{p(a)p(b)}[b,a]$,
\item (Jacobi identity) $[a,[b,c]]= [[a,b],c]+ (-1)^{p(a)p(b)}[b,[a,c]]$,
\end{itemize}
for any $a,b,c\in\g$, then $\g$ is called a {\it Lie superalgebra.}
\end{definition}

For a vector superspace $V$, the algebra $\text{End}(V)$ of endomorphisms on $V$ is a superalgebra such that 
$\text{End}(V)_{i} = \{f \in \text{End}(V)| f(V_{j})\subset V_{i+j}\}$ for $i,j=\bar0,\bar1$.
One can define a Lie superalgebra structure on $\text{End}(V)$ by considering the super commutator 
\begin{equation}\label{commutator}
 [f,g]=fg -(-1)^{p(f)p(g)} gf \quad  \text{ for } f,g \in \text{End}(V).
 \end{equation}
Such Lie superalgebra is called the {\it general linear Lie superalgebra} of $V$ and denoted by $\mathfrak{gl}(V)$. 
When the dimensions of $V_{\bar{0}}$ and $V_{\bar{1}}$ are $m$ and $n$, respectively,  $\mathfrak{gl}(V)$ is also denoted by $\mathfrak{gl}(m|n)$.

Let us describe $\mathfrak{gl}(m|n)$ more explicitly. Let $I=\{1,2,\cdots, m+n\}$  and let  $V$ be a vector superspace 
with a basis $\{v_i|i\in I\}$ which consists of $m$ even elements and $n$ odd elements. Then there are subsets $I_{\bar{0}}$ and $I_{\bar{1}}$ of $I$ such that  $\{v_i|i\in I_{\bar{0}}\}$ and $\{v_i|i\in I_{\bar{1}}\}$ are bases of $V_{\bar{0}}$ and $V_{\bar{1}}$. 
Then 
\begin{equation}
I= I_{\bar{0}} \sqcup I_{\bar{1}}.
\end{equation}
Note that $\mathfrak{gl}(m|n)$ is spanned by 
$ \bold{e}_{ij}$ such that $\bold{e}_{ij} v_k= \delta_{jk} v_i$ for $i,j,k \in I$.
For convenience, denote
\begin{equation} \label{parity_index}
p(i):= p(v_i).
\end{equation}
Then the parity $p(\bold{e}_{ij})\in\{0,1\}$ is defined by  
$(-1)^{p(\bold{e}_{ij})}= (-1)^{p(i)+p(j)}.$
The bracket \eqref{commutator} and the even supersymmetric invariant bilinear form $(\ |\ )$ on $\mathfrak{gl}(m|n)$ are written as follows:
\begin{equation} \label{Eqn:gl(m|n)}
   [ \bold{e}_{ij} , \bold{e}_{kl}]= \delta_{j,k} \bold{e}_{il} -(-1)^{(p(i)+p(j))(p(k)+p(l))} \delta_{i,l} \bold{e}_{kj}, \quad (\bold{e}_{ij}|\bold{e}_{kl})= \delta_{jk} \delta_{il}(-1)^{p(i)}.
\end{equation}
In addition, the {\it supertrace} of $a= \sum_{i\in I} a_{ij} \bold{e}_{ij}\in \mathfrak{gl}(m|n)$ for $a_{ij} \in \mathbb{C}$ is 
\begin{equation}
str(a)= \sum_{i\in I} (-1)^{p(i)} a_{ii}\in \mathbb{C},
\end{equation}
and the {\it supertranspose} of $a$ is 
\begin{equation}\label{supertrace}
 a^{st} = \sum_{i,j\in I} (-1)^{(p(i)+1)p(j)} a_{ij} \bold{e}_{ji}\in\mathfrak{gl}(m|n). 
\end{equation}

If we assume $p(i)=0$ for $i=1,2, \cdots, m$ and $p(i)=1$ for $i=m+1, m+2, \cdots, m+n$, then 
\begin{equation}  \label{gl(m|n)_standard}
\mathfrak{gl}(m|n)=  
\left\{ \left.
\begin{pmatrix}
  \begin{matrix}
A
  \end{matrix}
  & \rvline & B \\
\hline
C & \rvline &
  \begin{matrix}
D
  \end{matrix}
\end{pmatrix}\right|\left.\begin{array}{cc} A: m \times m,  &  B:m\times n, \\  C: n\times m, &   D: n\times n \end{array} \right.
 \text{ matrices }
\right\},
\end{equation}
where $A$ and $D$ correspond to the even part and $B$ and $C$ to the odd part.

\begin{example} \label{Ex: sl and osp} 
The following examples are simple Lie super-subalgebras of general linear Lie superalgebras.
\begin{enumerate}
\item Let $n\neq m$ and $m,n\geq 1$. The special linear Lie superalgebra of type $A(m|n)$ is 
\begin{equation} \label{sl}
\mathfrak{sl}(m+1|n+1)= \{ X\in \mathfrak{gl}(m+1|n+1)| str(X)=0 \}.
\end{equation}
\item Consider $I=\{1,2, \cdots, 2m+2n+1\}$ for $m\geq 0$ and $n\geq 1$ and assume that 
$p(i)=0$ for $i=1,2, \cdots, 2m+1$ and $p(i)=1$ for $i=2m+2,2m+3, \cdots, 2m+2n+1$. 
Then the ortho-symplectic Lie superalgebra $\mathfrak{osp}(2m+1|2n)\subset \mathfrak{gl}(2m+1|2n)$ is defined by 
\begin{equation}\label{osp(2m+1|2n)}
 \mathfrak{osp}(2m+1|2n)=\{ \, a \in \mathfrak{gl}(2m+1|2n) \,  | \, a^{st} \mathcal{J}^{stan}_{2m+1|2n} + \mathcal{J}^{stan}_{2m+1|2n} a=0 \, \}, 
 \end{equation}
 where 
\begin{equation}
\mathcal{J}^{stan}_{2m+1|2n} = 
\begin{pmatrix}
  \begin{matrix}
0 & I_m & 0 \\
I_m & 0 & 0 \\
0 & 0 & 1
  \end{matrix}
  & \rvline & 0  \\
\hline
0 & \rvline &
  \begin{matrix}
0 & I_n \\
-I_n & 0
  \end{matrix}
\end{pmatrix}\in \mathfrak{gl}(2m+1|2n).
\end{equation}
 Here, $I_k$ stands for the identity matrix of size $k\times k$. The Lie superalgebra \eqref{osp(2m+1|2n)} is called type $B(m,n)$.

\item  Consider $I=\{1,2,\cdots, 2m+2n\}$ for $m\geq 1$ and $n\geq 1$ and assume that 
$p(i)=0$ for $i=1,2, \cdots, 2m$ and $p(i)=1$ for $i=2m+1,2m+2, \cdots, 2m+2n$.  
Then the ortho-symplectic Lie superalgebra $\mathfrak{osp}(2m|2n)\subset \mathfrak{gl}(2m|2n)$ is defined by 
\begin{equation}\label{osp(2m|2n)}
 \mathfrak{osp}(2m|2n)=\{ \, a \in \mathfrak{gl}(2m|2n) \,  | \, a^{st} \mathcal{J}^{stan}_{2m|2n} + \mathcal{J}^{stan}_{2m|2n} a=0 \, \},\end{equation}
where
 \begin{equation}
\mathcal{J}^{stan}_{2m|2n} = 
\begin{pmatrix}
  \begin{matrix}
0 & I_m \\
-I_m & 0 \\
  \end{matrix}
  & \rvline & 0  \\
\hline
0 & \rvline &
  \begin{matrix}
0 & I_n \\
-I_n & 0
  \end{matrix}
\end{pmatrix}.
\end{equation}
 If $m=1$, it is called  type $C(n+1)$ and if $m\geq 2$, it is called type $D(m,n)$.
\end{enumerate}
\end{example}

In the Lie superalgebras in Example \ref{Ex: sl and osp}, the bilinear form 
\begin{equation}
(X|Y):= str (XY)
\end{equation}
is nondegenerate, even, supersymmetric and invariant.

\subsection{Lie superalgebras with odd principal nilpotent elements}

\begin{proposition} \cite{LeiSaSe,FrRaSo}\label{Prop:OSRS} 
Let $\g$ be a finite simple Lie superalgebra isomorphic to $\mathfrak{sl}(m|n)$ or $\mathfrak{osp}(m|2n)$ for $m,n \geq 1$. If $\g$ has an odd principal nilpotent element then $\g$ is isomorphic to one of the following algebras: 
\[ \mathfrak{sl}(n\pm 1|n), \quad \mathfrak{osp}(M| 2n) \text{ with } M=2n\pm1,\ 2n+2, \ 2n, \quad  
 D(2,1;\alpha) \text{ with } \alpha \in \mathbb{C}\setminus \{0,\pm 1\}.\]
\end{proposition}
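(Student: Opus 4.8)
The plan is to convert the Lie-theoretic hypothesis ``$\g$ admits an odd principal nilpotent'' into a purely combinatorial condition on a system of simple roots, and then to run the classification over the basic simple Lie superalgebras type by type (the appearance of $D(2,1;\al)$ in the conclusion signals that the enumeration must range over all basic types, not only the manifest $\mathfrak{sl}$ and $\mathfrak{osp}$ families). The bridge is the theory of principal $\mathfrak{osp}(1|2)$-embeddings: a principal nilpotent element is, up to conjugacy by the group of inner automorphisms of $\g_{\bar 0}$, the sum $f=\sum_{\al\in\Pi}f_\al$ of root vectors $f_\al$ (for $-\al$) over the simple roots $\Pi$ attached to \emph{some} Borel subalgebra, and it occupies degree $-1$ of the principal $\Z$-grading, in which every simple root has degree one. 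First I would record the decisive elementary observation: such an $f$ is a homogeneous odd element if and only if every simple root in $\Pi$ is odd. Conversely, if $\Pi$ is all-odd then $f=\sum_{\al\in\Pi}f_\al$ is odd and $\tfrac12[f,f]$ is a principal even nilpotent of $\g_{\bar 0}$, so $f$ is an odd principal nilpotent. Thus the task becomes: classify the basic simple Lie superalgebras possessing a system of simple roots all of which are odd.

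Next I would exploit the structure of the set of Borel subalgebras. In contrast to the even case, a basic Lie superalgebra carries several inequivalent simple systems related by odd reflections, so the real question is whether the odd-reflection orbit of Borels contains one whose Dynkin diagram has only odd nodes. I would describe each candidate $\g$ through the weights of its defining representation, namely the even weights $\pm\epsilon_i$ (plus a zero weight when the orthogonal dimension is odd) and the odd weights $\pm\delta_j$, and observe that a simple system is all-odd exactly when the chosen linear ordering of the coordinate weights strictly alternates between the $\epsilon$- and $\delta$-families, so that each simple root $\mu-\mu'$ is a difference of weights of opposite parity.

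The enumeration then proceeds by type. For $\mathfrak{sl}(m|n)$ the alternating condition forces the numbers of even and odd coordinates to differ by at most one, i.e. $|m-n|\le 1$; since $\mathfrak{sl}(n|n)$ fails to be simple, only $\mathfrak{sl}(n\pm1|n)$ survives. For the ortho-symplectic algebras, interleaving the $n$ symplectic weights with the orthogonal weights—accounting for the terminal short odd root when the orthogonal dimension is odd and for the fork when it is even—yields precisely $\mathfrak{osp}(2n\pm1|2n)$, $\mathfrak{osp}(2n|2n)$ and $\mathfrak{osp}(2n+2|2n)$; in particular $\mathfrak{osp}(2|2n)$ and $\mathfrak{osp}(1|2n)$ with $n\ge 2$ are excluded, the latter because it has no isotropic odd root and hence a unique diagram carrying a single odd node. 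Finally, the exceptional family $D(2,1;\al)$ has a symmetric presentation with three odd (grey) simple roots for every $\al$; it is simple for $\al\in\C\setminus\{0,-1\}$, and the value $\al=1$ (with its $S_3$-conjugates) reproduces $\mathfrak{osp}(4|2)$, so the genuinely new members are those with $\al\in\C\setminus\{0,\pm1\}$, while $F(4)$ and $G(3)$ are ruled out by checking that none of their finitely many inequivalent simple systems is all-odd.

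I expect the main obstacle to be two-fold. The conceptual step is making the reduction of the first paragraph airtight: one must show that \emph{principality} of the odd nilpotent is faithfully detected by the all-odd Borel condition—that any odd nilpotent lying in degree $-1$ of a principal grading is conjugate to $\sum_{\al\in\Pi}f_\al$ for an all-odd $\Pi$, with no non-homogeneous mixing—so that the classification of such $f$ coincides with the classification of all-odd simple systems. The computational step is the exceptional analysis: verifying by odd reflections that $F(4)$ and $G(3)$ admit no all-odd system, and pinning down the degenerate and coincidental values of $\al$ for $D(2,1;\al)$, are finite but delicate and carry the bulk of the case-checking.
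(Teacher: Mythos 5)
The paper contains no proof of Proposition \ref{Prop:OSRS}; it is imported verbatim from \cite{LeiSaSe,FrRaSo}, so there is nothing in the paper itself to compare your argument against. Your sketch is precisely the strategy of those references: translate ``$\g$ admits an odd principal nilpotent'' into ``$\g$ admits a completely odd (all-grey) system of simple roots'' and then enumerate. The reduction you flag as the conceptual obstacle can be closed more easily than you suggest: in the principal grading one has $\g_0=\fh$, every root occurring in $\g_{1/2}$ is automatically simple (it cannot be a sum of two positive roots, each of degree at least $1/2$), and every root space of a basic Lie superalgebra is parity-homogeneous; hence an odd $f\in\g_{-1/2}$ with $\textup{ad}\,f$ injective on $\n$ must have a nonzero component along each simple root vector, which forces every simple root to be odd. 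The converse --- that an all-odd simple system integrates to a principal $\mathfrak{osp}(1|2)$, not merely an odd element with $[f,f]$ nilpotent --- is the genuine content of \cite{LeiSaSe} and does require the existence argument you allude to. Your combinatorial enumeration via alternating $\epsilon/\delta$ orderings is correct, including the exclusions of $C(n+1)$ (the terminal simple root $2\delta$ is always even) and of $\mathfrak{osp}(1|2n)$ for $n\geq 2$. Two cosmetic points: as literally stated the hypothesis already confines $\g$ to types $\mathfrak{sl}$ and $\mathfrak{osp}$, so the $D(2,1;\alpha)$, $F(4)$, $G(3)$ analysis lies outside the proposition's stated scope (the listed $D(2,1;\alpha)$ case is vacuous under that hypothesis except on the $S_3$-orbit of $\alpha=1$), and ``$\tfrac12[f,f]$ is a principal nilpotent of $\g_{\bar 0}$'' is better phrased as ``the associated $\mathfrak{sl}_2$-triple is principal in $\g$,'' i.e.\ its grading satisfies $\g_0=\fh$. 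Neither affects the validity of your plan.
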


For the rest of the paper, we deal with Lie superalgebras of type $\mathfrak{gl}$,  $\mathfrak{sl}$ and $\mathfrak{osp}$  possessing odd principal nilpotent elements.

Now, let us describe matrix presentations of Lie superalgebras in Proposition~\ref{Prop:OSRS}.
They possess a odd principal nilpotent element that is expressed as a triangular matrix.

\begin{example}\label{gl(n+1)}
Let $I=\{1,2, \cdots, 2n+1\}=I_{\bar{0}}\sqcup I_{\bar{1}}$ where \[ I_{0}= \{1,3,\cdots, 2n+1\} \text{ and } I_{\bar{1}}= \{2,4,\cdots, 2n\}.\] 
The Lie superalgebras $\mathfrak{gl}(n+1|n)$ and $\mathfrak{sl}(n+1|n)$ are defined by  \eqref{Eqn:gl(m|n)}
and
\eqref{sl}.Then 
\begin{equation}
 f= \sum_{i=1}^{2n} \bold{e}_{i+1\, i}
 \end{equation}
is an odd principal nilpotent element.
\end{example}


\begin{example} \label{gl(n-1)}
Let $I=\{1,2, \cdots, 2n-1\}=I_{\bar{0}}\sqcup I_{\bar{1}}$ where \[ I_{0}= \{2,4,\cdots, 2n-2\} \text{ and } I_{\bar{1}}= \{1,3,\cdots, 2n-1\}.\] 
The Lie superalgebras $\mathfrak{gl}(n-1|n)$ and $\mathfrak{sl}(n-1|n)$ are defined by  \eqref{Eqn:gl(m|n)} and \eqref{sl}. Then 
\begin{equation}
 f= \sum_{i=1}^{2n-2} \bold{e}_{i+1\, i}
 \end{equation}
is an odd principal nilpotent element.
\end{example}

\begin{example} \label{osp_4n pm1}
 Let $I= \{1,2, \cdots, M+2n\}$ for $M=2n-1 \text{ or }2n+1$. If $M=2n+1$ (resp. $M=2n-1$), we define $I_{\bar{0}}$ and $I_{\bar{1}}$ as in Example \ref{gl(n+1)} (resp. Example \ref{gl(n-1)}). 
 We introduce the map
\begin{equation}
\delta:I \to \{ 0,1\} , \quad k\mapsto \delta_k
\end{equation}
such that 
\begin{itemize}
\item if $M=2n+1$, \quad $\delta_k= \left\{\begin{array}{ll} 1 & \text{ if } k\in \{2n+2, 2n+4, \cdots, 4n\}, \\ 0 & \text{ otherwise;} \end{array} \right.$
\item if $M=2n-1$, \quad $\delta_k= \left\{\begin{array}{ll} 1 & \text{ if } k\in  \{2n+1, 2n+3, \cdots, 4n-1\}, \\ 0 & \text{ otherwise}. \end{array} \right.$
\end{itemize}
Then  $ \mathcal{J}_{M|2n}$ in Example \ref{Ex: sl and osp}  can be written as
\begin{equation} 
 \mathcal{J}_{M|2n}= \sum_{k\in I} (-1)^{\delta_k} \bold{e}_{k\, k'} \text{ for } k'=|I|+1-k.
\end{equation}
Consider the automorphism of $\mathfrak{gl}(M|2n)$ defined by
\begin{equation} \label{Theta}
\theta: \bold{e}_{ij} \mapsto  \tau(i,j) \bold{e}_{j'\, i'}:=(-1)^{p(i)p(j)+p(j)+1+\delta_i + \delta_j } \bold{e}_{j'\, i'}.
\end{equation}
Then, the Lie superalgebra $\mathfrak{osp}(M|2n)$ is spanned by 
\begin{equation} \label{F}
F_{ij}=\bold{e}_{ij} +\theta(\bold{e}_{ij}), \quad i,j\in I.
\end{equation}
For $i,j\in I$, $F_{j' \, i'}= \tau(i,j) F_{ij}$
and $ \{F_{ij}\}_{(i,j)\in \mathcal{B}}$ is a basis of $\mathfrak{osp}(M|2n)$, where
\begin{equation}\label{basis of osp}
\mathcal{B}= \{ (i,j) |\  i+j \leq |I|+1 \ \} \setminus \{ (i,i') |\ p(i)=0 \}\subset I \times I.
\end{equation}
Finally, 
\begin{equation}
f= \sum_{i=1}^{\frac{|I|-1}{2} } F_{i+1\, i}
\end{equation}
is an odd  principal nilpotent element.
\end{example}

\begin{example} \label{osp_4n +0,2}
 Let $I= \{1,2, \cdots, M+2n\}$ for $M=2n \text{ or }2n+2$.  Then $I_{\bar{0}}$ and $I_{\bar{1}}= I\setminus I_{\bar{0}}$ are defined by 
\begin{itemize}
\item if $M=2n, \qquad  I_{\bar{0}}=\{2k\,|\,1\leq k \leq n\} \cup \{ 2n+2k-1\,|\, 1 \leq k \leq n\},   $\\
\item  if $M=2n+2, \quad  I_{\bar{0}}=\{2k-1\,|\,1\leq k \leq n+1\} \cup \{ 2n+2k\,|\, 1 \leq k \leq n+1\}.$
\end{itemize}
We define the map
\begin{equation}
\delta:I \to \{ 0,1\} , \quad k\mapsto \delta_k
\end{equation}
such that 
\begin{itemize}
\item if $M=2n$, \qquad $\delta_k= \left\{\begin{array}{ll} 1 & \text{ if } k\in \{1,3, \cdots, 2n-1\}, \\ 0 & \text{ otherwise;} \end{array} \right.$
\item if $M=2n+2$, \quad $\delta_k= \left\{\begin{array}{ll} 1 & \text{ if } k\in   \{2,4, \cdots, 2n\}, \\ 0 & \text{ otherwise.} \end{array} \right.$
\end{itemize}
As in Example \ref{osp_4n pm1},  we get $\mathfrak{osp}(M|2n)$ via $
 \mathcal{J}_{M|2n}= \sum_{k\in I} (-1)^{\delta_k} \bold{e}_{k\, k'}$ for $k'=|I|+1-k.$
We can also define $F_{ij}$ and $\mathcal{B}$ through \eqref{F} and \eqref{basis of osp}. 
Then, $\{F_{ij}|(i,j)\in \mathcal{B}\}$ is a basis of $\mathfrak{osp}(M|2n)$.  Finally, the element 
\begin{equation}
f=  F_{\frac{|I|}{2}+1\, \frac{|I|}{2}-1}+\sum_{i=1}^{\frac{|I|}{2}-1} F_{i+1\, i}
\end{equation}
is an odd principal nilpotent element.
\end{example}

\subsection{Supersymmetric Poisson vertex algebras }  \

Let $\mathcal{R}$ be a vector superspace with an odd operator $D: \mathcal{R} \to \mathcal{R}$, i.e. $\mathcal{R}$ is a $\mathbb{C}[D]$-module. Consider a super non-commutative associative algebra $\mathbb{C}[\chi]$ generated by an odd indeterminate $\chi$. We define the $\mathbb{C}[D]$-module structure of 
$\mathbb{C}[\chi] \otimes \mathcal{R}$  by 
\begin{equation} \label{chi and D}
 D(\chi \otimes  R) = -\chi \otimes D (R) -2\chi^2 \otimes R
 \end{equation}
for $R\in \mathcal{R}$. Note that we usually write elements in $\mathbb{C}[\chi] \otimes \mathcal{R}$ without tensor product. In this way, \eqref{chi and D} can be simply written as follows:
\begin{equation} \label{chi and D, simple}
 D\chi+\chi D = -2\chi^2.
 \end{equation}

A bilinear bracket 
\[ \, [ \ {}_\chi {}\ ] : \mathcal{R} \times \mathcal{R} \to \mathbb{C}[\chi] \otimes \mathcal{R}\]
on the $\mathbb{C}[D]$-module $\mathcal{R}$ is called a $\chi$-bracket if it satisfies the {\it sesquilinearity}
property:
\begin{equation} \label{sesqui}
 \, [ D a\, {}_\chi \,  b\, ] = -\chi [\, a\, {}_\chi\,  b \, ], \quad [\, a \,  {}_\chi \, Db \,  ] = (-1)^{p(a)+1} (D+\chi) [\, a\, {}_\chi \,  b\, ] .
 \end{equation}

\begin{definition} \label{Def:SUSY LCA}
Let $\mathcal{R}$ be a  $\mathbb{C}[D]$-module with a $\chi$-bracket $[ \ {}_\chi \ ]$. If the bracket satisfies the following properties, then we call $\mathcal{R}$ a ($N_k=1$) {\it supersymmetric (SUSY) Lie conformal algebra (LCA)}: for any $a,b,c\in \mathcal{R}$, 
\begin{itemize}
\item (skewsymmetry) $[a{}_\chi b] = (-1)^{p(a)p(b)} [b{}_{-\chi-D} a]$, 
\item (Jacobi identity) \\
 $\quad{} [a{}_\chi [b{}_\gamma c]]= (-1)^{p(a)+1} [[a{}_\chi b]_{\chi+ \gamma}  c]+ (-1)^{(p(a)+1)(p(b)+1)}[b{}_\gamma [a_\chi c]] \in \mathbb{C}[\chi, \gamma] \otimes \mathcal{R}$, 
\end{itemize}
where $\mathbb{C}[\chi, \gamma]$ is a super noncommutative associative algebra generated by  odd indeterminates $\chi$ and $\gamma$ such that $\chi\gamma = -\gamma \chi$.  In addition, $ \mathbb{C}[\chi, \gamma] \otimes \mathcal{R}$ is a $\mathbb{C}[D]$-module  via $\gamma D+ D \gamma =-2 \gamma^2$ and \eqref{chi and D, simple}.
\end{definition}

Let $\mathcal{R}$ be a SUSY LCA and  denote 
\begin{equation} 
\  \   [ \, a\, {}_\chi \, b\, ] = \sum_{n \in \mathbb{Z}_{\geq 0}} \chi^n a_{(n)}b 
\end{equation}
for $a,b\in \mathcal{R}$. Then the skewsymmetry in Definition \ref{Def:SUSY LCA} can be written as 
\begin{equation} \label{skewsymmetry-2}
\  \  \sum_{n\in \mathbb{Z}_{\geq 0}} \chi^n a_{(n)}b = (-1)^{p(a)p(b)} \sum_{n\in \mathbb{Z}_{\geq 0}} (-\chi-D)^n b_{(n)}a,
\end{equation}
where the RHS  is computed by \eqref{chi and D, simple}. 
The Jacobi identity in Definition \ref{Def:SUSY LCA} can be computed by 
\begin{equation} \label{Jacobi-2}
\begin{aligned}
&\  \   [ \, a\ {}_\chi \,  \gamma^n b\, ] = (-1)^{n (p(a)+1)} \gamma^n [\, a\, {}_\chi\,  b\, ],\\
&   \  \   [  \,  \chi^n a\,  {}_{\chi+\gamma} \, b \, ] = (-1)^{n } \chi^n [\, a\, {}_{\chi+\gamma} \,  b\, ] = (-1)^{n } \chi^n \sum_{m\in \mathbb{Z}_{\geq 0}} (\chi+\gamma)^m a_{(m)} b.
\end{aligned}
\end{equation} 

\pagebreak

\begin{definition}\  \label{Def:SUSY PVA}
\begin{enumerate}
\item Let $A$ be a superalgebra and $D:A\to A$ be a linear operator of parity $p(D)$. If $D$ is a derivation, that is, 
\[ D(ab)= D(a) b+ (-1)^{p(D)p(a)} a D(b)\]
for $a,b\in A$,  then $A$ is called a {\it differential algebra} associated with  $D$.

\item Let a $\mathbb{C}[D]$-module $\mathcal{V}$ be a SUSY LCA endowed with a $\chi$-bracket $\{  \ {}_\chi \ {}\}$. If $\mathcal{V}$ is a unital supercommutative associative algebra such that 
\begin{itemize}
\item $\mathcal{V}$ is a differential algebra with respect to the odd derivation $D$,
\item (Leibniz rule) $\{\, a\, {}_\chi \, bc\, \}  = (-1)^{(p(a)+1)p(b)} b\{\, a\, {}_\chi \, c\, \} + \{\, a \, {}_\chi \, b\, \}c$ for $a,b,c\in \mathcal{V}$.
\end{itemize}
Then, $\mathcal{V}$ is called a {\it SUSY Poisson vertex algebra (PVA)}. 
\end{enumerate}
\end{definition}

The Leibniz rule in Definition \ref{Def:SUSY PVA} and the skewsymmetry of SUSY PVA imply the right Leibniz rule 
\begin{equation}
\{\, ab\, {}_\chi \, c\, \} = (-1)^{p(c)p(b)}\{\, a\, {}_{\chi+D} \, c\, \}_{\to} b+(-1)^{p(a)(p(b)+p(c))}\{\, b\, {}_{\chi+D} \, c\, \}_{\to} a,
\end{equation}
where $\{\, a\, {}_{\chi+D} \, c\, \}_{\to} b= \sum_{n \in \mathbb{Z}_{\geq 0}} a_{(n)}c \,(\chi+D)^n b$. Hence, if a SUSY PVA $\mathcal{V}$ is generated by a set $\mathcal{B}$ as a differential algebra, then the $\chi$-bracket on $\mathcal{B}$ completely determines the $\chi$-bracket on $\mathcal{V}$ by the Leibniz rule and the sesquilinearity. Moreover, using the following theorem, one can derive a SUSY PVA from a given SUSY LCA.

\begin{theorem} \label{Thm:fund_PVA_1}\cite{CS21}
Let $\mathcal{R}$  be a SUSY LCA . Then the supersymmetric algebra 
\[S(\mathcal{R}):=S(\mathcal{R}_{\bar{0}})\otimes \bigwedge(\mathcal{R}_{\bar{1}})\] endowed with the $\chi$-bracket induced from the $\chi$-bracket of $\mathcal{R}$ and the Leibniz rule is a SUSY PVA. 
\end{theorem}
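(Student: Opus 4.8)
The plan is to promote the $\chi$-bracket from $\mathcal{R}$ to its supersymmetric algebra $S(\mathcal{R})$ by forcing the Leibniz rule, and then to verify that all the defining axioms of a SUSY PVA survive this extension. First I would observe that $S(\mathcal{R})$ inherits a $\mathbb{C}[D]$-module structure, since the odd operator $D$ on $\mathcal{R}$ extends uniquely to an odd \emph{derivation} of the supercommutative algebra $S(\mathcal{R})$; this is exactly the differential-algebra condition in Definition~\ref{Def:SUSY PVA}(1), and it is the natural extension because $D$ must act on products via the signed Leibniz rule $D(ab)=D(a)b+(-1)^{p(D)p(a)}aD(b)$ with $p(D)=1$. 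The bracket is then \emph{defined} inductively: on generators $a,b\in\mathcal{R}$ it is the given SUSY LCA bracket, and one extends in each argument by decreeing the left Leibniz rule $\{a\,{}_\chi\,bc\}=(-1)^{(p(a)+1)p(b)}b\{a\,{}_\chi\,c\}+\{a\,{}_\chi\,b\}c$ together with the induced right Leibniz rule recorded just before the theorem. The point of the theorem is that this prescription is well defined and still satisfies sesquilinearity, skewsymmetry, and the Jacobi identity.

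The key steps, in order, are as follows. First I would check well-definedness of the extended bracket: because $S(\mathcal{R})$ is free as a supercommutative algebra on a homogeneous basis of $\mathcal{R}$, any monomial has an essentially unique factorization, so iterating the Leibniz rule assigns an unambiguous value to $\{m\,{}_\chi\,n\}$ for monomials $m,n$, provided the signs are consistent under reordering of commuting (resp. anticommuting) factors. Second, I would verify sesquilinearity \eqref{sesqui} on all of $S(\mathcal{R})$; on generators it holds by hypothesis, and the Leibniz rule together with the fact that $D$ is a derivation propagates both identities to arbitrary monomials by induction on degree. Third comes skewsymmetry in the form \eqref{skewsymmetry-2}: this holds on $\mathcal{R}\times\mathcal{R}$ by assumption, and I would lift it to products using the Leibniz rule in both slots, keeping careful track of the parity signs $(-1)^{p(a)p(b)}$ and the substitution $\chi\mapsto-\chi-D$ governed by \eqref{chi and D, simple}. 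Finally, the Jacobi identity: I would prove it by induction on the total monomial degree, reducing the case of products $\{a\,{}_\chi\,[b\,{}_\gamma\,cd]\}$ etc. to the generator case via repeated application of the Leibniz rule and the already-established sesquilinearity and skewsymmetry.

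The main obstacle will be the Jacobi identity step, specifically the bookkeeping of the Koszul signs in $\mathbb{C}[\chi,\gamma]\otimes S(\mathcal{R})$. Because $\chi$ and $\gamma$ are odd and anticommute, and because the Leibniz rule carries parity-dependent signs $(-1)^{(p(a)+1)p(b)}$, expanding the three-term Jacobi expression on a product such as $cd$ produces many terms whose signs must be matched pairwise; the rules \eqref{Jacobi-2} for moving $\chi^n$ and $\gamma^n$ past the bracket are the technical tool here. The verification is conceptually routine but sign-sensitive, so the heart of the argument is a careful induction showing that the Leibniz extension is compatible with the graded-commutativity of $\mathbb{C}[\chi,\gamma]$ and of $S(\mathcal{R})$. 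I expect no genuine difficulty beyond this sign analysis, since the extension mechanism from a Lie conformal algebra to its symmetric-algebra Poisson vertex algebra is standard in the non-super setting and the SUSY case differs only in the parity conventions recorded in \eqref{chi and D, simple} and \eqref{Jacobi-2}.
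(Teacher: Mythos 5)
The paper offers no proof of this statement --- it is quoted directly from \cite{CS21} --- but your outline is the standard Leibniz-extension argument that the cited reference carries out (define the bracket on monomials by iterating the Leibniz rule, then verify sesquilinearity, skewsymmetry and Jacobi by induction on monomial degree), and it is correct in approach. The one point that genuinely needs explicit verification, namely that the inductive definition is consistent with supercommutative reordering of factors in $S(\mathcal{R}_{\bar{0}})\otimes\bigwedge(\mathcal{R}_{\bar{1}})$, is one you correctly flag.
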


In addition, the following theorem is useful to check axioms of SUSY LCAs. 

\begin{theorem}\cite{CS21}\label{Thm:fund_PVA_2}
 Let $V$ be a vector superspace with a basis $\mathcal{B}=\{v_1, \cdots, v_m\}$ and let $\mathcal{R}= \mathbb{C}[D]\otimes V$ be a $\mathbb{C}[D]$-module endowed with a $\chi$-bracket $[\ {}_\chi \ ]$. If the $\chi$-bracket $[\ {}_\chi \ ] |_{\mathcal{B}\times \mathcal{B}}$ restricted to $\mathcal{B}$ satisfies the  skewsymmetry and Jacobi identity then $\mathcal{R}$ is a SUSY LCA with the bracket $[\ {}_\chi \ ]$. 
\end{theorem}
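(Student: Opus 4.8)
The plan is to show that, once skewsymmetry and the Jacobi identity are verified on $\mathcal{B}\times\mathcal{B}$, the sesquilinearity relations \eqref{sesqui} force both axioms to hold on all of $\mathcal{R}$. The starting observation is that $\mathcal{R}=\mathbb{C}[D]\otimes V$ has $\{D^n v_i : n\geq 0,\ 1\leq i\leq m\}$ as a $\mathbb{C}$-basis, and the $\chi$-bracket is $\mathbb{C}$-bilinear; hence it suffices to check skewsymmetry and Jacobi on pairs (resp. triples) of such monomials $D^n v_i$. The whole argument is then an induction on the total $D$-degree, the base case being precisely the hypothesis on $\mathcal{B}$.

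For the skewsymmetry I would let $\mathcal{S}\subseteq\mathcal{R}\times\mathcal{R}$ be the set of pairs $(a,b)$ satisfying the skewsymmetry of Definition \ref{Def:SUSY LCA} (equivalently \eqref{skewsymmetry-2}), and establish two closure properties. First, $\mathcal{S}$ is stable under $\mathbb{C}$-linear combinations in each slot, which is immediate from bilinearity. Second, $(a,b)\in\mathcal{S}$ implies $(Da,b)\in\mathcal{S}$ and $(a,Db)\in\mathcal{S}$. This second property is where \eqref{sesqui} enters: applying $D$ to the left slot replaces $[a{}_\chi b]$ by $-\chi[a{}_\chi b]$, while on the right-hand side $[b{}_{-\chi-D}Da]$ is rewritten through the second relation of \eqref{sesqui} followed by the substitution $\mu\mapsto -\chi-D$; a direct computation, commuting $D$ past $\chi$ by \eqref{chi and D, simple} and tracking the parity factor $(-1)^{(p(a)+1)p(b)}$, shows the two resulting expressions agree. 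The right slot is handled symmetrically. Since $\mathcal{B}\times\mathcal{B}\subseteq\mathcal{S}$ by hypothesis and these operations generate every monomial pair from $\mathcal{B}$, we conclude $\mathcal{S}=\mathcal{R}\times\mathcal{R}$.

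The Jacobi identity follows the same template, now with a triple $(a,b,c)$ and the two odd variables $\chi,\gamma$. I would show that validity of the Jacobi identity of Definition \ref{Def:SUSY LCA} propagates from $(a,b,c)$ to each of $(Da,b,c)$, $(a,Db,c)$ and $(a,b,Dc)$. Here the relations \eqref{Jacobi-2}, which govern how $\chi^n$ and $\gamma^n$ factor out of the brackets, combined with \eqref{sesqui}, allow one to rewrite each of the three terms in the identity, after the replacement $a\mapsto Da$ (resp. $b\mapsto Db$, $c\mapsto Dc$), as the corresponding term for $(a,b,c)$ multiplied by a common prefactor. Because $\mathcal{B}$ generates $\mathcal{R}$ as a $\mathbb{C}[D]$-module in each of the three slots, iterating these propagations reduces any monomial triple to the base case, which holds by assumption.

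I expect the main obstacle to be in the Jacobi step, specifically the sign-consistency of the three propagations. Unlike the skewsymmetry computation, the Jacobi identity mixes the shifted variable $\chi+\gamma$ with the non-commuting operator $D$ (recall $D\chi+\chi D=-2\chi^2$ and, for the middle term, $D\gamma+\gamma D=-2\gamma^2$), so extracting $D$ from the term $[[a{}_\chi b]_{\chi+\gamma} c]$ requires carefully commuting $D$ through both $\chi$ and $\gamma$ while keeping the parities $p(a),p(b),p(c)$ straight. The essential point to pin down is that all three bracket terms acquire \emph{the same} scalar prefactor under $a\mapsto Da$ (and likewise in the other slots), since this is exactly what makes the identity descend; once this compatibility is verified the induction closes with no further input.
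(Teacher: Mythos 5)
The paper itself offers no argument for this statement: it is quoted from \cite{CS21}, so there is no in-paper proof to measure your attempt against. Judged on its own, your strategy is the standard one and it is sound: since $\{D^n v_i\}$ is a $\mathbb{C}$-basis of $\mathcal{R}$ and the bracket is determined on all of $\mathcal{R}$ by its values on $\mathcal{B}$ together with \eqref{sesqui}, both axioms are closed conditions under the two operations ``apply $D$ in a slot'' and ``take linear combinations,'' and an induction on total $D$-degree reduces everything to the hypothesis on $\mathcal{B}\times\mathcal{B}$. You have also correctly located where all the work lives, namely the sign bookkeeping. One point I would insist you make explicit, because it is the only place the argument can silently go wrong: the expression $[\,b\,{}_{-\chi-D}\,a\,]$ in \eqref{skewsymmetry-2} is defined by \emph{first} expanding $[\,b\,{}_{\mu}\,a\,]$ as a polynomial $\sum_n \mu^n b_{(n)}a$ and \emph{then} substituting $\mu\mapsto -\chi-D$ with $(-\chi-D)^n$ acting on the coefficient; it is not obtained by naively replacing $\mu$ inside the operator $(D+\mu)$ coming from \eqref{sesqui}. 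Since $D$ and $\chi$ do not commute, the two procedures differ, and the closure computation for the left slot of skewsymmetry genuinely requires the operator identities $D\chi^{2k}=\chi^{2k}D$, $D\chi^{2k+1}=-\chi^{2k+1}D-2\chi^{2k+2}$ and $(-\chi-D)\chi=-\chi(-\chi-D)$, together with a reindexing of the sum. The same caveat applies to the term $[[a\,{}_\chi\, b]_{\chi+\gamma}\,c]$ in the Jacobi propagation, where \eqref{Jacobi-2} is exactly the rule that legitimizes pulling powers of $\chi$ and $\gamma$ out of the brackets. With those identities written down, your induction closes, and this is in substance the proof given in \cite{CS21}.
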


\begin{example} \label{Ex:affine}
Let $\g$ be a finite simple Lie superalgebra with a  supersymmetric invariant bilinear form $(\ | \ )$. Consider the vector superspace $\bar{\g}:= \{ \bar{a}| a\in \g\}$ where the parity $p(\bar{a})$  of a homogeneous element $a\in \g$ is defined by  $p(\bar{a})\equiv 1+ p(a) \Mod2$. Let us define the $\chi$-bracket on $\mathcal{R}(\bar{\g}):= \mathbb{C}[D] \otimes \bar{\g}\oplus K \mathbb{C}$ by 
\begin{equation} \label{Eqn:affine}
 [ \bar{a}{}_\chi \bar{b}] = (-1)^{p(a) p(\bar{b})} \overline{[a,b]} + \chi K (a|b), \quad [K {}_\chi \bar{a}]= 0
\end{equation}
for $a,b\in \g$. One can check that the $\chi$-bracket on $\bar{\g}\oplus \mathbb{C} K$ satisfies the skewsymmetry and Jacobi identity, so $\mathcal{R}(\bar{\g})$ is a SUSY LCA, called the {\it SUSY current LCA.}  

 By Theorem \ref{Thm:fund_PVA_2}, the supersymmetric algebra
$ S(\mathcal{R}(\bar{\g}))$
 is a SUSY PVA endowed with the $\chi$-bracket induced from the bracket of $\mathcal{R}(\bar{\g})$. Obviously, 
\begin{equation} \label{affine}
\mathcal{V}^k(\bar{\g}):= S(\mathcal{R}(\bar{\g})) / (K-k)S(\mathcal{R}(\bar{\g}))
\end{equation}
is a quotient SUSY PVA for any $k\in \mathbb{C}$. This algebra is called the {\it SUSY affine PVA} associated with $\g$ and $k$. 
\end{example}

\begin{remark}
In \cite{HK06}, the SUSY affine vertex algebra $\widetilde{\mathcal{V}}^k(\bar{\g})$ associated with a Lie superalgebra $\g$ is endowed with the $\chi$-bracket 
\begin{equation} \label{Eqn:affine_2}
 \{ \bar{a}{}_\chi \bar{b} \} = (-1)^{p(a)} (\overline{[a,b]} + k\chi\, (a|b)) \ \text{ for } \  a,b\in \g. 
\end{equation}
One can show that the SUSY PVA defined via  \eqref{Eqn:affine_2} is isomorphic to $\mathcal{V}^k (\bar{\g})$ by considering the map $\bar{a} \mapsto  \mathfrak{i}^{p(a)} \bar{a}$ for a homogeneous element $a\in \g$ and the imaginary number $\mathfrak{i}\in \mathbb{C}$. 
\end{remark}

\begin{remark} \label{rem:classical limit} \cite{MRS20}
The SUSY PVA $\mathcal{V}^k(\bar{\g})$ is obtained from the SUSY current LCA $\mathcal{R}(\bar{\g})$. In a similar manner, a SUSY vertex algebra $V^k(\bar{\g})$ called SUSY universal affine vertex algebra associated to $\mathcal{R}(\bar{\g})$ can be constructed  via Wick formula. One can relate $\mathcal{V}^k(\bar{\g})$ and $V^k(\bar{\g})$ by the process called quasi-classical limit. In general,  the universal SUSY vertex algebra $V(\mathcal{R})$ can be constructed for any given SUSY LCA $\mathcal{R}$. The SUSY PVA $S(\mathbb{C}[D]\otimes \mathcal{R})$ can be understood as a quasi-classical limit of $V(\mathcal{R})$.  
\end{remark}

\section{SUSY classical $W$-algebras\label{sec:SUSYW}}

In this section, let $\g$ be a finite simple Lie superalgebra with a principal subalgebra $\mathfrak{s}\simeq \mathfrak{osp}(1|2).$ The subspace $\mathfrak{s}$ is spanned by five elements: $F, h=2x, E$ which consists of a $\mathfrak{sl}_2$-triple and two odd elements $e,f$ such that $[e,e]=2E$ and $[f,f]=-2F$. Then, by the $\mathfrak{sl}_2$ representation theory, 
\begin{equation} \label{grading}
\  \  \g= \bigoplus_{i\in \frac{\mathbb{Z}}{2}} \g_i,
\end{equation}
where $\g_i\subset \g$ is the eigenspace associated with $\text{ad}\, x$.
We assume that $\g$ has a supersymmetric nondegenerate invariant bilinear form $(\, | \, )$ such that 
\[ (E|F)= 2(x|x)=1. \]

 Note that 
\[ F\in \g_{-1}, \ f\in \g_{-1/2}, \, h \in \g_0, \ e \in \g_{1/2}, \, E\in \g_1,\]
and we denote 
\begin{equation}\label{notation:subalgebra}
\  \  \mathfrak{n}:= \bigoplus_{i>0} \g_i, \quad \mathfrak{n}_-:= \bigoplus_{i<0} \g_i, \quad \mathfrak{p}:= \bigoplus_{i\leq 0} \g_i, \quad \mathfrak{p}_+:= \bigoplus_{i\geq 0} \g_i.
\end{equation}
Then, since the $\mathfrak{sl}_2$-representation theory tells that  $\text{ad}\, F |_{\mathfrak{n}}$ is injective and  $\text{ad}\, F |_{\g_{\leq 1/2}}$ is surjective, one can see that $ \text{ad}\, f|_{\mathfrak{n}}$ is injective and  $\text{ad} \, f|_{\mathfrak{p}}$ is surjective.\\

\subsection{SUSY classical $W$-algebras} \ 

In this section, we introduce a construction of SUSY classical $W$-algebras. See \cite{MRS20} for  the quantum case.

Recall that $\mathcal{V}^k(\bar{\g})$ is the SUSY  affine PVA associated with $\g$ introduced in Example \ref{Ex:affine} and let $\mathcal{V}^k(\bar{\mathfrak{p}})=S(\mathbb{C}[D]\otimes \bar{\mathfrak{p}})$ be the SUSY Poisson vertex subalgebra of $\mathcal{V}^k(\bar{\g})$. The SUSY classical $W$-algebras associated with $\g$ and $f$ are defined as follows.

\begin{tdefinition}\cite{Suh} \label{prop:another definition}
 Let $\mathfrak{I}$ be the differential algebra ideal of $\mathcal{V}^k(\bar{\g})$ generated by  $\{ \bar{n} -(f|n) | n\in \n\}$.
Consider 
\begin{equation}\label{second_W}
\mathcal{W}(\bar{\g}, f, k):= \{ w\in \mathcal{V}^k(\bar{\mathfrak{p}}) |  \{ \bar{n}{}_\chi w\} \in \mathbb{C}[\chi]\otimes \mathfrak{I} \text{ for any } n\in \bar{\n} \}
 \end{equation}
 where the $\chi$-bracket in \eqref{second_W} is the bracket on $\mathcal{V}^k (\bar{\g})$. Then $\mathcal{W}(\bar{\g}, f, k)$ is a well-defined SUSY PVA with the $\chi$-bracket induced from  $\mathcal{V}^k (\bar{\g})$ and it is called the SUSY classical $W$-algebra associated with $\g$ and $f$.
\end{tdefinition}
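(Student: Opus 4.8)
The plan is to verify that $\mathcal{W}(\bar\g,f,k)$ is a differential subalgebra of $\mathcal{V}^k(\bar{\mathfrak{p}})$ which is moreover closed under the $\chi$-bracket; since skewsymmetry, the Jacobi identity and the Leibniz rule of Definition \ref{Def:SUSY PVA} are inherited from the ambient SUSY PVA $\mathcal{V}^k(\bar\g)$, establishing these three closure properties is exactly what is needed to conclude that $\mathcal{W}(\bar\g,f,k)$ is a SUSY PVA with the induced structure. Two facts will be used repeatedly. First, because $\mathfrak{p}=\bigoplus_{i\le 0}\g_i$ is a subalgebra of $\g$, the bracket \eqref{Eqn:affine} of two generators in $\bar{\mathfrak{p}}$ stays in $\mathbb{C}[\chi]\otimes\mathcal{V}^k(\bar{\mathfrak{p}})$, so $\mathcal{V}^k(\bar{\mathfrak{p}})$ is a SUSY Poisson vertex subalgebra and $\{w_1{}_\chi w_2\}\in\mathbb{C}[\chi]\otimes\mathcal{V}^k(\bar{\mathfrak{p}})$ for $w_1,w_2\in\mathcal{V}^k(\bar{\mathfrak{p}})$. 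Second, $\mathfrak{I}$ is by construction a $D$-invariant ideal of the differential algebra $\mathcal{V}^k(\bar\g)$.

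Closure under the product and under $D$ is straightforward. For $w_1,w_2\in\mathcal{W}(\bar\g,f,k)$ and $n\in\n$, the Leibniz rule writes $\{\bar n{}_\chi w_1w_2\}$ as a sum of two terms, each carrying a factor $\{\bar n{}_\chi w_i\}\in\mathbb{C}[\chi]\otimes\mathfrak{I}$; since $\mathfrak{I}$ is an ideal, the whole expression lies in $\mathbb{C}[\chi]\otimes\mathfrak{I}$, whence $w_1w_2\in\mathcal{W}(\bar\g,f,k)$. Likewise, sesquilinearity \eqref{sesqui} gives $\{\bar n{}_\chi Dw\}=(-1)^{p(\bar n)+1}(D+\chi)\{\bar n{}_\chi w\}$, which stays in $\mathbb{C}[\chi]\otimes\mathfrak{I}$ because $\mathfrak{I}$ is $D$-invariant; hence $Dw\in\mathcal{W}(\bar\g,f,k)$.

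The crux is closure under the $\chi$-bracket, and I would isolate the following key lemma: for every $w\in\mathcal{W}(\bar\g,f,k)$ one has $\{\mathfrak{I}{}_\chi w\}\subset\mathbb{C}[\chi]\otimes\mathfrak{I}$, and equivalently (by skewsymmetry) $\{w{}_\chi\mathfrak{I}\}\subset\mathbb{C}[\chi]\otimes\mathfrak{I}$. This is proved by induction on the way $\mathfrak{I}$ is generated as a differential ideal. The base case is the bracket with a generator, $\{(\bar n-(f|n)){}_\chi w\}=\{\bar n{}_\chi w\}\in\mathbb{C}[\chi]\otimes\mathfrak{I}$, which is precisely the defining condition of $\mathcal{W}(\bar\g,f,k)$. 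For the inductive step one checks that the property is preserved under applying $D$ (via sesquilinearity, since $-\chi$, respectively $(D+\chi)$, times an element of $\mathbb{C}[\chi]\otimes\mathfrak{I}$ stays there) and under left multiplication by an arbitrary element of $\mathcal{V}^k(\bar\g)$ (via the right Leibniz rule): in each resulting term either a factor $(\chi+D)^m$ is applied to an element of $\mathfrak{I}$, or a coefficient already known to lie in $\mathfrak{I}$ is multiplied by an arbitrary element, and in both cases the ideal property together with the $D$-invariance of $\mathfrak{I}$ keeps the term in $\mathbb{C}[\chi]\otimes\mathfrak{I}$.

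Granting the key lemma, closure of the bracket follows from the Jacobi identity of Definition \ref{Def:SUSY LCA}. For $w_1,w_2\in\mathcal{W}(\bar\g,f,k)$ and $n\in\n$, expanding $\{\bar n{}_\chi\{w_1{}_\gamma w_2\}\}$ produces a first term $\{\{\bar n{}_\chi w_1\}_{\chi+\gamma}w_2\}$ and a second term $\{w_1{}_\gamma\{\bar n{}_\chi w_2\}\}$. In the first term $\{\bar n{}_\chi w_1\}\in\mathbb{C}[\chi]\otimes\mathfrak{I}$, so the key lemma applied to $w_2$ places it in $\mathbb{C}[\chi,\gamma]\otimes\mathfrak{I}$; in the second term $\{\bar n{}_\chi w_2\}\in\mathbb{C}[\chi]\otimes\mathfrak{I}$, so the key lemma applied to $w_1$ does the same. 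Hence $\{\bar n{}_\chi\{w_1{}_\gamma w_2\}\}\in\mathbb{C}[\chi,\gamma]\otimes\mathfrak{I}$, and comparing coefficients of $\gamma$ shows that each coefficient $(w_1)_{(m)}w_2$ of $\{w_1{}_\gamma w_2\}$ satisfies the invariance condition; as these coefficients already lie in $\mathcal{V}^k(\bar{\mathfrak{p}})$ by the first preliminary fact, they belong to $\mathcal{W}(\bar\g,f,k)$, i.e. $\{w_1{}_\gamma w_2\}\in\mathbb{C}[\gamma]\otimes\mathcal{W}(\bar\g,f,k)$. The main obstacle is the key lemma, and the principal source of care throughout is the sign bookkeeping forced by the odd derivation $D$ and the parity shift $p(\bar n)=p(n)+1$; the remaining arguments are routine propagations of the ideal property through the sesquilinearity, Leibniz and Jacobi axioms.
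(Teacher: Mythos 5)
The paper does not prove this statement: it is quoted from \cite{Suh} as a Theorem--Definition, so there is no internal proof to compare against. Your argument is correct and is the standard verification one would expect to find in the cited reference: closure of $\mathcal{W}(\bar{\mathfrak{g}},f,k)$ under the product and under $D$ via the Leibniz rule, sesquilinearity and the $D$-invariance of the ideal $\mathfrak{I}$; the key lemma $\{\mathfrak{I}\,{}_\chi\, w\}\subset\mathbb{C}[\chi]\otimes\mathfrak{I}$ proved by induction over the differential-ideal generation of $\mathfrak{I}$ using the right Leibniz rule; and closure under the $\chi$-bracket via the Jacobi identity, together with the observation that $\mathfrak{p}$ being a subalgebra forces $\{w_1\,{}_\gamma\, w_2\}$ to lie in $\mathbb{C}[\gamma]\otimes\mathcal{V}^k(\bar{\mathfrak{p}})$ so that no projection is needed. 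I see no gap.
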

 
 In order to describe a generating set of the $W$-algebra, let us define the  $\frac{\mathbb{Z}}{2}$-grading $\Delta$ on $\mathcal{V}^k (\bar{\mathfrak{p}})$ by 
 \begin{equation} \label{Eqn:conformal weight}
  \  \  \Delta_{\bar{a}} = \frac{1}{2}-j_a, \quad \Delta_{AB}= \Delta_A +\Delta_B, \quad \Delta_{D}= \frac{1}{2}
  \end{equation}
for $a\in \g_{j_a}$ and $A,B \in \mathcal{V}^k (\bar{\mathfrak{p}})$.

\begin{proposition}\cite{MRS20}\label{Prop:property_generator}
Let $ \g^{f}:= \ker\, ( \textup{ad}\, f )$, with a basis $\{v_i| i=1, \cdots, n_f\}\subset \mathfrak{p}$ 
 and let $v_i \in \g_{j_i}$ for some $j_i\in \frac{\mathbb{Z}_{\leq 0}}{2}$. 
As a differential algebra, 
\[\mathcal{W}(\bar{\g}, f,k)= \mathbb{C}[D^m w_i:= w_i^{(m)} | m\in \mathbb{Z}_{\geq 0}, \, i=1, 2, \cdots, n_f]\]
 where $w_1, \cdots, w_{n_f}$ satisfy the following properties:
\begin{enumerate}[(i)]
\item $\Delta_{w_i}= \Delta_{\bar{v}_i}$ for $i=1, \cdots, n_f$, \label{Prop: generator conf wt}
\item $w_i\in \bar{v}_i+ \bigoplus_{m\geq 2} S^m (\mathbb{C}[D] \otimes \bar{\mathfrak{p}}) + \bigoplus_{m\geq 1} D^m(\bar{\mathfrak{p}})$. In other words, 
 the  linear part  of $w_i$ without $D$    is $\bar{v}_i$.  
\end{enumerate}
Moreover, a set $\{w_i|i=1, \cdots, n_f\} \subset \mathcal{W}(\bar{\g},f,k)$ satisfying (i) and (ii)  freely generates $\mathcal{W}(\bar{\g},f,k)$. 
\end{proposition}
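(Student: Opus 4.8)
The plan is to prove existence of the generators $w_i$ first, and then to deduce freeness from the structure of the conformal weight grading $\Delta$ of \eqref{Eqn:conformal weight}. Throughout I would use that $\text{ad}\, f|_{\n}$ is injective, so that $\g^f\cap\n=0$ and the chosen basis $\{v_i\}$ of $\g^f$ indeed lies in $\mathfrak{p}$; combined with the surjectivity of $\text{ad}\, f|_{\mathfrak{p}}$ onto $\mathfrak{n}_-$ and the identity $\ker(\text{ad}\, f|_{\mathfrak{p}})=\g^f$, this yields an isomorphism from any complement of $\g^f$ in $\mathfrak{p}$ onto $\mathfrak{n}_-$, which is the linear-algebraic input driving the whole argument. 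I would also record that $\Delta$ is bounded below with finite-dimensional homogeneous components on $\mathcal{V}^k(\bar{\mathfrak{p}})$, that $D$ raises $\Delta$ by $\tfrac12$, and that the defining condition \eqref{second_W} is compatible with $\Delta$ up to the explicit shift carried by $\chi$.

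For existence I would construct each $w_i$ by an order-by-order correction in $\Delta$, starting from the leading term $\bar v_i$. Using \eqref{Eqn:affine} I would compute $\{\bar n{}_\chi\bar v_i\}$, reduce modulo $\mathfrak{I}$ (replacing the $\bar{\n}$-component of $\overline{[n,v_i]}$ by the constant $(f|[n,v_i])$), and isolate the $\bar{\mathfrak{p}}$-valued obstruction to membership in $\mathbb{C}[\chi]\otimes\mathfrak{I}$. The key step is to cancel this obstruction by adding a term of strictly larger $\Delta$ lying in $\bigoplus_{m\geq 2}S^m(\mathbb{C}[D]\otimes\bar{\mathfrak{p}})+\bigoplus_{m\geq1}D^m(\bar{\mathfrak{p}})$; solvability at each stage follows by inverting the isomorphism onto $\mathfrak{n}_-$ described above. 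Since $\Delta$ is bounded below with finite-dimensional pieces, the recursion closes degree by degree and yields $w_i$ satisfying (i) and (ii). Alternatively, as \cite{MRS20} provides such generators for the quantum SUSY $W$-algebra, one may instead obtain them by the quasi-classical limit of Remark \ref{rem:classical limit}, which preserves both the conformal weight and the shape of the leading term.

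For freeness I would use the symbol map attached to the filtration by $\Delta$ together with polynomial degree in the generators. Property (ii) gives that the symbol of $D^m w_i$ is $D^m\bar v_i$; since $\{\bar v_i\}$ is a basis of $\bar{\g}^f$, the family $\{D^m\bar v_i\}$ is algebraically independent in $S(\mathbb{C}[D]\otimes\bar{\mathfrak{p}})$, so any nonzero differential-polynomial relation among the $w_i$ would have a nonzero symbol, a contradiction; hence the $w_i$ freely generate a differential subalgebra $\widetilde{\calW}\subseteq\calW(\bar{\g},f,k)$. That $\widetilde{\calW}=\calW(\bar{\g},f,k)$ follows from the same leading-order analysis used for existence: it forces the symbol of an arbitrary element of $\calW(\bar{\g},f,k)$ to lie in $S(\mathbb{C}[D]\otimes\bar{\g}^f)$, so subtracting the matching differential polynomial in the $w_i$ strictly lowers $\Delta$, and the process terminates because $\Delta$ is bounded below.

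The main obstacle is the closure of the recursion in the existence step: one must verify that in the SUSY setting the $\bar{\mathfrak{p}}$-valued obstruction at each stage really lies in the image of the map inverted via $\text{ad}\, f|_{\mathfrak{p}}$, and that the bookkeeping of the odd derivation $D$, the odd indeterminate $\chi$ with $D\chi+\chi D=-2\chi^2$, and the parity signs in \eqref{Eqn:affine} and in the Leibniz rule remains consistent. Once this solvability is established uniformly in $\Delta$, both the termination of the construction and the freeness are essentially formal consequences of the grading being bounded below with finite-dimensional components.
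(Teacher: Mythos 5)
First, a caveat: the paper does not prove Proposition \ref{Prop:property_generator} at all — it is imported from \cite{MRS20} — so there is no in-text argument to compare yours against. Your overall strategy (existence by a graded recursion driven by the surjectivity of $\textup{ad}\,f|_{\mathfrak{p}}$ onto $\n_-$, freeness and generation by a symbol/filtration argument) is the standard route and is essentially what \cite{MRS20} does. Two steps as you have written them, however, are genuine gaps rather than omitted routine details, and the central solvability claim in the existence step is asserted rather than proved (you acknowledge this last point yourself; falling back on the quasi-classical limit of Remark \ref{rem:classical limit} is the honest way to close it within the scope of this paper).

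The first gap: condition (ii) only pins down the linear part of $w_i$ \emph{without} $D$. The full linear part of $w_i$ may — and in the paper's examples does (expand $\mathcal{A}_{11}\mathcal{A}_{22}=D^2+D\bar{\bold{e}}^{22}+\cdots$, which produces $(\bar{\bold{e}}^{22})'$) — contain total-derivative terms $D^{l}\bar{u}$ with $l\ge 1$ and $u\in\mathfrak{p}$ arbitrary, not in $\g^{f}$. So the symbol of $D^{m}w_i$ in the polynomial-degree filtration is $D^{m}\bar{v}_i+\sum_{l\ge 1}D^{m+l}\bar{u}_{i,l}$, not $D^{m}\bar{v}_i$, and the algebraic independence of $\{D^{m}\bar{v}_i\}$ alone does not finish the argument. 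It survives only after you introduce a secondary filtration by $D$-degree: in any vanishing linear combination of the linear parts, the component of lowest $D$-degree $m_0$ is $\sum_i c_{i,m_0}D^{m_0}\bar{v}_i$, which forces $c_{i,m_0}=0$ because $\{v_i\}$ is a basis, and one inducts upward in $m$. The second gap is in the generation step: subtracting the matching differential polynomial in the $w_i$ does \emph{not} lower $\Delta$ — by construction it preserves $\Delta$ — and so "$\Delta$ is bounded below" cannot be what terminates the process. What the subtraction does is strictly raise the minimal polynomial degree of the remainder, and termination follows because every generator $D^{m}\bar{a}$ with $a\in\mathfrak{p}=\g_{\le 0}$ has $\Delta\ge\tfrac12$, so inside a fixed $\Delta$-homogeneous component the polynomial degree is bounded by $2\Delta$ and the component is finite dimensional; the induction must be run on polynomial degree within that component, not on $\Delta$.
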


\begin{remark} \label{Rem:notation}In the following sections, to simplify the presentation, we will note:
\[\mathcal{W}(\bar{\g}, f):= \mathcal{W}(\bar{\g}, f, 1), \quad \mathcal{V}(\bar{\g}):=  \mathcal{V}^1(\bar{\g}), \quad  \mathcal{V}(\bar{\mathfrak{p}}):=  \mathcal{V}^1(\bar{\mathfrak{p}}). \]
\end{remark}

\subsection{$\chi$-adjoint action}\ 

Let $\mathcal{P}$ be a SUSY PVA with an odd derivation $D$.
We consider a $\mathbb{C}(\!(D^{-1})\!)$-module $\mathcal{P}(\!(D^{-1})\!)$  such that 
\begin{equation*}
 \  \  D \big( a D^n\big)= a'D^n+(-1)^{p(a)} a D^{n+1}
\end{equation*}
and 
\begin{equation*}
\begin{aligned}
 \  \  D^{-1} \big( a D^n\big)& = 
 (-1)^{p(a)} a D^{n-1} +a' D^{n-2} -(-1)^{p(a)} a^{(2)} D^{n-3} -a^{(3)} D^{n-4} + \cdots 
 \\
 & = \sum_{m\geq 0} (-1)^m\left( (-1)^{p(a)} a^{(2m)}D^{-1-2m} +a^{(2m+1)}D^{-2-2m} \right)D^n
\end{aligned}
\end{equation*}
where $n\in \mathbb{Z}$, $a\in \mathcal{P}$,  $a':=D(a)$ and $a^{(m)}= D^m(a)$ for $m\geq 0$.

\begin{definition} \label{Def-adjoint}
The {\it $\chi$-adjoint action} of  $\mathcal{P}$ in $\mathcal{P}(\!(D^{-1})\!)$  is defined by 
\begin{equation}
\  \  \text{ad}_\chi P : \mathcal{P}(\!(D^{-1})\!) \to \mathbb{C}[\![\chi]\!]\otimes \mathcal{P}(\!(D^{-1})\!), \quad \sum_{n\in \mathbb{Z}} a_n D^n \mapsto \sum_{n\in \mathbb{Z}} \{P {}_\chi a_n\}D^n
\end{equation}
for $P\in \mathcal{P}$.
\end{definition}

Note that we also consider $\mathbb{C}[\![\chi]\!]\otimes \mathcal{P}(\!(D^{-1})\!)$ as a  $\mathbb{C}(\!(D^{-1})\!)$-module via 
\begin{equation*}
D\chi=-\chi D-2 \chi^2, \quad  D^{-1} \chi = -\chi D^{-1} -2\chi^2 D^{-2}.
\end{equation*}
Since 
$\left[\sum_{n\geq 0} \chi^{n} D^{-n-1} \right](\chi+D)=(\chi+D)\left[\sum_{n\geq 0} \chi^{n} D^{-n-1} \right]$ is the identity  on $\mathbb{C}[\![\chi]\!]\otimes \mathcal{P}(\!(D^{-1})\!)$, we  denote 
\[(\chi+D)^{-1}:= \sum_{n\geq 0} \chi^{n} D^{-n-1}.\]

\begin{lemma} \label{Lemma:nonlocal}
For any $P\in \mathcal{P}$ and $A(D) \in \mathcal{P}(\!(D^{-1})\!)$, we have 
\begin{enumerate}
\item $  \textup{ad}_\chi P\, \big( D\,A(D)\big)= (-1)^{p(P)+1}(\chi+D) \textup{ad}_\chi P\, \big(  A(D)\big),$
\item $  \textup{ad}_\chi P\, \big( D^{-1} A(D)\big)= (-1)^{p(P)+1}(\chi+D)^{-1}   \textup{ad}_\chi P\, \big(  A(D)\big).$
\end{enumerate}
\end{lemma}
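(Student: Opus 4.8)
The plan is to deduce everything from the single base identity
\[
\textup{ad}_\chi P\,(Da)=(-1)^{p(P)+1}(\chi+D)\,\textup{ad}_\chi P\,(a),\qquad a\in\mathcal{P},
\]
and then to bootstrap to an arbitrary $A(D)$ and to the $D^{-1}$ statement by purely formal steps. First I would note that $\textup{ad}_\chi P$ is right $\mathbb{C}(\!(D^{-1})\!)$-linear: since it acts coefficientwise, $\textup{ad}_\chi P\,(Y\,D^n)=\textup{ad}_\chi P\,(Y)\,D^n$ for every $Y$ and every $n\in\mathbb{Z}$. Writing $A(D)=\sum_n a_nD^n$ and using the module rule $D\,a_n=a_n'+(-1)^{p(a_n)}a_nD$, the left-hand side of (1) becomes $\sum_n \textup{ad}_\chi P(D a_n)\,D^n$; since left multiplication by $\chi+D$ in $\mathbb{C}[\![\chi]\!]\otimes\mathcal{P}(\!(D^{-1})\!)$ commutes with right multiplication by $D^n$, statement (1) for general $A(D)$ follows from the base identity applied to each $a_n$.

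To establish the base identity I would expand $D a=a'+(-1)^{p(a)}a D$ and apply $\textup{ad}_\chi P$ coefficientwise, obtaining
\[
\textup{ad}_\chi P\,(Da)=\{P{}_\chi a'\}+(-1)^{p(a)}\{P{}_\chi a\}\,D .
\]
Sesquilinearity of the $\chi$-bracket rewrites the first summand as $\{P{}_\chi a'\}=\{P{}_\chi Da\}=(-1)^{p(P)+1}(\chi+D)\{P{}_\chi a\}$. The delicate point, and what I expect to be the main obstacle, is that the operator $\chi+D$ furnished by sesquilinearity acts on $\mathbb{C}[\chi]\otimes\mathcal{P}$ as an odd derivation and produces no pseudodifferential tail, whereas the $\chi+D$ appearing on the right-hand side of the lemma is left multiplication in the ring $\mathbb{C}[\![\chi]\!]\otimes\mathcal{P}(\!(D^{-1})\!)$, in which $D$ additionally generates a term $(-1)^{p(c)}cD$ each time it is commuted past a coefficient $c\in\mathcal{P}$.

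The reconciliation of these two meanings of $\chi+D$ is the heart of the argument. For a parity-homogeneous element $b\in\mathbb{C}[\chi]\otimes\mathcal{P}$ (with $\chi$ odd) I would record the identity $(\chi+D)_{\mathrm{ring}}\,b=(\chi+D)_{\mathrm{der}}\,b+(-1)^{p(b)}\,bD$, the extra summand being exactly the tail gathered as $D$ is commuted to the right past each monomial of $b$; note that the relation $D\chi=-\chi D-2\chi^2$ is common to both operators, so only the final action on the $\mathcal{P}$-coefficient differs. The crucial structural input is that $\{P{}_\chi a\}$ is parity-homogeneous of parity $p(P)+p(a)+1$, which makes the sign $(-1)^{p(b)}$ uniform across all monomials. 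Substituting this into the expression for $\{P{}_\chi a'\}$ produces a tail $-(-1)^{p(P)+1}(-1)^{p(b)}\{P{}_\chi a\}D$, which, because $(-1)^{p(P)+1+p(b)}=(-1)^{p(a)}$, cancels precisely against the term $(-1)^{p(a)}\{P{}_\chi a\}D$ coming from the $aD$-part of $Da$. What survives is $(-1)^{p(P)+1}(\chi+D)_{\mathrm{ring}}\{P{}_\chi a\}$, which is the base identity.

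Statement (2) is then formal: replacing $A(D)$ by $D^{-1}A(D)$ in (1) and using $D\,D^{-1}=\mathrm{id}$ gives $\textup{ad}_\chi P(A(D))=(-1)^{p(P)+1}(\chi+D)\,\textup{ad}_\chi P(D^{-1}A(D))$. Left-multiplying by the scalar $(-1)^{p(P)+1}$ and by the two-sided inverse $(\chi+D)^{-1}=\sum_{n\geq0}\chi^nD^{-n-1}$ recorded just before the lemma, and using $(\chi+D)^{-1}(\chi+D)=\mathrm{id}$, yields exactly $\textup{ad}_\chi P(D^{-1}A(D))=(-1)^{p(P)+1}(\chi+D)^{-1}\,\textup{ad}_\chi P(A(D))$, as claimed.
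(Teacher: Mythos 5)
Your proof is correct and follows essentially the same route as the paper: part (1) is derived from sesquilinearity and part (2) is obtained by applying (1) to $D^{-1}A(D)$ and left-multiplying by $(\chi+D)^{-1}$, exactly as in the paper. The only difference is that where the paper simply asserts "(1) is a consequence of sesquilinearity," you carefully verify it, correctly reconciling the derivation action of $\chi+D$ on $\mathbb{C}[\chi]\otimes\mathcal{P}$ coming from sesquilinearity with left multiplication by $\chi+D$ in $\mathbb{C}[\![\chi]\!]\otimes\mathcal{P}(\!(D^{-1})\!)$ via the parity of $\{P{}_\chi a\}$ --- a detail the paper leaves implicit.
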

\begin{proof}
The equation (1) is a consequence of  sesquilinearity in a SUSY PVA. Apply (1) to $D^{-1}A(D)\in \mathcal{P}(\!(D^{-1})\!)$. Then \[(\chi+D)\, \text{ad}_\chi P\, \big( D^{-1} A(D)\big)=(-1)^{p(P)+1} \text{ad}_\chi P\, \big(DD^{-1}A(D)\big)=(-1)^{p(P)+1} \text{ad}_\chi P\, \big(A(D)\big).\] Since $(\chi+D)^{-1}$ is the left inverse of $(\chi+D)$, we get the equality (2).
\end{proof}

\begin{proposition} \label{Prop:Leibniz, P(D)}
 For $P\in \mathcal{P}$ and $A(D), B(D) \in \mathcal{P}(\!(D^{-1})\!)$, we have the following property:
\begin{equation} \label{Eqn:prop+leibP(D)}
\begin{aligned}
& \textup{ad}_\chi P\,  \big( A(D)B(D) \big) \\
&=  (-1)^{p(A(D))(p(P)+1)}A(D+\chi)\  \textup{ad}_\chi P\,  \big(B(D)
\big)+ \textup{ad}_\chi P\,  \big(A(D) \big)B(D).
\end{aligned}
\end{equation}
\end{proposition}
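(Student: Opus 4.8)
The plan is to reduce the identity to the case where $A(D)$ is a single monomial $aD^m$, and then further to the \emph{scalar} case $A(D)=a\in\mathcal{P}$, where it becomes the Leibniz rule of the SUSY PVA $\mathcal{P}$. Since $\textup{ad}_\chi P$ is $\mathbb{C}$-linear, acts coefficientwise, and preserves the degree in $D$, it is continuous for the $D^{-1}$-adic topology and hence commutes with the (coefficientwise finite) expansion $A(D)B(D)=\sum_m a_m D^m B(D)$. Writing a homogeneous $A(D)=\sum_m a_m D^m$, all summands $a_m D^m$ carry the same parity $p(a_m)+m$, so it is enough to prove \eqref{Eqn:prop+leibP(D)} for one homogeneous monomial $A(D)=aD^m$ and then sum.

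First I would treat $m=0$, i.e.\ $A(D)=a\in\mathcal{P}$, so that $A(D)B(D)=a\,B(D)$ is left multiplication by $a$. Writing $B(D)=\sum_n b_n D^n$ and applying the Leibniz rule $\{P{}_\chi a b_n\}=(-1)^{(p(P)+1)p(a)}a\{P{}_\chi b_n\}+\{P{}_\chi a\}b_n$ to each coefficient gives
\[
\textup{ad}_\chi P\big(a\,B(D)\big)=(-1)^{(p(P)+1)p(a)}\,a\,\textup{ad}_\chi P\big(B(D)\big)+\{P{}_\chi a\}\,B(D).
\]
As $A(D)=a$ has no $D$-dependence, $A(D+\chi)=a$ and $\textup{ad}_\chi P(A(D))=\{P{}_\chi a\}$, so this is precisely \eqref{Eqn:prop+leibP(D)} in the scalar case.

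For a general monomial $A(D)=aD^m$ I would use associativity in $\mathcal{P}(\!(D^{-1})\!)$ to write $A(D)B(D)=a\cdot\big(D^m B(D)\big)$ and apply the scalar case with $D^m B(D)$ in place of $B(D)$, obtaining
\[
\textup{ad}_\chi P\big(a\,D^m B(D)\big)=(-1)^{(p(P)+1)p(a)}\,a\,\textup{ad}_\chi P\big(D^m B(D)\big)+\{P{}_\chi a\}\,D^m B(D).
\]
Next I would iterate Lemma~\ref{Lemma:nonlocal}: applying part (1) (for $m>0$) or part (2) (for $m<0$) exactly $|m|$ times yields $\textup{ad}_\chi P(D^m B(D))=(-1)^{m(p(P)+1)}(\chi+D)^m\,\textup{ad}_\chi P(B(D))$ for every $m\in\mathbb{Z}$, where $(\chi+D)^{-1}$ is read with the convention fixed before the lemma. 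Substituting, combining the signs into $(-1)^{(p(a)+m)(p(P)+1)}$, and using $p(A(D))=p(a)+m$, $A(D+\chi)=a(\chi+D)^m$ and $\textup{ad}_\chi P(A(D))=\{P{}_\chi a\}D^m$, gives \eqref{Eqn:prop+leibP(D)} for the monomial.

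The one point demanding care is the sign bookkeeping, which relies on $D$ being odd, so that $p(D^m)=m$ enters $p(A(D))$, and on the observation that the noncommutativity of $\chi$ and $D$ is already encoded in the operators $(\chi+D)^m$ furnished by Lemma~\ref{Lemma:nonlocal}. In particular, identifying $a(\chi+D)^m$ with $A(D+\chi)$ is simply the consistent interpretation of the substitution $D\mapsto \chi+D$ (note $\chi+D=D+\chi$ as elements), so no separate check of that substitution is required beyond invoking the lemma.
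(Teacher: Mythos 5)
Your proof is correct and follows essentially the same route as the paper's: reduce to monomials $A(D)=aD^m$, settle the $m=0$ case by applying the Leibniz rule of the $\chi$-bracket coefficientwise, and absorb the powers of $D$ via Lemma \ref{Lemma:nonlocal}. The only cosmetic difference is that you factor $aD^mB(D)=a\cdot\big(D^mB(D)\big)$ and iterate the lemma $|m|$ times, whereas the paper performs an explicit induction on $m\ge 1$ starting from a hands-on $m=1$ computation and invokes the lemma only for $m<0$; the sign bookkeeping $(-1)^{(p(a)+m)(p(P)+1)}$ and the identification $A(D+\chi)=a(\chi+D)^m$ come out the same either way.
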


\begin{proof}
Let  $a,b\in \mathcal{P}$ and $m,n\in \mathbb{Z}$. It is enough to show \eqref{Eqn:prop+leibP(D)} for $A(D)= aD^m$ and $B(D)= bD^n$ . If $m=0$, it directly follows from the Leibniz rule of $\chi$-bracket on $\mathcal{P}$.
  If $m=1$, then 
\begin{equation}
\begin{aligned}
&  \text{ad}_\chi P\,  \big( aD\, bD^n \big)  =  \text{ad}_\chi P\, (ab'D^n +(-1)^{p(b)}ab D^{n+1})\\
& = \text{ad}_\chi  P\, (a) \,(b' D^n + (-1)^{p(b)}b D^{n+1})\\
& + (-1)^{p(a) (p(P)+1)} a   \big(\text{ad}_\chi P \, (b') \, D^n +(-1)^{p(b)} \text{ad}_\chi P\, (b)\, D^{n+1}\big)\\
& =  \text{ad}_\chi P\, (aD)\, bD^n+(-1)^{(p(a)+1) (p(P)+1)}a (D+\chi) \text{ad}_\chi P\,(bD^n).
\end{aligned}
\end{equation}
Inductively, one can prove the proposition for $m\geq 1$. 
If $m<0$, similar arguments work by Lemma \ref{Lemma:nonlocal}.
\end{proof}

\begin{corollary} \label{Cor1}
Let $P\in \mathcal{P}$ and $A_1(D), A_2(D),\cdots, A_l(D) \in \mathcal{P}(\!(D^{-1})\!)$. Then 
\begin{equation*}
\begin{aligned}
 & \textup{ad}_\chi  P\,  \big( A_1(D)A_2(D)  \cdots A_l (D)\big)  \\
 & = \  \  \sum_{i=1}^{l}  (-1)^{\sum_{k=1}^{i-1}p(A_k(D))\left(p(P)+1\right)}\left[ \prod_{j=1}^{i-1} \left( A_j(D+\chi)\right) \textup{ad}_\chi  P (A_i(D))\prod_{j'=i+1}^{l} \left( A_{j'}(D)\right) \right].
 \end{aligned}
\end{equation*}
\end{corollary}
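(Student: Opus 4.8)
The plan is to argue by induction on the number $l$ of factors, with Proposition~\ref{Prop:Leibniz, P(D)} serving simultaneously as the base case and as the tool that drives the inductive step. The base case $l=1$ is a tautology: every product appearing in the statement is empty (hence equal to $1$) and the sign exponent is the empty sum, so the right-hand side collapses to $\textup{ad}_\chi P(A_1(D))$. The case $l=2$ is exactly Proposition~\ref{Prop:Leibniz, P(D)}, as one checks by reading off the $i=1$ and $i=2$ summands.

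For the inductive step I would assume the identity for products of $l-1$ factors and split the product as $A_1(D)\,B(D)$ with $B(D):=A_2(D)\cdots A_l(D)$. Proposition~\ref{Prop:Leibniz, P(D)} then gives
\begin{equation*}
\textup{ad}_\chi P\big(A_1(D)B(D)\big)=(-1)^{p(A_1(D))(p(P)+1)}A_1(D+\chi)\,\textup{ad}_\chi P\big(B(D)\big)+\textup{ad}_\chi P\big(A_1(D)\big)B(D).
\end{equation*}
The second summand, with its empty left product and the whole tail $A_2(D)\cdots A_l(D)$ on the right, is precisely the $i=1$ term of the claimed sum. To the first summand I would apply the induction hypothesis to $\textup{ad}_\chi P(B(D))$, producing a sum indexed by $i=2,\dots,l$ in which the accumulated prefactor sign is $(-1)^{\sum_{k=2}^{i-1}p(A_k(D))(p(P)+1)}$ and the left product is $\prod_{j=2}^{i-1}A_j(D+\chi)$.

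It then remains to absorb the prefactor $(-1)^{p(A_1(D))(p(P)+1)}A_1(D+\chi)$ into each of these terms. Left-multiplication merely prepends $A_1(D+\chi)$ to the operator product $\prod_{j=2}^{i-1}A_j(D+\chi)$, turning it into $\prod_{j=1}^{i-1}A_j(D+\chi)$, while the two sign factors combine to $(-1)^{\sum_{k=1}^{i-1}p(A_k(D))(p(P)+1)}$, exactly the exponent demanded by the statement. Together with the $i=1$ term this reconstitutes the full sum over $i=1,\dots,l$, closing the induction.

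The step I expect to require the most care is this final bookkeeping: one must confirm that prepending $A_1(D+\chi)$ introduces no extra Koszul sign, which holds because it is simply concatenated on the left of factors already built from $D+\chi$, so associativity of the product in $\mathbb{C}[\![\chi]\!]\otimes\mathcal{P}(\!(D^{-1})\!)$ suffices; and that the partial-sum exponents telescope correctly from $\sum_{k=2}^{i-1}$ to $\sum_{k=1}^{i-1}$. All remaining manipulations are direct substitutions, so once the $l=2$ case is matched against the Proposition the induction goes through with no further input.
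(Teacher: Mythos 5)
Your induction is correct and is exactly the argument the paper intends: the corollary is stated without proof precisely because it follows by iterating Proposition \ref{Prop:Leibniz, P(D)} in the way you describe, with the $i=1$ term coming from the second summand of the Leibniz rule and the remaining terms from prepending $(-1)^{p(A_1(D))(p(P)+1)}A_1(D+\chi)$ to the inductive expansion of $\textup{ad}_\chi P(A_2(D)\cdots A_l(D))$. Your sign bookkeeping and the observation that no extra Koszul sign appears (the left factor is simply concatenated in the associative algebra $\mathbb{C}[\![\chi]\!]\otimes\mathcal{P}(\!(D^{-1})\!)$) are both accurate.
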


 Recall the notations in Theorem-Definition \ref{prop:another definition} and Remark \ref{Rem:notation}.  Consider the differential algebra homomorphism 
\[ \pi: \mathcal{V}(\bar{\g})(\!(D^{-1})\!)\to  \mathcal{V}(\bar{\mathfrak{p}})(\!(D^{-1})\!) \]
defined by $\pi(\bar{a}D^n)= \bar{a}D^n$ for $a\in \mathfrak{p}$ and $\pi(\bar{a}D^n)= (f|a)D^n$ for $a \in \n$. One can canonically extend the map $\pi$ to 
\[ \widetilde{\pi}: \mathbb{C}[\chi]\otimes \mathcal{V}(\bar{\g})(\!(D^{-1})\!)\to  \mathbb{C}[\chi] \otimes  \mathcal{V}(\bar{\mathfrak{p}})(\!(D^{-1})\!) .\]

In the following sections, Corollary \ref{Cor2} is used to find generators of $\mathcal{W}(\bar{\g}, f)$. 

\begin{corollary}\label{Cor2}
Let $P(D)= \sum_{n\in \mathbb{Z}} p_n D^n\in \mathcal{V}(\mathfrak{\bar{p}})(\!(D^{-1})\!)\subset  \mathcal{V}(\mathfrak{\bar{g}})(\!(D^{-1})\!)$ for $p_n \in\mathcal{V}(\mathfrak{\bar{p}})$. All coefficients $p_n$ are in  $\mathcal{W}(\bar{\g}, f)$ if and only if 
\[ \textup{ad}_\chi^{\, \mathfrak{J}} a\, (P(D)):= \widetilde{\pi} \circ \textup{ad}_\chi a\, (P(D))=0\]
for any $a \in \n$.

\end{corollary}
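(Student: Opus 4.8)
The plan is to reduce the entire statement to one structural fact about the projection $\pi$: that, on the coefficient algebra $\mathcal{V}(\bar{\g})$, its kernel is exactly the ideal $\mathfrak{I}$. Once this is available, the corollary becomes a purely formal unwinding of the definition of $\mathcal{W}(\bar{\g},f)$ in Theorem-Definition \ref{prop:another definition}, of $\mathrm{ad}_\chi$ in Definition \ref{Def-adjoint}, and of $\widetilde{\pi}$.

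First I would show that $X-\pi(X)\in\mathfrak{I}$ for every $X\in\mathcal{V}(\bar{\g})$, which yields both $\mathfrak{I}\subseteq\ker\pi$ and $\ker\pi\subseteq\mathfrak{I}$, hence $\ker(\pi|_{\mathcal{V}(\bar{\g})})=\mathfrak{I}$. Since $\mathcal{V}(\bar{\g})=S(\mathbb{C}[D]\otimes\bar{\g})$ and $\bar{\g}=\bar{\mathfrak{p}}\oplus\bar{\n}$, every element is a sum of monomials in the generators $D^m\bar{a}$, and I would induct on the number of $\bar{\n}$-generators occurring in such a monomial $X=Y\cdot D^m\bar{n}$ with $n\in\n$. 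If $m\geq 1$, then $D^m\bar{n}=D^m(\bar{n}-(f|n))\in\mathfrak{I}$ because $D$ annihilates the scalar $(f|n)$, so $X\in\mathfrak{I}$ and $\pi(X)=0$; if $m=0$, then $X=Y(\bar{n}-(f|n))+(f|n)Y\equiv(f|n)\,Y\pmod{\mathfrak{I}}$, and I recurse on $Y$. The value reached after eliminating all $\bar{\n}$-generators is precisely $\pi(X)$, since $\pi$ performs exactly these substitutions multiplicatively. Applying $\pi$ to coefficients then gives $\ker\widetilde{\pi}=\mathbb{C}[\chi]\otimes\mathfrak{I}$.

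With this in hand I would unwind the definitions. For $a\in\n$, Definition \ref{Def-adjoint} gives $\mathrm{ad}_\chi^{\,\mathfrak{J}}a\,(P(D))=\widetilde{\pi}\bigl(\sum_{n}\{\,\bar{a}\,{}_\chi\,p_n\,\}\,D^n\bigr)=\sum_{n}\widetilde{\pi}(\{\,\bar{a}\,{}_\chi\,p_n\,\})\,D^n$. Because distinct powers of the formal variable $D$ are linearly independent in $\mathcal{V}(\bar{\mathfrak{p}})(\!(D^{-1})\!)$, the vanishing $\mathrm{ad}_\chi^{\,\mathfrak{J}}a\,(P(D))=0$ for all $a\in\n$ is equivalent to $\widetilde{\pi}(\{\,\bar{a}\,{}_\chi\,p_n\,\})=0$ for every index $n$ and every $a\in\n$. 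By the key fact $\ker\widetilde{\pi}=\mathbb{C}[\chi]\otimes\mathfrak{I}$, this is in turn equivalent to $\{\,\bar{a}\,{}_\chi\,p_n\,\}\in\mathbb{C}[\chi]\otimes\mathfrak{I}$ for all $n$ and all $a\in\n$, which is exactly the defining condition that each $p_n$ lie in $\mathcal{W}(\bar{\g},f)$. Reading this chain of equivalences from either end yields the two implications of the corollary.

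The only genuinely substantive step is the identity $\ker\pi=\mathfrak{I}$, and within it the inclusion $\ker\pi\subseteq\mathfrak{I}$; everything else is bookkeeping. I expect the main care to go into confirming that $\pi$, and hence $\widetilde{\pi}$, is truly compatible with $D$ on the Ore extension $\mathcal{V}(\bar{\g})(\!(D^{-1})\!)$, so that the coefficientwise description of its kernel is legitimate. This rests on the sesquilinearity relations \eqref{sesqui} together with the rule $D\chi=-\chi D-2\chi^2$, and on the elementary parity check that $(-1)^{p(\bar{a})}(f|a)=(f|a)$ whenever $(f|a)\neq 0$ (which holds because a nonzero pairing forces $a$ odd in $\g$, so $p(\bar{a})=1+p(a)\equiv 0$). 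None of this is deep, but it is precisely where one must track signs carefully.
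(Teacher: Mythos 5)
Your proof is correct and is essentially the argument the paper intends: Corollary \ref{Cor2} is stated there without proof, and the only substantive input is the identification $\ker\big(\pi|_{\mathcal{V}(\bar{\mathfrak{g}})}\big)=\mathfrak{I}$, which you establish correctly and then combine with the coefficientwise definitions of $\mathrm{ad}_\chi$ and $\widetilde{\pi}$ and with Theorem-Definition \ref{prop:another definition}. One small imprecision: the inclusion $\mathfrak{I}\subseteq\ker\pi$ does not follow from the statement $X-\pi(X)\in\mathfrak{I}$ alone (that gives only $\ker\pi\subseteq\mathfrak{I}$), but it is immediate anyway because $\pi$ is a differential algebra homomorphism annihilating the generators $\bar{n}-(f|n)$ of $\mathfrak{I}$.
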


\section{Generators of $\mathcal{W}(\overline{\frak{sl}}(n\pm 1|n), f)$\label{sec:glmn}}

In this section, we find a generating set of the SUSY classical $W$-algebra associated with $\mathfrak{gl}(n\pm1|n)$ or $\mathfrak{sl}(n\pm 1|n)$ and the odd principal nilpotent element $f$. 
 In \cite{MRS20}, one can find the  quantum analogue of   $\mathfrak{gl}(n+1|n)$ case.

\subsection{$\mathcal{W}(\overline{\frak{gl}}(n+1|n),f)$} \label{Subsec:gl}\

Let us use the matrix presentation of $\mathfrak{gl}(n+1|n)$ and $f$ in Example \ref{gl(n+1)}.
Then the $\frac{\mathbb{Z}}{2}$-grading \eqref{grading} on $\g=\frak{gl}(n+1|n)$ and the subalgebras $\mathfrak{p}$ and $\n$ in \eqref{notation:subalgebra} are determined.
\vskip 2mm
 Let $\mathcal{V}(\bar{\mathfrak{p}}):= S(\mathbb{C}[D]\otimes \bar{\mathfrak{p}})$ and $\bold{e}^{ij}:= (-1)^{p(i)} \bold{e}_{ij}=(-1)^{i+1}\bold{e}_{ij}$.
Consider the $({2n+1}) \times (2n+1)$ matrix 
\begin{equation*} \label{A_gl}
\begin{aligned}
\mathcal{A} = \tiny \left( \begin{array}{ccccccccc} D+\bar{\bold{e}}^{11} & \bar{\bold{e}}^{21}&  \cdots & \bar{\bold{e}}^{n+1\, 1} &  \bar{\bold{e}}^{n+2\, 1}& \cdots& \bar{\bold{e}}^{2n\, 1}& \bar{\bold{e}}^{2n+1\, 1}  \\
 -1& D+\bar{\bold{e}}^{22}  & \cdots &  \bar{\bold{e}}^{n+1\, 2} &  \bar{\bold{e}}^{n+2\, 2}& \cdots &\bar{\bold{e}}^{2n\, 2}& \bar{\bold{e}}^{2n+1\, 2}\\
 \vdots & \vdots&&\vdots&\vdots&&\vdots&\vdots \\
 0 & 0 & \cdots & D+ \bar{\bold{e}}^{n+1\, n+1} & \bar{\bold{e}}^{n+2\, n+1} &  \cdots &\bar{\bold{e}}^{2n\, n+1}& \bar{\bold{e}}^{2n+1\, n+1}\\
  0 & 0& \cdots &-1 & D+ \bar{\bold{e}}^{n+2\, n+2} &  \cdots & \bar{\bold{e}}^{2n\, n+2}& \bar{\bold{e}}^{2n+1\, n+2}\\
   \vdots & \vdots&&\vdots&\vdots&&\vdots&\vdots \\
     0 & 0& \cdots &0 &0&  \cdots &D+\bar{\bold{e}}^{2n\, 2n}&\bar{\bold{e}}^{2n+1\, 2n}\\
          0 & 0& \cdots &0 &0&  \cdots & -1 &D+\bar{\bold{e}}^{2n+1\, 2n+1}\\
   \end{array}\right).
\end{aligned}
\end{equation*}
In other words, 
 \begin{equation} \label{A_general}
 \mathcal{A}= \sum_{i\in I, \, i \leq j} \bold{e}_{ij}\otimes  (\delta_{ij} D+\bar{\bold{e}}^{ji}) -f\otimes 1 \in \g \otimes \mathcal{V}(\bar{\mathfrak{p}})[D] .
\end{equation}
Then the row determinant of $\mathcal{A}$ is 
\begin{equation} \label{row_det:gl(n+1|n)}
rdet(\mathcal{A}):=\sum_{N=0}^{2n} \sum_{\substack{0=i_0 < i_1 < \cdots \\ \cdots < i_{N+1}= 2n+1}} \mathcal{A}_{i_0+1, i_1} \mathcal{A}_{i_1+1, i_2} \cdots \mathcal{A}_{i_N+1, i_{N+1}},
\end{equation}
where $\mathcal{A}_{ij}$ denotes the $ij-$entry of $\mathcal{A}$.

By direct computations, we get the following lemma.
\begin{lemma}  \label{Lemma:gl}
For $i, k, l\in I$ such that $i\leq 2n$, we have 
\begin{enumerate}
\item$ \{ \bar{\bold{e}}_{i\, i+1}{}_\chi \bar{\bold{e}}^{kl}\}= \delta_{i+1, k} (-1)^l \bar{\bold{e}}_{il}-(-1)^{k} \delta_{i,l} \bar{\bold{e}}_{k, i+1} - \chi \delta_{i,l}\delta_{i+1,k}$,
\item  $\bar{\bold{e}}_{i,i+1}= (-1)^{i}$ and $\bar{\bold{e}}^{i,i+1}= -1$ in $\mathfrak{J}$, where $\mathfrak{J}$ is in Theorem-Definition \ref{prop:another definition}.
\end{enumerate}
\end{lemma}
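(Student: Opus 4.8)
The plan is to derive both identities directly from the defining data: the affine $\chi$-bracket \eqref{Eqn:affine} specialized to $K=1$ (as permitted by Remark \ref{Rem:notation}), the structure constants and bilinear form of $\mathfrak{gl}(n+1|n)$ recorded in \eqref{Eqn:gl(m|n)}, and the parity convention of Example \ref{gl(n+1)}, under which $(-1)^{p(i)}=(-1)^{i+1}$ and, crucially, two consecutive indices have opposite parity. The latter gives $p(i)+p(i+1)\equiv 1 \pmod 2$, so $\bold{e}_{i,i+1}$ is always odd, and this single structural fact is what makes all the signs collapse.

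For part (1), I would first use $\bar{\bold{e}}^{kl}=(-1)^{p(k)}\bar{\bold{e}}_{kl}$ to pull the scalar $(-1)^{p(k)}$ out, reducing the computation to $(-1)^{p(k)}\{\bar{\bold{e}}_{i,i+1}{}_\chi\bar{\bold{e}}_{kl}\}$. Applying \eqref{Eqn:affine} with $K=1$ splits this into a bracket term $(-1)^{p(\bold{e}_{i,i+1})p(\bar{\bold{e}}_{kl})}\,\overline{[\bold{e}_{i,i+1},\bold{e}_{kl}]}$ and a central term $\chi\,(\bold{e}_{i,i+1}|\bold{e}_{kl})$, both still carrying the overall factor $(-1)^{p(k)}$. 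Substituting the $\mathfrak{gl}$ relations of \eqref{Eqn:gl(m|n)} produces the two Kronecker-delta contributions $\delta_{i+1,k}\bar{\bold{e}}_{il}$ and $-(-1)^{(p(i)+p(i+1))(p(k)+p(l))}\delta_{i,l}\bar{\bold{e}}_{k,i+1}$, together with the central piece $\chi\,\delta_{i+1,k}\delta_{i,l}(-1)^{p(i)}$. The remaining work is pure sign bookkeeping: on the support of each delta one sets $k=i+1$ (resp. $l=i$), inserts $p(\bar{\bold{e}}_{kl})\equiv 1+p(k)+p(l)$, $p(\bold{e}_{i,i+1})=1$, and $(p(i)+p(i+1))(p(k)+p(l))\equiv p(k)+p(l)$, and rewrites every $(-1)^{p(\cdot)}$ as $(-1)^{\cdot+1}$. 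Adding exponents modulo $2$ collapses the first term to $(-1)^l\bar{\bold{e}}_{il}$, the second to $-(-1)^k\bar{\bold{e}}_{k,i+1}$, and the central term to $-\chi$, which is exactly the claimed formula.

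For part (2), I would observe that $\bold{e}_{i,i+1}\in\g_{1/2}\subset\n$, so $\bar{\bold{e}}_{i,i+1}-(f|\bold{e}_{i,i+1})$ is one of the generators of the ideal $\mathfrak{J}$ from Theorem-Definition \ref{prop:another definition}; hence $\bar{\bold{e}}_{i,i+1}\equiv (f|\bold{e}_{i,i+1}) \pmod{\mathfrak{J}}$. Writing $f=\sum_{j}\bold{e}_{j+1\,j}$ and using the form in \eqref{Eqn:gl(m|n)}, only the $j=i$ summand survives and gives $(f|\bold{e}_{i,i+1})=(-1)^{p(i+1)}=(-1)^i$. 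The second assertion is then immediate from $\bar{\bold{e}}^{i,i+1}=(-1)^{p(i)}\bar{\bold{e}}_{i,i+1}=(-1)^{i+1}(-1)^i=-1$ in $\mathfrak{J}$.

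The only genuine difficulty is the sign accounting in part (1): four independent sign sources, namely the parity shift $p(\bar a)=1+p(a)$ built into \eqref{Eqn:affine}, the conversion factor $(-1)^{p(k)}$, the structure-constant sign $(-1)^{(p(i)+p(j))(p(k)+p(l))}$, and the form's $(-1)^{p(i)}$, must be combined on each delta support, and the cancellations only succeed because of the opposite-parity property of consecutive indices particular to this matrix realization. Everything else is a routine substitution.
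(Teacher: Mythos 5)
Your computation is correct: the sign bookkeeping in (1) checks out (each delta's support forces $(-1)^{p(k)}=-(-1)^{p(i)}$ or $(-1)^{p(l)}=(-1)^{p(i)}$ as needed, and consecutive indices having opposite parity makes $\bold{e}_{i,i+1}$ odd), and (2) follows exactly as you say from $\bold{e}_{i,i+1}\in\mathfrak{n}$ and $(f|\bold{e}_{i,i+1})=(-1)^{p(i+1)}$. The paper gives no written proof beyond ``by direct computations,'' and your argument is precisely that computation.
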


\begin{theorem} \label{Thm:gl(n+1|n)}
Let $\eqref{row_det:gl(n+1|n)} = \sum_{i=0}^{2n+1} w_{2n+1-i} D^i$ for $w_i\in \mathcal{V}(\bar{\mathfrak{p}})$.  Then 
\[ \mathcal{W}(\bar{\g}, f)= \mathbb{C}[w_i^{(m)} | i\in I, \  m \in \mathbb{Z}_{\geq 0}].\]
\end{theorem}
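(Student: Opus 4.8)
The plan is to verify, for the coefficients $w_1,\dots,w_{2n+1}$ (with $w_0=1$ the leading coefficient), the two conditions of Proposition \ref{Prop:property_generator} and then invoke its last clause. I begin by recording the conformal weights. Since $\mathbf e_{ji}$ has $\textup{ad}\,x$-eigenvalue $(i-j)/2$, each entry $\mathcal A_{ij}=\delta_{ij}D+\bar{\mathbf e}^{ji}$ is $\Delta$-homogeneous of weight $(j-i+1)/2$ (the diagonal term $D$, of weight $1/2$, being consistent with this). Consequently every path product in \eqref{row_det:gl(n+1|n)} has weight $\sum_k(i_{k+1}-i_k)/2=(2n+1)/2$, so $rdet(\mathcal A)$ is $\Delta$-homogeneous of weight $(2n+1)/2$ and the coefficient $w_j$ of $D^{\,2n+1-j}$ has weight $j/2$. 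As $\Delta_{\bar v}=\tfrac12-j_v$, this is exactly the weight carried by the grade $(1-j)/2$ subspace of $\bar{\g}^f$.

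The heart of the argument is membership: $w_j\in\mathcal W(\bar\g,f)$ for all $j$. By Corollary \ref{Cor2} this is equivalent to $\textup{ad}_\chi^{\mathfrak J}a\,(rdet(\mathcal A))=0$ for every $a\in\n$. Since $\n=\bigoplus_{i>0}\g_i$ is generated as a Lie algebra by the root vectors $\mathbf e_{i,i+1}$ ($1\le i\le 2n$) spanning $\g_{1/2}$, and since the defining invariance condition is compatible with the bracket of $\n$ (a consequence of the Jacobi identity together with $(f\,|\,[\n,\n])=0$, which holds here by the grading), it suffices to treat $a=\mathbf e_{i,i+1}$. For such $a$ I would expand $\textup{ad}_\chi^{\mathfrak J}\mathbf e_{i,i+1}(rdet(\mathcal A))$ by distributing the action over each path product using Corollary \ref{Cor1} (as a signed, $(\chi+D)$-shifted derivation), and then evaluate each one-factor bracket with Lemma \ref{Lemma:gl}(1). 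By that lemma only two kinds of factor respond: a raising factor $\bar{\mathbf e}^{i+1,\,i_k+1}$, whose row index drops from $i+1$ to $i$, and a lowering factor $\bar{\mathbf e}^{i_{k+1},\,i}$, whose column index rises from $i$ to $i+1$, together with a central $-\chi$ term in the coincident subdiagonal slot.

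The main obstacle is organizing the ensuing cancellation. After applying $\widetilde\pi$, the subdiagonal entries produced by these brackets become the constants of Lemma \ref{Lemma:gl}(2), and I expect a raising contribution at a factor straddling the site $i+1$ to match, up to sign, the lowering contribution at the neighbouring site of the path obtained by moving a single index, so that the two annihilate; the $(\chi+D)$-shifts generated by Corollary \ref{Cor1} should account exactly for the stray central $-\chi$ terms. Conceptually this is the infinitesimal form of the invariance of $rdet(\mathcal A)$ under the unipotent gauge action of $\n$ on the Lax matrix $\mathcal A=D+\sum_{i\le j}\mathbf e_{ij}\bar{\mathbf e}^{ji}-f$, which would make every $D$-coefficient gauge invariant, hence an element of $\mathcal W(\bar\g,f)$. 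I would carry out the bookkeeping by grouping the paths according to how they enter and leave the pair of sites $\{i,i+1\}$; making every term pair off is the delicate point.

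Finally I would assemble the conclusion. Once membership is known, the bare ($D$-free) linear part of each $w_j$ lies in $\bar{\g}^f$; indeed this holds for every element of $\mathcal W(\bar\g,f)$ by the structure described in Proposition \ref{Prop:property_generator}. Isolating that part from the single-off-diagonal-step terms of \eqref{row_det:gl(n+1|n)} yields a grade $(1-j)/2$ element of $\bar\g^f\cap\bar{\mathfrak p}$, which is nonzero because the extreme case $j=2n+1$ produces $\bar{\mathbf e}^{2n+1,1}$ and in general the governing coefficient is a unit. Since $w_1,\dots,w_{2n+1}$ have the pairwise distinct weights $1/2,1,\dots,(2n+1)/2$, their linear parts lie in distinct one-dimensional weight spaces and hence form a basis $\{\bar v_j\}$ of $\bar\g^f$ with $v_j\in\mathfrak p$. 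This gives conditions (i) and (ii) of Proposition \ref{Prop:property_generator}, whose final clause then shows that $\{w_1,\dots,w_{2n+1}\}$ freely generates $\mathcal W(\bar\g,f)$, completing the proof.
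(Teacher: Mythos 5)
Your proposal is correct and follows essentially the same route as the paper: membership via Corollary \ref{Cor2} by checking $\textup{ad}_\chi^{\mathfrak{J}}\,\bar{\mathbf{e}}_{i,i+1}(rdet(\mathcal{A}))=0$ using Lemma \ref{Lemma:gl} and the derivation property of Corollary \ref{Cor1}, followed by free generation via the weight count $\Delta_{w_j}=\tfrac{j}{2}$ and Proposition \ref{Prop:property_generator}. The cancellation you flag as the delicate point is precisely what the paper's displayed identities implement, pairing $\mathcal{A}_{ik}$ with $\mathcal{A}_{ii}\mathcal{A}_{i+1,k}$, $\mathcal{A}_{l,i+1}$ with $\mathcal{A}_{li}\mathcal{A}_{i+1,i+1}$, and $\mathcal{A}_{i,i+1}$ with $\mathcal{A}_{ii}\mathcal{A}_{i+1,i+1}$, exactly as you predicted by grouping paths according to how they traverse the sites $\{i,i+1\}$.
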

\begin{proof}
In order to show that $w_i^{(m)} \in \mathcal{W}(\bar{\g}, f)$ for all $i\in I$ and $m\in \mathbb{Z}_{\geq 0}$, 
it is enough to check that $\text{ad}_\chi^{\mathfrak{J}} \, \bold{\bar{e}}_{i, i+1}(rdet (\mathcal{A}))=0$. Observe that
\begin{equation*}
\begin{aligned}
& \text{ad}_\chi^{\mathfrak{J}} \, \bold{\bar{e}}_{i, i+1} (\mathcal{A}_{i k})= \mathcal{A}_{i+1,k}, \quad
 \text{ad}_\chi^{\mathfrak{J}} \, \bold{\bar{e}}_{i, i+1} (\mathcal{A}_{i i} \mathcal{A}_{i+1, k})= -\mathcal{A}_{i+1, k} \ \text{ for } \  k>i+1 ,  \\
& \text{ad}_\chi^{\mathfrak{J}} \, \bold{\bar{e}}_{i, i+1} (\mathcal{A}_{l, i+1})= (-1)^{l+i+1} \mathcal{A}_{li}, \quad
\text{ad}_\chi^{\mathfrak{J}} \, \bold{\bar{e}}_{i, i+1} (\mathcal{A}_{li} \mathcal{A}_{i+1, i+1})= (-1)^{l+i}\mathcal{A}_{li} \ \text{ for } \  l<i.
\end{aligned}
\end{equation*}
In addition, 
\begin{equation}
\begin{aligned}
&  \text{ad}_\chi^{\mathfrak{J}} \, \bold{\bar{e}}_{i, i+1}(\mathcal{A}_{i, i+1}) = (-1)^i\bar{\bold{e}}_{ii}+(-1)^i \bar{\bold{e}}_{i+1,i+1} -\chi,\\
& \text{ad}_\chi^{\mathfrak{J}} \, \bold{\bar{e}}_{i, i+1}(\mathcal{A}_{i, i}\mathcal{A}_{i+1, i+1}) = -(-1)^i\bar{\bold{e}}_{ii}-(-1)^i \bar{\bold{e}}_{i+1,i+1}+\chi.
\end{aligned}
\end{equation}
By Corollary \ref{Cor2}, we can see $w_i \in  \mathcal{W}(\bar{\g}, f)$ for $i\in I$.

In order to see $\{w_i|i\in I\}$ freely generates $\mathcal{W}(\bar{\g}, f)$, 
observe that $\Delta_{w_i}=\frac{i}{2}$. Moreover,  by direct computations, one can check that there is a basis $\{v_i |i\in I\}$ of $\g^f$ which let the set 
$\{w_1, \cdots, w_{2n+1}\}$ satisfy (i) and (ii) in Proposition \ref{Prop:property_generator}.
\end{proof}

\begin{remark}
In order to find generators of $\mathcal{W}(\bar{\g}, f,k)$, one can just replace $D$ by $kD$ in the matrix $\mathcal{A}$ and compute the row determinant.
\end{remark}


\subsection{$\mathcal{W}(\overline{\mathfrak{sl}}(n+1|n), f)$} \ 

Let $\g= \mathfrak{sl}(n+1|n)$. \label{Sec:sl}
 Recall  $\g=\{a\in \mathfrak{gl}(n+1|n)| \textit{str}(a)=0\}$. The odd principal nilpotent element $f= \sum_{i=1}^{2n} \bold{e}_{n+1, n}$ in $\g$  induces the $\frac{\mathbb{Z}}{2}$-grading and 
\begin{equation}
\begin{aligned}
&\mathfrak{n}:= \g_{> 0} = \mathfrak{gl}(n+1|n)_{> 0},\\
& \mathfrak{p}:= \g_{\leq 0} \text{ and }  {\mathfrak{q}}:=\mathfrak{p} \oplus \mathbb{C} I_{(n+1|n)} = \mathfrak{gl}(n+1|n)_{\leq 0}.
\end{aligned}
\end{equation}
where $I_{(n+1|n)}= \sum_{i=1}^{2n+1} \bold{e}_{ii}$.
Consider the linear map
$
 \mathfrak{gl}(n+1|n)  \to \g
$
defined by $A\mapsto A-\text{str}(A) I_{(n+1|n)}$  and let 
\begin{equation} \label{pi_V}
\pi_\mathfrak{sl}: \mathcal{V}(\bar{\mathfrak{q}}):= S(\mathbb{C}[D] \otimes \bar{\mathfrak{q}})\to  \mathcal{V}(\bar{\mathfrak{p}}):=S(\mathbb{C}[D] \otimes \bar{\mathfrak{p}})
\end{equation}
be the induced differential algebra homomorphism.

\begin{theorem}\label{Thm:sl(n+1|n)}
Let $w_1, w_2, \cdots, w_{2n+1} \in \mathcal{V}(\bar{\mathfrak{q}})$ be the elements in Theorem~\ref{Thm:gl(n+1|n)}. Then 
\[
\mathcal{W}(\bar{\g}, f)= \mathbb{C}[\pi_{\mathfrak{sl}}(w_i^{(m)}) | i=2, \cdots, 2n+1, \  m \in \mathbb{Z}_{\geq 0}] \subset \mathcal{V}(\bar{\mathfrak{p}}).
\]
\end{theorem}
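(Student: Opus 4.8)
The plan is to transfer the result for $\mathfrak{gl}(n+1|n)$ from Theorem~\ref{Thm:gl(n+1|n)} along the projection $\rho\colon \mathfrak{gl}(n+1|n)\to\g$, $A\mapsto A-\mathrm{str}(A)\,I_{(n+1|n)}$; write $\g'\seteq\mathfrak{gl}(n+1|n)$. First I would record two preliminary facts. Since $\rho$ restricts to the map $\mathfrak{q}\to\mathfrak{p}$ that induces $\pi_{\mathfrak{sl}}$, and since $\mathrm{str}(I_{(n+1|n)})=1$, the map $\pi_{\mathfrak{sl}}$ is simply the differential algebra homomorphism killing $\bar{I}_{(n+1|n)}$ and fixing $\bar{\mathfrak{p}}$. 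A short computation of the $D^{2n}$-coefficient of $rdet(\mathcal{A})$ (only the all-diagonal term $\prod_i(D+\bar{\mathbf{e}}^{ii})$ contributes, and normal-ordering the odd entries produces signs $(-1)^{i-1}$ that cancel the $(-1)^{p(i)}$) gives $w_1=\bar{I}_{(n+1|n)}$, whence $\pi_{\mathfrak{sl}}(w_1)=0$; this explains why the index $i=1$ is dropped from the generating set.

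The heart of the argument is a bracket-intertwining property. I would extend $\rho$ to the unique differential algebra homomorphism $\Pi\colon \mathcal{V}(\bar{\g}')\to\mathcal{V}(\bar{\g})$ with $\bar{a}\mapsto\overline{\rho(a)}$, so that $\Pi|_{\mathcal{V}(\bar{\mathfrak{q}})}=\pi_{\mathfrak{sl}}$. The claim is that for every $a\in\n$ and every $v\in\mathcal{V}(\bar{\g}')$,
\[
\{\,\bar{a}\,{}_\chi\,\Pi(v)\,\}^{\mathfrak{sl}}=\Pi\bigl(\{\,\bar{a}\,{}_\chi\,v\,\}^{\mathfrak{gl}}\bigr).
\]
It suffices to verify this on generators $v=\bar{b}$ and then propagate it by the Leibniz rule and sesquilinearity, since both sides are $\chi$-derivations in the second slot and $\Pi$ commutes with $D$ and with multiplication by $\chi$. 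On generators the identity reduces to the two equalities $[a,\rho(b)]=[a,b]$ (as $I_{(n+1|n)}$ is central) and $(a|\rho(b))=\mathrm{str}(ab)-\mathrm{str}(b)\mathrm{str}(a)=(a|b)$; both use $\mathrm{str}(a)=0$, which holds because every $a\in\n=\g'_{>0}$ is strictly upper triangular. Taking $v=w_i$ yields $\{\bar{a}{}_\chi\pi_{\mathfrak{sl}}(w_i)\}^{\mathfrak{sl}}=\Pi(\{\bar{a}{}_\chi w_i\}^{\mathfrak{gl}})$, and since $\Pi$ sends each generator $\bar{n}-(f|n)$ of $\mathfrak{I}^{\mathfrak{gl}}$ to the corresponding generator of $\mathfrak{I}^{\mathfrak{sl}}$, we get $\Pi(\mathfrak{I}^{\mathfrak{gl}})\subseteq\mathfrak{I}^{\mathfrak{sl}}$. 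As $w_i\in\mathcal{W}(\bar{\g}',f)$, the right-hand side lies in $\mathbb{C}[\chi]\otimes\mathfrak{I}^{\mathfrak{sl}}$, so $\pi_{\mathfrak{sl}}(w_i)\in\mathcal{W}(\bar{\g},f)$ for all $i$.

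For freeness I would invoke Proposition~\ref{Prop:property_generator}. The homomorphism $\Pi$ preserves the grading $\Delta$ (because $\rho$ preserves the $\mathrm{ad}\,x$-eigenvalue and $\Delta_{\bar{I}}=\tfrac12$), the polynomial degree, and the $D$-degree; hence $\pi_{\mathfrak{sl}}(w_i)$ has $\Delta=i/2$ and linear part $\overline{\rho(v_i)}$, where $w_i=\bar{v}_i+(\text{higher order})$ for the basis $\{v_i\}$ of $(\g')^f$ produced in Theorem~\ref{Thm:gl(n+1|n)}. Since $I_{(n+1|n)}$ is central we have $(\g')^f=\g^f\oplus\mathbb{C}I_{(n+1|n)}$, and $\rho|_{(\g')^f}$ is surjective onto $\g^f$ with kernel $\mathbb{C}I_{(n+1|n)}=\mathbb{C}v_1$. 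Therefore $\{\rho(v_i)\mid i=2,\dots,2n+1\}$ is a basis of $\g^f$, and the elements $\{\pi_{\mathfrak{sl}}(w_i)\mid i\ge 2\}$ satisfy conditions (i) and (ii) of Proposition~\ref{Prop:property_generator}; they consequently freely generate $\mathcal{W}(\bar{\g},f)$.

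The main obstacle is the intertwining identity itself: $\Pi$ is \emph{not} a morphism of SUSY PVAs, because for homogeneous $a,b$ one finds $\{\Pi(\bar{a}){}_\chi\Pi(\bar{b})\}^{\mathfrak{sl}}-\Pi(\{\bar{a}{}_\chi\bar{b}\}^{\mathfrak{gl}})=-\chi\,\mathrm{str}(a)\mathrm{str}(b)$, reflecting that the $\mathfrak{sl}$-form and the $\mathfrak{gl}$-form disagree on the supertrace direction. The crucial and only use of the hypothesis is that this discrepancy vanishes as soon as $a\in\n$, which is exactly the family of elements occurring in the defining condition \eqref{second_W}. Care is therefore needed to confine all bracket computations to $a\in\n$ and to check that the extension from generators to all of $\mathcal{V}(\bar{\g}')$, via Leibniz and sesquilinearity, genuinely respects $D$ and multiplication by $\chi$.
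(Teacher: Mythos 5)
Your proposal is correct and follows essentially the same route as the paper's (much terser) proof: the paper likewise reduces everything to the centrality of $I_{(n+1|n)}$ — which is what makes the brackets with $\bar{\n}$ insensitive to the shift $A\mapsto A-\mathrm{str}(A)I_{(n+1|n)}$, since $\mathrm{str}(a)=0$ for $a\in\n$ — together with the observation that $\pi_{\mathfrak{sl}}(w_1)=0$. You have simply made explicit the intertwining identity, the inclusion $\Pi(\mathfrak{I}^{\mathfrak{gl}})\subseteq\mathfrak{I}^{\mathfrak{sl}}$, and the verification of the hypotheses of Proposition \ref{Prop:property_generator}, all of which the paper leaves implicit.
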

\begin{proof}
Since $I_{(n+1|n)}$ is a central element in $\mathfrak{gl}(n+1|n)$,  we have $\pi_{\mathfrak{sl}}(w_i) \in \mathcal{W}(\bar{\g}, f)$ for $i\in I$. In addition, since $\pi_{\mathfrak{sl}}(w_1)=0$, we can prove the theorem. 
\end{proof}

\subsection{$\mathcal{W}(\mathfrak{gl}(n-1|n),f)$ and $\mathcal{W}(\mathfrak{sl}(n-1|n),f)$} \label{Sec:gl(n-1|n)}\ 

Let $n\geq 2.$
 It is well known that $\mathfrak{gl}(n-1|n) \simeq \mathfrak{gl}(n|n-1)$ and $\mathfrak{sl}(n-1|n)\simeq \mathfrak{sl}(n|n-1)$. Hence, the results in  Section \ref{Sec:gl(n-1|n)} follow from those in Section \ref{Subsec:gl} and \ref{Sec:sl}.  However, for the use in Section \ref{sec:ospodd}, we provide clear statements.

Recall the matrix presentation of $\mathfrak{gl}(n-1|n)$ and $f$ in Example \ref{gl(n-1)}.
Consider the $(2n-1) \times (2n-1)$ matrix
 \begin{equation} \label{matrix:gl(n-1|n)}
 \mathcal{A}= \sum_{ i,j\in I,\, i \leq j} \bold{e}_{ij}\otimes  (\delta_{ij} D+\bar{\bold{e}}^{ji}) -f\otimes 1
\end{equation}
whose entries are in $\mathcal{V}(\overline{\mathfrak{gl}}(n-1|n))[D]$.

\begin{theorem}\label{Thm:gl(n-1|n), sl(n-1|n)}
For the  matrix $\mathcal{A}$ in \eqref{matrix:gl(n-1|n)}, write its row determinant as $rdet(\mathcal{A})= \sum_{i=0}^{2n-1} w_{2n-1-i} D^i$ for $w_i\in \mathcal{V}(\overline{\mathfrak{gl}}(n-1|n))$. Then
\begin{enumerate}
\item  $\mathcal{W}(\overline{\mathfrak{gl}}(n-1|n), f)= \mathbb{C}[w_i^{(m)} | i\in I, \  m \in \mathbb{Z}_{\geq 0}]$,
\item $\mathcal{W}(\overline{\mathfrak{sl}}(n-1|n), f)= \mathbb{C}[\pi_{\mathfrak{sl}}(w_i^{(m)}) | i=2, \cdots, 2n-1, \  m \in \mathbb{Z}_{\geq 0}]$, where $\pi_{\mathfrak{sl}}: \mathcal{V}(\overline{\mathfrak{gl}}(n-1|n)) \to \mathcal{V}(\overline{\mathfrak{sl}}(n-1|n))$ is the differential algebra homomorphism defined by $\bar{A}\mapsto \bar{A}- \text{str}(A) \bar{I}_{(n-1|n)}$ for  $A\in \g$ and $I_{(n-1|n)}=\sum_{i=1}^{2n-1} \bold{e}_{ii}$.
\end{enumerate}
\end{theorem}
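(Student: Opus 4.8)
The plan is to run the argument of Theorem~\ref{Thm:gl(n+1|n)} almost verbatim, the single structural change being the parity convention of Example~\ref{gl(n-1)}: here $p(i)=0$ precisely when $i$ is even, so that $\bold{e}^{ij}=(-1)^{p(i)}\bold{e}_{ij}=(-1)^{i}\bold{e}_{ij}$ rather than $(-1)^{i+1}\bold{e}_{ij}$. First I would record the analogue of Lemma~\ref{Lemma:gl}. The bracket relation of part~(1) holds in the same form once the explicit signs are recomputed from $p(i)\equiv i\ (\mathrm{mod}\ 2)$ via \eqref{Eqn:gl(m|n)} and \eqref{Eqn:affine}. For part~(2) one finds $\bar{\bold{e}}_{i,i+1}=(-1)^{i+1}$ in $\mathfrak{J}$, since $\bar{\bold{e}}_{i,i+1}\equiv (f|\bold{e}_{i,i+1})=(-1)^{p(i+1)}$ modulo $\mathfrak{J}$, while the crucial identity $\bar{\bold{e}}^{i,i+1}=-1$ in $\mathfrak{J}$ survives unchanged: the factor $(-1)^{p(i)}$ from $\bold{e}^{i,i+1}$ combines with $(-1)^{p(i+1)}$ to give $-1$ regardless of the convention, which is exactly what forces the subdiagonal of $\mathcal{A}$ to reduce to $-1$ modulo $\mathfrak{J}$.

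With this lemma in hand, the computation of $\text{ad}_\chi^{\mathfrak{J}}\,\bar{\bold{e}}_{i,i+1}$ on the entries $\mathcal{A}_{ik}$, $\mathcal{A}_{ii}\mathcal{A}_{i+1,k}$, $\mathcal{A}_{l,i+1}$ and $\mathcal{A}_{li}\mathcal{A}_{i+1,i+1}$ reproduces the four identities displayed in the proof of Theorem~\ref{Thm:gl(n+1|n)}, with the sign factors $(-1)^{l+i+1}$ and $(-1)^{l+i}$ updated according to the new parities. These are precisely the terms that telescope in the expansion \eqref{row_det:gl(n+1|n)} of $rdet(\mathcal{A})$, and the cancellation is dictated by the same underlying $\mathfrak{gl}$ bracket, hence is insensitive to the change of convention. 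Corollary~\ref{Cor2} then yields $\text{ad}_\chi^{\mathfrak{J}}\,\bar{\bold{e}}_{i,i+1}(rdet(\mathcal{A}))=0$ for every admissible $i$, so that $w_i^{(m)}\in\mathcal{W}(\overline{\mathfrak{gl}}(n-1|n),f)$, which is one inclusion of part~(1).

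For free generation I would read off the conformal weights directly from the $\Delta$-homogeneity of $rdet(\mathcal{A})$: since $D$ carries weight $\tfrac12$ and the entry $\mathcal{A}_{jk}$ carries weight $\tfrac12(k-j+1)$, every monomial of $rdet(\mathcal{A})$ has total weight $\tfrac12(2n-1)$, whence $\Delta_{w_i}=\tfrac{i}{2}$. A direct inspection of $\g^f=\ker(\text{ad}\,f)\subset\mathfrak{p}$ for $\mathfrak{gl}(n-1|n)$ then produces a basis $\{v_i\mid i\in I\}$ with $\Delta_{\bar v_i}=\tfrac{i}{2}$ whose linear, $D$-free part coincides with the leading term of each $w_i$; conditions (i) and (ii) of Proposition~\ref{Prop:property_generator} hold, so the $w_i$ freely generate and part~(1) follows.

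Part~(2) is the $\mathfrak{sl}$ reduction and is proved exactly as in Theorem~\ref{Thm:sl(n+1|n)}: the element $I_{(n-1|n)}$ is central in $\mathfrak{gl}(n-1|n)$, so $\pi_{\mathfrak{sl}}$ carries $\mathcal{W}(\overline{\mathfrak{gl}}(n-1|n),f)$ into $\mathcal{W}(\overline{\mathfrak{sl}}(n-1|n),f)$, while $\pi_{\mathfrak{sl}}(w_1)=0$ eliminates the one redundant generator, leaving the free generating set indexed by $i=2,\dots,2n-1$. As a consistency check one may instead invoke $\mathfrak{gl}(n-1|n)\simeq\mathfrak{gl}(n|n-1)$ and transport Theorem~\ref{Thm:gl(n+1|n)}; the only genuinely delicate point in the whole argument is pinning down the signs in the analogue of Lemma~\ref{Lemma:gl}, after which every subsequent step is a transcription of Sections~\ref{Subsec:gl} and~\ref{Sec:sl}.
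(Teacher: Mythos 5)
Your proposal is correct and follows exactly the route the paper intends: the paper's own proof of this theorem consists of the single remark that part (1) is analogous to Theorem \ref{Thm:gl(n+1|n)} and part (2) follows as in Theorem \ref{Thm:sl(n+1|n)}, which is precisely the argument you carry out. Your explicit tracking of the sign changes forced by the parity convention of Example \ref{gl(n-1)} (in particular that $\bar{\bold{e}}^{i,i+1}=-1$ in $\mathfrak{J}$ is convention-independent, while $\bar{\bold{e}}_{i,i+1}$ flips to $(-1)^{i+1}$) supplies the details the paper leaves implicit, and the reduction to $\mathfrak{sl}$ via centrality of $I_{(n-1|n)}$ and $\pi_{\mathfrak{sl}}(w_1)=0$ matches the paper's treatment.
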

\begin{proof}
The proof of (1) is analogous to the proof of Theorem \ref{Thm:gl(n+1|n)}. Also, as in Theorem \ref{Thm:sl(n+1|n)}, (2) follows from (1) for the same reason.
\end{proof}

\section{Generators of $\mathcal{W}(\overline{\frak{osp}}(2n\pm1|2n), f)$ \label{sec:ospodd}} 
In this section, we find a generating set of the SUSY classical $W$-algebra associated with $\frak{osp}(2n\pm1|2n)$ and its principal nilpotent $f$. We use the matrix presentation of  $\frak{osp}(2n\pm1|2n)$ and
the  odd principal nilpotent element $f$ 
given in Example \ref{osp_4n pm1}.

\subsection{$\mathcal{W}(\overline{\mathfrak{osp}}(2n+1|2n), f)$} \ 

\label{Sec:osp(2n+1|2n)}
 Let $\mathfrak{gl}:= \mathfrak{gl}(2n+1|2n)$ and  $\g= \mathfrak{osp}(2n+1|2n)$. 
Consider  the automorphism $\check{\theta}= -\theta$  of $\mathfrak{gl}$, built from  the automorphism $\theta$ in \eqref{Theta}.

As vector superspaces, we have 
\begin{equation}
\mathfrak{gl}= \mathfrak{gl}^{\theta}\oplus \mathfrak{gl}^{\check{\theta}}=\g \oplus \mathfrak{gl}^{\check{\theta}},
\end{equation}
where   $\mathfrak{gl}^{\check{\theta}}$ is spanned by $\bold{e}_{ij}+\check{\theta}(\bold{e}_{ij})$ for $i,j\in I$. By direct computations, one can check the following lemma.

\begin{lemma}\label{Lem:osp} 
(1)  $[\g, \mathfrak{gl}^{\check{\theta}}] \subset  \mathfrak{gl}^{\check{\theta}}$ \qquad (2) $[ \mathfrak{gl}^{\check{\theta}}, \mathfrak{gl}^{\check{\theta}}] \subset\g$.
\end{lemma}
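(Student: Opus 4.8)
The plan is to recognize that the pair $(\g, \mathfrak{gl}^{\check\theta})$ is nothing but the eigenspace decomposition of $\mathfrak{gl}$ under the involution $\theta$ from \eqref{Theta}, so that both bracket relations are instances of the standard symmetric-pair phenomenon and follow from a two-line sign computation rather than from the term-by-term bracket calculation suggested by ``direct computations.'' The three facts I would isolate are: (a) $\theta$ is an automorphism of the Lie superalgebra $\mathfrak{gl}$, i.e. $\theta([a,b]) = [\theta(a), \theta(b)]$ for all $a,b \in \mathfrak{gl}$, which was already asserted when $\theta$ was introduced in Example \ref{osp_4n pm1}; (b) $\theta^2 = \mathrm{id}$, so $\mathfrak{gl}$ splits into the $(+1)$- and $(-1)$-eigenspaces of $\theta$; and (c) under this splitting $\g = \mathfrak{gl}^\theta$ is exactly the $(+1)$-eigenspace $\{x \mid \theta(x) = x\}$ (consistent with $\g$ being spanned by the fixed vectors $F_{ij} = \bold{e}_{ij} + \theta(\bold{e}_{ij})$), while $\mathfrak{gl}^{\check\theta}$, being spanned by $\bold{e}_{ij} + \check\theta(\bold{e}_{ij})$ with $\check\theta = -\theta$, is the $(-1)$-eigenspace $\{x \mid \theta(x) = -x\}$. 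The direct-sum decomposition $\mathfrak{gl} = \g \oplus \mathfrak{gl}^{\check\theta}$ stated just before the lemma can be taken as given.

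Granting these, I would prove (1) by taking $a \in \g$ and $b \in \mathfrak{gl}^{\check\theta}$, so $\theta(a) = a$ and $\theta(b) = -b$, and computing
\[
\theta([a,b]) = [\theta(a), \theta(b)] = [a, -b] = -[a,b],
\]
which shows $[a,b]$ lies in the $(-1)$-eigenspace, i.e. $[a,b] \in \mathfrak{gl}^{\check\theta}$. For (2) I would take $a, b \in \mathfrak{gl}^{\check\theta}$, so $\theta(a) = -a$ and $\theta(b) = -b$, and compute
\[
\theta([a,b]) = [\theta(a), \theta(b)] = [-a, -b] = [a,b],
\]
where the two signs cancel because the bracket is bilinear; hence $[a,b]$ lies in the $(+1)$-eigenspace $\mathfrak{gl}^\theta = \g$. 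Since the displayed identities hold for all homogeneous $a,b$ and both sides are bilinear, the inclusions extend to all of the respective subspaces.

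The one genuinely computational point, and the main obstacle, is establishing (a) and (b)—that $\theta$ is an \emph{involutive} automorphism. For involutivity I would compose \eqref{Theta} with itself: applying $\theta$ twice to $\bold{e}_{ij}$ sends it to a multiple of $\bold{e}_{i''j''}$, and since $k'' = |I|+1-(|I|+1-k) = k$ one lands back on $\bold{e}_{ij}$; it then remains to verify that the accumulated sign $\tau(i,j)\,\tau(j',i')$ equals $1$, which is the sign-bookkeeping in the exponent $p(i)p(j) + p(j) + 1 + \delta_i + \delta_j$ using $p(k') = p(k)$ and $\delta_{k'}=\delta_k$ (or the compensating relations) for the parity/$\delta$ assignment of Example \ref{osp_4n pm1}. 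The automorphism property (a) is a similar, slightly longer, check comparing $\theta([\bold{e}_{ij},\bold{e}_{kl}])$ against $[\theta(\bold{e}_{ij}),\theta(\bold{e}_{kl})]$ via \eqref{Eqn:gl(m|n)}; alternatively, since $\theta$ is defined precisely so that $\mathfrak{osp}(M|2n) = \mathfrak{gl}^\theta$ arises as the invariance condition $a^{st}\mathcal{J} + \mathcal{J}a = 0$ in Example \ref{Ex: sl and osp}, one may appeal to the fact that this orthosymplectic condition cuts out a Lie sub-superalgebra, so the associated involution is automatically a bracket-preserving map. Once involutivity and the automorphism property are in hand, the eigenspace argument above requires no further computation.
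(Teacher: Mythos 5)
Your eigenspace argument is correct, and it takes a genuinely different (and more conceptual) route than the paper, which disposes of the lemma with the single phrase ``by direct computations,'' i.e.\ an implied term-by-term check of the brackets of the spanning elements $\bold{e}_{ij}+\theta(\bold{e}_{ij})$ and $\bold{e}_{ij}+\check{\theta}(\bold{e}_{ij})$ via \eqref{Eqn:gl(m|n)}. Your route reduces everything to the single fact that $\theta$ is an involutive automorphism, after which both inclusions are the standard $\mathbb{Z}/2\mathbb{Z}$-grading of a symmetric pair; this buys a two-line proof and an explanation of \emph{why} the lemma holds, at the cost of having to verify involutivity. On that point, one correction to your sign bookkeeping: it is \emph{not} true that $\delta_{k'}=\delta_k$ in Example \ref{osp_4n pm1} (e.g.\ for $M=2n+1$ one has $\delta_{2n+2}=1$ but $\delta_{(2n+2)'}=\delta_{2n}=0$); the correct compensating relation, which your parenthetical anticipates, is $\delta_k+\delta_{k'}\equiv p(k) \pmod{2}$, while $p(k')=p(k)$ does hold. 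Substituting into the exponent of $\tau(i,j)\,\tau(j',i')$ gives
\[
2p(i)p(j)+p(i)+p(j)+2+(\delta_i+\delta_{i'})+(\delta_j+\delta_{j'})\;\equiv\;p(i)+p(j)+p(i)+p(j)\;\equiv\;0 \pmod{2},
\]
so $\theta^2=\mathrm{id}$ indeed holds and the eigenspace decomposition is as you describe. One caveat on your proposed shortcut for the automorphism property: knowing that the condition $a^{st}\mathcal{J}+\mathcal{J}a=0$ cuts out a subalgebra only tells you the fixed-point set is closed under brackets, not that $\theta$ itself preserves brackets; the clean justification is either the paper's own assertion in Example \ref{osp_4n pm1} that $\theta$ is an automorphism, or the identity $([a,b])^{st}=-[a^{st},b^{st}]$ applied to $\theta(a)=-\mathcal{J}^{-1}a^{st}\mathcal{J}$. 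With these points in place your argument is complete.
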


Consider the $\mathbb{C}[D]$-module
\begin{equation}
R= (\mathbb{C}[D] \otimes \overline{\g})\oplus (\mathbb{C}[D] \otimes \overline{\mathfrak{gl}}^{\check{\theta}})^{\otimes 2}.
\end{equation}
By Lemma \ref{Lem:osp}, the supersymmetric algebra $S(R)$ is a SUSY PVA  
endowed with the $\chi$-bracket induced from the bracket of SUSY affine PVA $\mathcal{V}(\overline{\mathfrak{gl}})$:
\[ [\bar{a}{}_\chi \bar{b}] =(-1)^{p(a)p(b)+p(a)}\overline{[a,b]} + \chi\,(a|b). \]
Then as $\mathbb{C}[D]$-modules,
\begin{equation} \label{decomposition}
S(\mathbb{C}[D]\otimes\overline{\mathfrak{gl}})=S(R) \oplus (\mathbb{C}[D] \otimes \overline{\mathfrak{gl}}^{\check{\theta}}) S(R).
\end{equation}
Consider the composition  
\begin{equation} \label{kappa}
\kappa: S(\mathbb{C}[D]\otimes \overline{\mathfrak{gl}}) \xrightarrow{\pi_1} S(R) \xrightarrow{\psi} S(R) /\mathcal{M} \simeq S(\mathbb{C}[D] \otimes \bar{\g})
\end{equation}
of the projection map $\pi_1$  via \eqref{decomposition} and the canonical quotient map $\psi$
where $\mathcal{M}$ is the differential algebra ideal generated by 
$\  (\mathbb{C}[D] \otimes\overline{\mathfrak{gl}}^{\check{\theta}})^{\otimes 2}.$
Since $\mathcal{M}$ is also a SUSY PVA ideal of $S(R)$, the map $\psi$ is a SUSY PVA homomorphism.

%
%

\begin{lemma} \label{Lemma:W(gl,f_osp)} 
Consider the matrix 
\begin{equation} \label{A:osp}
\  \  \mathcal{A}:= \sum_{\substack{i\leq j \\ \ i,j\in I}} \bold{e}_{ij} \otimes \left(\delta_{ij}D+\bar{\bold{e}}^{ji}\right) -f \otimes 1\in \mathfrak{gl}\otimes\mathcal{V}(\overline{\mathfrak{gl}})[D].
\end{equation}
If $rdet(\mathcal{A})= \sum_{i=0}^{4n+1} w_{4n+1-i} D^i$ for $w_1,w_2,\cdots,w_{4n+1}\in \mathcal{V}(\overline{\frak{gl}}_{\leq 0})$, then $w_1, w_2, \cdots, w_{4n+1}$ freely generate $\mathcal{W}(\overline{\mathfrak{gl}}, f)$. 
\end{lemma}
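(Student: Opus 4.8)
The plan is to recognize that this statement is nothing but Theorem~\ref{Thm:gl(n+1|n)} applied to $\mathfrak{gl}(2n+1|2n)$, and to deduce it directly from that theorem. First I would note that $\mathfrak{gl}=\mathfrak{gl}(2n+1|2n)$ is exactly the Lie superalgebra $\mathfrak{gl}(m+1|m)$ of Example~\ref{gl(n+1)} with $m=2n$: the index set is $I=\{1,\dots,4n+1\}$, the even part is indexed by the odd positions and the odd part by the even positions, and the odd principal nilpotent is $f=\sum_{i=1}^{4n}\bold{e}_{i+1\,i}$. Under the substitution $n\mapsto 2n$ the $(4n+1)\times(4n+1)$ matrix \eqref{A:osp} coincides entry-by-entry with the matrix \eqref{A_general}; its entries lie in $\mathcal{V}(\overline{\mathfrak{gl}}_{\leq 0})[D]$ because each $\bar{\bold{e}}^{ji}$ with $i\leq j$ comes from a lower-triangular $\bold{e}_{ji}\in\mathfrak{gl}_{\leq 0}$, and the induced $\frac{\mathbb{Z}}{2}$-grading together with the subalgebras $\mathfrak{n},\mathfrak{p}$ are the same as in Theorem~\ref{Thm:gl(n+1|n)}.

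It remains to check that the $\chi$-bracket does not change under this identification. The bracket used in this section, $[\bar{a}{}_\chi\bar{b}]=(-1)^{p(a)p(b)+p(a)}\overline{[a,b]}+\chi\,(a|b)$, is precisely the $k=1$ SUSY affine PVA bracket of Example~\ref{Ex:affine}, since $(-1)^{p(a)p(\bar{b})}=(-1)^{p(a)(1+p(b))}=(-1)^{p(a)p(b)+p(a)}$. Hence the entire argument of Theorem~\ref{Thm:gl(n+1|n)} transfers verbatim, and I would simply invoke it to conclude that the coefficients $w_1,\dots,w_{4n+1}$ of $rdet(\mathcal{A})$ lie in $\mathcal{W}(\overline{\mathfrak{gl}},f)$ and freely generate it.

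If a self-contained proof were wanted instead, I would reproduce the two halves of that argument. For membership I would apply Corollary~\ref{Cor2}, showing $\textup{ad}_\chi^{\mathfrak{J}}\,\bold{\bar{e}}_{i,i+1}\big(rdet(\mathcal{A})\big)=0$ for every $i\leq 4n$ by means of the explicit $\chi$-brackets of Lemma~\ref{Lemma:gl}, which themselves hold for $\mathfrak{gl}(m+1|m)$ with no change. For free generation I would use Proposition~\ref{Prop:property_generator}, verifying that $\Delta_{w_i}=\tfrac{i}{2}$ and that the $D$-free linear part of each $w_i$ realizes a basis of $\mathfrak{gl}^{f}$. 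Since the result is a pure relabeling of an already-established theorem, I expect no genuine obstacle; the only step demanding attention is confirming that the parity, grading, and bracket-normalization conventions all survive the substitution $n\mapsto 2n$ unchanged, which they do.
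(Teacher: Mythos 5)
The central identification in your proposal is false, and this is a genuine gap. In Section \ref{sec:ospodd} the nilpotent $f$ is the one of Example \ref{osp_4n pm1}, namely $f=\sum_{i=1}^{2n}F_{i+1\,i}$ with $F_{i+1\,i}=\bold{e}_{i+1\,i}+\tau(i+1,i)\,\bold{e}_{i'\,(i+1)'}$; expanded inside $\mathfrak{gl}(2n+1|2n)$ this reads $f=\sum_{j=1}^{4n}(-1)^{\delta_j}\bold{e}_{j+1\,j}$, so $n$ of the subdiagonal coefficients equal $-1$. It is \emph{not} the element $\sum_{i=1}^{4n}\bold{e}_{i+1\,i}$ of Example \ref{gl(n+1)} with $n\mapsto 2n$. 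Consequently: (i) the matrix \eqref{A:osp} does not coincide entry-by-entry with \eqref{A_general} --- its subdiagonal entries are $-(-1)^{\delta_j}$ rather than $-1$; (ii) the ideal $\mathfrak{J}$ is generated by $\bar{n}-(f|n)$ for \emph{this} $f$, so Lemma \ref{Lemma:gl}(2) does not hold ``with no change'': one finds $\bar{\bold{e}}^{i\,i+1}\equiv -(-1)^{\delta_i}$ in $\mathfrak{J}$; and (iii) the row determinant in the paper's proof of this lemma is the sign-twisted expansion $\sum (-1)^{n+\sum_{j\in J_N}\delta_j}\mathcal{A}_{i_0+1,i_1}\cdots\mathcal{A}_{i_N+1,i_{N+1}}$, not \eqref{row_det:gl(n+1|n)}. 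The twist is not cosmetic: in the verification that $\text{ad}_\chi^{\mathfrak{J}}\,\bar{\bold{e}}_{i,i+1}$ annihilates the row determinant, the contribution of a term containing $\mathcal{A}_{i,i+1}$ must cancel against that of the corresponding term containing $\mathcal{A}_{ii}\mathcal{A}_{i+1,i+1}$, and since $\text{ad}_\chi^{\mathfrak{J}}\,\bar{\bold{e}}_{i,i+1}(\mathcal{A}_{ii}\mathcal{A}_{i+1,i+1})$ carries an extra factor $(-1)^{\delta_i}$ relative to $\text{ad}_\chi^{\mathfrak{J}}\,\bar{\bold{e}}_{i,i+1}(\mathcal{A}_{i,i+1})$, the cancellation fails for the unsigned expansion whenever $\delta_i=1$. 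So Theorem \ref{Thm:gl(n+1|n)} cannot be invoked verbatim, and your closing assertion that the conventions ``survive the substitution $n\mapsto 2n$ unchanged'' is exactly the point that breaks.

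The paper's actual proof is parallel in structure to that of Theorem \ref{Thm:gl(n+1|n)} --- so your instinct about the strategy is right --- but it redoes all the $\text{ad}_\chi^{\mathfrak{J}}$ computations with the $\delta$-dependent signs and uses the twisted row determinant, then checks the hypotheses of Proposition \ref{Prop:property_generator} for a basis of $\mathfrak{gl}^{f}$. If you want to genuinely reduce to Theorem \ref{Thm:gl(n+1|n)} rather than recompute, you would need an explicit intertwining step, e.g.\ conjugation by a diagonal matrix $T=\mathrm{diag}(t_1,\dots,t_{4n+1})$ with $t_{j+1}/t_j=(-1)^{\delta_j}$, which normalizes the subdiagonal to $-1$ and identifies the twisted row determinant of $\mathcal{A}$ with the plain one of $T\mathcal{A}T^{-1}$, together with the corresponding automorphism of $\mathcal{V}(\overline{\mathfrak{gl}})$ matching the two ideals $\mathfrak{J}$. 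That bridge (or the direct recomputation) is what is missing from your argument.
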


\begin{proof}

 The row determinant of the matrix $\mathcal{A}$ is
 \begin{equation}
  rdet(\mathcal{A}):=\sum_{N=0}^{4n} \sum_{\substack{0=i_0 < i_1 < \cdots \\ \cdots < i_{N+1}= 4n+1}} (-1)^{n+\sum_{j\in J_N} \delta_j} \mathcal{A}_{i_0+1, i_1} \mathcal{A}_{i_1+1, i_2} \cdots \mathcal{A}_{i_N+1, i_{N+1}},
\end{equation}
where $J_N= \{i_1, i_2, \cdots, i_{N+1} \}$, $\mathcal{A}_{ij}= \delta_{ij}D+ \bar{\bold{e}}^{ji}$ for $j\geq i$ and $\bold{e}^{ji}= (-1)^{j+1} \bold{e}_{ji}$. Let $\mathfrak{J}$ be the differential algebra ideal of $\mathcal{V}(\overline{\mathfrak{gl}})$ generated by $\bar{n}-(f|n)$ for $n\in \mathfrak{gl}_{>0}$.
Observe that
\begin{equation*}
\begin{aligned}
& \text{ad}_\chi^{\mathfrak{J}} \, \bold{\bar{e}}_{i, i+1} (\mathcal{A}_{i k})= \mathcal{A}_{i+1,k}, \quad
 \text{ad}_\chi^{\mathfrak{J}} \, \bold{\bar{e}}_{i, i+1} (\mathcal{A}_{i i} \mathcal{A}_{i+1, k})= -(-1)^{\delta_i}\mathcal{A}_{i+1, k} \ \text{ for } \  k>i+1 ,  \\
& \text{ad}_\chi^{\mathfrak{J}} \, \bold{\bar{e}}_{i, i+1} (\mathcal{A}_{l, i+1})= (-1)^{l+i+1} \mathcal{A}_{li}, \quad
\text{ad}_\chi^{\mathfrak{J}} \, \bold{\bar{e}}_{i, i+1} (\mathcal{A}_{li} \mathcal{A}_{i+1, i+1})= (-1)^{l+i+\delta_i}\mathcal{A}_{li} \ \text{ for } \  l<i.
\end{aligned}
\end{equation*}
In addition, 
\begin{equation*}
\begin{aligned}
&  \text{ad}_\chi^{\mathfrak{J}} \, \bold{\bar{e}}_{i, i+1}(\mathcal{A}_{i, i+1}) = (-1)^i\bar{\bold{e}}_{ii}+(-1)^i \bar{\bold{e}}_{i+1,i+1} -\chi,\\
& \text{ad}_\chi^{\mathfrak{J}} \, \bold{\bar{e}}_{i, i+1}(\mathcal{A}_{i, i}\mathcal{A}_{i+1, i+1}) = -(-1)^{i+\delta_i}\bar{\bold{e}}_{ii}-(-1)^{i+\delta_i} \bar{\bold{e}}_{i+1,i+1}+(-1)^{\delta_i}\chi.
\end{aligned}
\end{equation*}
Finally, one can check $ \text{ad}_\chi^{\mathfrak{J}} \, \bold{\bar{e}}_{i, i+1}(rdet{\mathcal{A}})=0$ so that $w_m \in \mathcal{W}(\overline{\mathfrak{gl}},f)$ for any $m\in I$.
Furthermore, one can find a basis $\{v_1, v_2\cdots,v_{4n+1}\}$ of $\mathfrak{gl}^{f}$ such that $\Delta_{\bar{v}_i}=\frac{i}{2}$. Hence, by checking the conditions of Proposition \ref{Prop:property_generator}, one can show that $w_1, w_2, \cdots, w_{4n+1}$ freely generate $\mathcal{W}(\overline{\mathfrak{gl}},f)$.
\end{proof}


\begin{theorem} \label{Theorem:osp(2n+1|2n)}
For the map $\kappa$ in \eqref{kappa},  if $w \in \mathcal{W}(\overline{\mathfrak{gl}}, f)$ then $\kappa(w) \in \mathcal{W}(\bar{\g}, f)$. Hence, for $w_1, \cdots, w_{4n+1}$ in  Lemma~\ref{Lemma:W(gl,f_osp)}, $\kappa(w_1), \cdots, \kappa(w_{4n+1})$ are elements in $\mathcal{W}(\bar{\g}, f)$.
\end{theorem}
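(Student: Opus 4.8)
The plan is to verify for $\kappa(w)$ the two defining conditions of $\mathcal{W}(\bar{\g},f)$ from Theorem-Definition \ref{prop:another definition}: that $\kappa(w)\in\mathcal{V}(\bar{\mathfrak{p}})$, and that $\{\bar{n}{}_\chi\kappa(w)\}\in\mathbb{C}[\chi]\otimes\mathfrak{I}$ for all $n\in\n=\g_{>0}$, where $\mathfrak{I}\subset\mathcal{V}(\bar{\g})$ is the ideal of that theorem-definition. Throughout I use that the odd nilpotent $f$ lies in $\g$ together with its $\mathfrak{osp}(1|2)$-triple, so that its grading element is fixed by $\theta$ and hence $\theta$ preserves the $\tfrac{\mathbb{Z}}{2}$-grading; in particular $\g_{i}=\g\cap\mathfrak{gl}_{i}$, giving $\n=\g\cap\mathfrak{gl}_{>0}$ and $\mathfrak{p}=\g\cap\mathfrak{gl}_{\leq0}$. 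I also use that $\theta$ preserves the bilinear form, so that $\g$ and $\mathfrak{gl}^{\check\theta}$ are orthogonal. The first step is to record that $\kappa=\psi\circ\pi_1$ from \eqref{kappa} is simply the projection of $S(\mathbb{C}[D]\otimes\overline{\mathfrak{gl}})$ onto its $\check\theta$-degree-zero part $S(\mathbb{C}[D]\otimes\bar{\g})$, where the $\check\theta$-degree of a monomial counts its $\mathbb{C}[D]\otimes\overline{\mathfrak{gl}}^{\check\theta}$ factors: indeed $\pi_1$ kills the odd-degree part by \eqref{decomposition}, and $\psi$ kills the remaining positive even degrees since they lie in $\mathcal{M}$. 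Because the degree-zero component of a product is the product of the degree-zero components and $D$ preserves the degree, $\kappa$ is a differential algebra homomorphism; and since it retains only monomials in $\bar{\g}_{\leq0}=\bar{\mathfrak{p}}$, it restricts to a map $\mathcal{V}(\overline{\mathfrak{gl}}_{\leq0})\to\mathcal{V}(\bar{\mathfrak{p}})$. As $\mathcal{W}(\overline{\mathfrak{gl}},f)\subset\mathcal{V}(\overline{\mathfrak{gl}}_{\leq0})$, this already yields $\kappa(w)\in\mathcal{V}(\bar{\mathfrak{p}})$.

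The heart of the argument will be the intertwining identity
\[\{\bar{n}{}_\chi\kappa(w)\}_{\mathcal{V}(\bar{\g})}=\kappa\big(\{\bar{n}{}_\chi w\}_{\mathcal{V}(\overline{\mathfrak{gl}})}\big)\qquad(n\in\g).\]
To prove it I would first check that for $n\in\g$ the operator $\{\bar{n}{}_\chi-\}$ preserves the $\check\theta$-degree: by the Leibniz rule it suffices to bracket $\bar{n}$ with a single generator, and $\{\bar{n}{}_\chi\bar{a}\}=\pm\overline{[n,a]}+\chi(n|a)$ stays in $\bar{\g}$ (degree $0$) for $a\in\g$ and stays in $\overline{\mathfrak{gl}}^{\check\theta}$ (degree $1$) for $a\in\mathfrak{gl}^{\check\theta}$ by Lemma \ref{Lem:osp}(1); crucially, in the latter case the central term $\chi(n|a)$ vanishes by the orthogonality of $\g$ and $\mathfrak{gl}^{\check\theta}$, so no spurious degree-zero term is produced. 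Consequently $\{\bar{n}{}_\chi-\}$ commutes with the even-degree projection $\pi_1$, and on the sub-PVA $S(R)$ it agrees with the $S(R)$-bracket; applying the SUSY PVA homomorphism $\psi$ (with $\psi(\bar n)=\bar n$) then gives the displayed identity.

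It then remains to show $\kappa(\mathfrak{J})\subset\mathfrak{I}$, where $\mathfrak{J}\subset\mathcal{V}(\overline{\mathfrak{gl}})$ is the differential ideal generated by $\{\bar{m}-(f|m)\mid m\in\mathfrak{gl}_{>0}\}$. Since $\kappa$ is a differential algebra homomorphism, it is enough to treat the generators: writing $m=m_\g+m_{\check\theta}$ along $\mathfrak{gl}_{>0}=\g_{>0}\oplus(\mathfrak{gl}^{\check\theta})_{>0}$, we get $\kappa(\bar m)=\overline{m_\g}$, and, because $f\in\g$ is orthogonal to $\mathfrak{gl}^{\check\theta}$, $(f|m)=(f|m_\g)$; hence $\kappa(\bar m-(f|m))=\overline{m_\g}-(f|m_\g)\in\mathfrak{I}$ as $m_\g\in\g_{>0}=\n$. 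Now for $w\in\mathcal{W}(\overline{\mathfrak{gl}},f)$ and any $n\in\n\subset\mathfrak{gl}_{>0}$ we have $\{\bar n{}_\chi w\}\in\mathbb{C}[\chi]\otimes\mathfrak{J}$, so by the intertwining identity and $\kappa(\mathfrak{J})\subset\mathfrak{I}$,
\[\{\bar n{}_\chi\kappa(w)\}_{\mathcal{V}(\bar\g)}=\kappa\big(\{\bar n{}_\chi w\}\big)\in\mathbb{C}[\chi]\otimes\mathfrak{I},\]
which together with $\kappa(w)\in\mathcal{V}(\bar{\mathfrak{p}})$ and Theorem-Definition \ref{prop:another definition} shows $\kappa(w)\in\mathcal{W}(\bar{\g},f)$. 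Applying this to $w_1,\dots,w_{4n+1}$ from Lemma \ref{Lemma:W(gl,f_osp)} gives the final assertion.

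I expect the main obstacle to be the intertwining identity, since $\kappa$ is a priori only the composite $\psi\circ\pi_1$ and $\pi_1$ is not a PVA map; the point that makes it work is that bracketing with $\bar n$, $n\in\g$, preserves the $\check\theta$-grading, and this in turn hinges on the vanishing of the central term $\chi(n|a)$ for $a\in\mathfrak{gl}^{\check\theta}$, i.e.\ on the orthogonality of the two $\theta$-eigenspaces.
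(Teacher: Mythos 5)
Your proposal is correct and follows essentially the same route as the paper: both use the decomposition \eqref{decomposition}, the fact that bracketing with $\bar{n}$ for $n\in\g$ preserves it (via Lemma \ref{Lem:osp} and the orthogonality of the two $\theta$-eigenspaces), and the fact that $\psi$ is a SUSY PVA homomorphism. You merely make explicit two points the paper leaves implicit, namely the degree-preservation argument behind $\{\bar{n}{}_\chi\pi_1(w)\}\in\mathbb{C}[\chi]\otimes S(R)$ and the verification that $\kappa(\mathfrak{J}_{\mathfrak{gl}})$ lands in the defining ideal $\mathfrak{I}$ for $\mathcal{W}(\bar{\g},f)$.
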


\begin{proof}
Let $w\in \mathcal{W}(\overline{\mathfrak{gl}}, f)$. Consider the decomposition 
\[ w= \pi_1(w)+ \pi_2(w)\]
via \eqref{decomposition}. Take $n \in \n=\g_{>0}$. Denote by $\mathfrak{I}_{\mathfrak{gl}}$  the ideal  $\mathfrak{J}$ of $\mathcal{V}(\overline{\mathfrak{gl}})$ introduced in Proposition~\ref{prop:another definition} and by $\mathfrak{J}_{\mathfrak{osp}}$ the ideal $\kappa(\mathfrak{J}_{\mathfrak{gl}})$ of $\mathcal{V}(\bar{\g})$. 
Then 
\[ \{\bar{n} {}_\chi w \} \in \mathbb{C}[\chi]\otimes \mathfrak{I}_{\mathfrak{gl}}.\]
 Hence $\{\bar{n}{}_\chi \pi_1(w)\} \subset \mathbb{C}[\chi] \otimes(  \,  \mathfrak{I}_{\mathfrak{gl}} \,   \cap \,   S(R) \, )$. Since $\psi$ in \eqref{kappa} is a SUSY PVA homomorphism, 
\[ \{\psi(\bar{n}){}_\chi \kappa (w)\} \in   \mathbb{C}[\chi]\otimes \kappa(\mathfrak{I}_{\mathfrak{gl}}) =   \mathbb{C}[\chi]\otimes \mathfrak{I}_{\mathfrak{osp}}. \]
Therefore $\kappa(w) \in \mathcal{W}(\bar{\g}, f)$.
\end{proof}

Recall the basis $\{F_{ij}\}_{(i,j)\in \mathcal{B}}$ of $\g$ in Example \ref{osp_4n pm1}. Then for
 \[\mathcal{C}:= \{(i,j)\in \mathcal{B}| i\geq j\},\]
the subset $\{F_{ij}\}_{(i,j)\in \mathcal{C}}$ and $\{F_{ij}\}_{(j,i)\in \mathcal{C}}$  are bases of $\mathfrak{p}$ and $\mathfrak{p}_+$, respectively.  Consider   
\[\  \  E_{ji}:= \frac{(-1)^{p(j)}}{2}F_{ji} = \frac{(-1)^{j+1}}{2}F_{ji} \in \g \quad\text{ for }  i,j\in I\]
and the matrix $\mathcal{A}_{\mathfrak{osp} }^{(2n+1|2n)}$: 
\begin{equation*}
\begin{aligned}
 \tiny \left( \begin{array}{ccccccccc} D+\bar{E}_{11} & \bar{E}_{21}&  \cdots & \bar{E}_{n+1\, 1} &  \bar{E}_{n+2\, 1}& \cdots& \bar{E}_{4n\, 1}& \bar{E}_{4n+1\, 1}  \\
 -(-1)^{\delta_1}& D+\bar{E}_{22}  & \cdots &  \bar{E}_{n+1\, 2} &  \bar{E}_{n+2\, 2}& \cdots &\bar{E}_{4n\, 2}& \bar{E}_{4n+1\, 2}\\
 \vdots & \vdots&&\vdots&\vdots&&\vdots&\vdots \\
 0 & 0 & \cdots & D+ \bar{E}_{n+1\, n+1} & \bar{E}_{n+2\, n+1} &  \cdots &\bar{E}_{4n\, n+1}& \bar{E}_{4n+1\, n+1}\\
  0 & 0& \cdots &-(-1)^{\delta_{n+1}} & D+ \bar{E}_{n+2\, n+2} &  \cdots & \bar{E}_{4n\, n+2}& \bar{E}_{4n+1\, n+2}\\
   \vdots & \vdots&&\vdots&\vdots&&\vdots&\vdots \\
     0 & 0& \cdots &0 &0&  \cdots &D+\bar{E}_{4n\, 4n}&\bar{E}_{4n+1\, 4n}\\
          0 & 0& \cdots &0 &0&  \cdots & -(-1)^{\delta_{4n}} &D+\bar{E}_{4n+1\, 4n+1}\\
   \end{array}\right),
\end{aligned}
\end{equation*}
whose $ij$-entry $\mathcal{A}_{ij}= \left\{ \begin{array}{ll} \delta_{ij}D + \bar{E}_{ji} & \text{ if } j\geq i \\ -\delta_{i,j+1} (-1)^{\delta_j} & \text{ if } j<i\end{array} \right.$ is in $\mathcal{V}(\bar{\mathfrak{p}})[D]$.

\begin{corollary} \label{Cor:matrix_osp}
Let $\{F^{ij}\}_{(i,j)\in \mathcal{C}}$ be the basis of $\mathfrak{p}$ such that $(F^{ij}|F_{kl})= \delta_{jk} \delta_{il}$ and
let \[q_{\mathfrak{osp}}^{(2n+1|2n)} = \sum_{(i,j)\in \mathcal{C}}F_{ji}\otimes \bar{F}^{ij} \in \mathfrak{p}_+ \otimes \mathcal{V}( \bar{\mathfrak{p}}).\]  Then
\[ \mathcal{A}_{\mathfrak{osp}}^{(2n+1|2n)} = \sum_{i\in I}\bold{e}_{ii}D+ q_{\mathfrak{osp}}^{(2n+1|2n)}-f \otimes 1\in \mathfrak{gl}\otimes\mathcal{V}(\bar{\mathfrak{p}})[D].\]
Moreover, if we denote $rdet (\mathcal{A}_{\mathfrak{osp}}^{(2n+1|2n)}) = \sum_{i=0}^{4n+1} w_{4n+1-i} D^i$ for $w_i \in \mathcal{V}( \bar{\mathfrak{p}})$ then \[w_1, w_2, \cdots, w_{4n+1}\in \mathcal{W}(\bar{\g}, f).\] 

\end{corollary}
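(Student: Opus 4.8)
The plan is to recognize $\mathcal{A}_{\mathfrak{osp}}^{(2n+1|2n)}$ as the entrywise image, under the map $\kappa$ of \eqref{kappa}, of the $\mathfrak{gl}$-matrix $\mathcal{A}$ from Lemma~\ref{Lemma:W(gl,f_osp)}, and then to transport the conclusion of that lemma along $\kappa$ using Theorem~\ref{Theorem:osp(2n+1|2n)}.

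First I would establish the identity $\mathcal{A}_{\mathfrak{osp}}^{(2n+1|2n)}=\sum_{i\in I}\bold{e}_{ii}D+q_{\mathfrak{osp}}^{(2n+1|2n)}-f\otimes 1$ by comparing entries. The diagonal $D$-terms are immediate. For the field part, note that the invariant form pairs $\mathfrak{p}=\g_{\leq0}$ nondegenerately with $\mathfrak{p}_+=\g_{\geq0}$, so the dual basis element is a scalar multiple of $F_{ij}$, namely $F^{ij}=F_{ij}/(F_{ij}|F_{ji})$; expanding each $F_{ji}=\bold{e}_{ji}+\theta(\bold{e}_{ji})$ in matrix units in the first tensor slot and using the relation $F_{j'i'}=\tau(i,j)F_{ij}$ together with $E_{ji}=\tfrac{(-1)^{p(j)}}{2}F_{ji}$, I would verify that the $(i,j)$-entry of $q_{\mathfrak{osp}}^{(2n+1|2n)}$ equals $\bar E_{ji}$ for $j\geq i$. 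Finally, since both $\bold{e}_{i+1\,i}$ and $\theta(\bold{e}_{i+1\,i})$ are sub-diagonal and together sweep out the whole sub-diagonal, $-f\otimes1$ contributes exactly the scalars $-(-1)^{\delta_i}$ below the diagonal.

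Next I would show $\kappa(\mathcal{A})=\mathcal{A}_{\mathfrak{osp}}^{(2n+1|2n)}$ entrywise. Writing $\bold{e}_{ji}=\tfrac12(F_{ji}+G_{ji})$ with $G_{ji}\in\mathfrak{gl}^{\check{\theta}}$, the current $\bar G_{ji}$ lies in the summand $(\mathbb{C}[D]\otimes\overline{\mathfrak{gl}}^{\check{\theta}})S(R)$ of \eqref{decomposition}, so $\pi_1$, hence $\kappa$, annihilates it; thus $\kappa(\bar{\bold{e}}_{ji})=\tfrac12\bar F_{ji}$ and $\kappa(\bar{\bold{e}}^{ji})=(-1)^{p(j)}\tfrac12\bar F_{ji}=\bar E_{ji}$. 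As $\kappa$ fixes the scalar sub-diagonal entries of $\mathcal{A}$, which by the previous paragraph already equal $-(-1)^{\delta_i}$, this yields $\kappa(\mathcal{A})=\mathcal{A}_{\mathfrak{osp}}^{(2n+1|2n)}$.

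Finally, $\kappa$ is a differential algebra homomorphism: under $\overline{\mathfrak{gl}}=\bar{\g}\oplus\overline{\mathfrak{gl}}^{\check{\theta}}$ it is the augmentation killing every $\overline{\mathfrak{gl}}^{\check{\theta}}$-current, so it commutes with $D$ and respects the ordered operator products and scalar signs in the expansion of the row determinant. Hence $\kappa(rdet(\mathcal{A}))=rdet(\kappa(\mathcal{A}))=rdet(\mathcal{A}_{\mathfrak{osp}}^{(2n+1|2n)})$, and matching coefficients of $D^i$ gives $w_i=\kappa(w_i^{\mathfrak{gl}})$ with $w_i^{\mathfrak{gl}}$ the coefficients from Lemma~\ref{Lemma:W(gl,f_osp)}. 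Those lie in $\mathcal{W}(\overline{\mathfrak{gl}},f)$, and Theorem~\ref{Theorem:osp(2n+1|2n)} gives $\kappa(\mathcal{W}(\overline{\mathfrak{gl}},f))\subset\mathcal{W}(\bar{\g},f)$, so $w_1,\dots,w_{4n+1}\in\mathcal{W}(\bar{\g},f)$. I expect the entry-matching of the first paragraph to be the main obstacle: one must track the $\tau$- and $\delta$-signs carefully, and handle the fact that each $F_{ji}$ feeds two matrix positions while the relations $F_{j'i'}=\tau(i,j)F_{ij}$ forbid a naive term-by-term comparison. It is also worth stressing that Theorem~\ref{Theorem:osp(2n+1|2n)} is essential here, because $\kappa$ is merely an algebra homomorphism on $\mathcal{V}(\overline{\mathfrak{gl}})$ and not a SUSY PVA homomorphism, the bracket being controlled only by $\psi$ on $S(R)$.
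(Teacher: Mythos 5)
Your proposal is correct and follows essentially the same route as the paper: identify $F^{ij}$ with (a multiple of) $E_{ij}$ and expand $F_{ji}\otimes \bar F^{ij}$ in matrix units to get the first assertion, then observe that $\mathcal{A}_{\mathfrak{osp}}^{(2n+1|2n)}$ is the entrywise image of $\mathcal{A}$ under $\kappa$ (via $\kappa(\bar{\bold{e}}^{ji})=\bar E_{ji}$) and transport the coefficients of $rdet(\mathcal{A})$ through Lemma~\ref{Lemma:W(gl,f_osp)} and Theorem~\ref{Theorem:osp(2n+1|2n)}. The only differences are presentational: you derive $F^{ij}$ as a normalized dual-basis element and spell out $\kappa(rdet(\mathcal{A}))=rdet(\kappa(\mathcal{A}))$, steps the paper states more tersely.
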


\begin{proof}
Note that 
\[
F^{ij}=  \left\{ \begin{array}{ll} \  E_{ij} & \text{ if } (i,j) \in \mathcal{C} \text{ and }  i+j \neq 4n+2, \\ \frac{1}{2} E_{ij}, & \text{ if } (i,j) \in \mathcal{C} \text{ and }  i+j =4n+2, \end{array}\right.
\]
and 
$ E_{ij} = \tau(j,i) E_{j'\, i'} $
for $\tau$ in \eqref{Theta} and $i'= 4n+2-i$.
Hence, for $(i,j) \in \mathcal{C}$ such that $i+j<4n+2$, we have
\begin{equation*}
 F_{ji} \otimes \bar{F}^{ij}=   F_{ji} \otimes \bar{E}_{ij}  =  \bold{e}_{ji} \otimes \bar{E}_{ij}+ \tau(j,i)\bold{e}_{i' \, j'}  \otimes \bar{E}_{ij} = \bold{e}_{ji} \otimes  \bar{E}_{ij}+\bold{e}_{i' \, j'} \otimes \bar{E}_{j'\, i'} .
\end{equation*}
If $i=j'$ and $p(i)=1$ (resp. $p(i)=0$), then 
$
 F_{ji} \otimes \bar{F}^{ij}=  2\bold{e}_{ji} \otimes \frac{1}{2}  \bar{E}_{ij}=   \bold{e}_{ji}\otimes \bar{E}_{ij}
$ 
(resp. $ F_{ji}\otimes \bar{F}^{ij} =0=\bold{e}_{ji}\otimes \bar{E}_{ij}$). Hence the first assertion holds.

In addition, since $\kappa(\bar{\bold{e}}^{ji})= \bar{E}^{ji}$, we have
\[\  \   q_{\mathfrak{osp}}^{(2n+1|2n)}=\sum_{\substack{i\leq j\\ i,j\in I}} \bold{e}_{ij}\otimes \kappa(\bar{\bold{e}}^{ji}) = \sum_{\substack{i\leq j\\ i,j\in I}} \bold{e}_{ij}\otimes \bar{E}^{ji}.\]
 By Lemma~\ref{Lemma:W(gl,f_osp)} and Theorem~\ref{Theorem:osp(2n+1|2n)}, we get the second assertion.
\end{proof}

Let us find a minimal generating set of $\mathcal{W}(\bar{\g}, f)$. Since 
$ \g^{f}:= \ker( \text{ad}\, f )$
has the same dimension as $\dim \g_0,$ the dimension of $\g^{f}$ is $2n$. In addition, since  
\[
\dim \g_k= \left\{ \begin{array}{ll} k+2n+\frac{1}{2} & \text{ if } 2k \text{ is odd }, \\
k+2n+1& \text{ if } 2k \equiv 2 \Mod 4,\\
k+2n &\text{ if } 2k \equiv 0 \Mod 4, 
  \end{array}\right.
 \]
for $-2n\leq k \leq 0$, there is a basis 
\begin{equation} \label{basis of ker ad f}
V^f:=\big\{\,  v_t \in \g_{\frac{1-t}{2}} \, | \, t\in I, t\equiv 0\text{ or }3 \Mod4\big\}
\end{equation} 
of $\g^{f}$.
Hence, by Proposition \ref{Prop:property_generator},  if we have that $\Delta_{w_t\in \mathcal{W}(\bar{\g},f)}=\frac{t}{2}$ and it has the form 
\begin{equation}\label{generator_form}
w_t =\bar{v}_t + (\text{polynomial degree $\geq 2$ terms}) +\text{ (total derivative part)},
\end{equation}
then $\{w_t | v_t \in V^f\}$ freely generates $\mathcal{W}(\bar{\g},f)$ as a differential algebra. In the following lemma, we find a basis $V^f$ of $\g^f$ satisfying \eqref{generator_form}.

\begin{lemma} \label{Lemma1}
\begin{enumerate}
\item Let $k\in\mathbb{Z}$ such that $1 \leq k \leq 4n+1$. Consider the element
\begin{equation}\label{sum of E's}
 v_{k}:=\sum_{i=1}^{4n+2-k} (-1)^{ki}\,E_{k-1+i\, i}\, (-1)^{\sum_{\ell=0}^{k-2}\delta_{i+\ell}}.
\end{equation}
Then 
$v_{k}=0$ if $k\equiv 1$ or 2 $\Mod 4$ and $v_{k}\neq 0$ if $k\equiv 0$ or 3 $\Mod 4$.
\item For any $k$ such that $1 \leq k \leq 4n+1$, the elements $v_{k}$ in \eqref{sum of E's}  are in $\g^{f}$. 
\end{enumerate}
\end{lemma}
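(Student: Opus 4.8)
The plan is to lift everything from $\g=\mathfrak{osp}(2n+1|2n)$ to $\mathfrak{gl}:=\mathfrak{gl}(2n+1|2n)$, where $f$ becomes an ordinary (signed) sub-diagonal matrix and its centralizer is transparent. Using $E_{k-1+i\,i}=\tfrac{(-1)^{k+i}}{2}F_{k-1+i\,i}$ together with $F_{ab}=\bold{e}_{ab}+\theta(\bold{e}_{ab})$, I would first record the decomposition
\[ v_k=\tfrac12\bigl(X_k+\theta(X_k)\bigr),\qquad X_k:=\sum_{i=1}^{4n+2-k} a_i\,\bold{e}_{k-1+i\,i},\qquad a_i:=(-1)^{k(i+1)+i+S(i)},\]
where $S(i)=\sum_{\ell=0}^{k-2}\delta_{i+\ell}$. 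The element $X_k$ is supported on the single $(k-1)$-st sub-diagonal and is nonzero since every $a_i=\pm1$. I would also note $\theta^2=\mathrm{id}$, so each $F_{ab}$, and hence $f=\sum_{m=1}^{2n}F_{m+1\,m}$, is $\theta$-fixed; unfolding the $\theta$-parts shows $f=\sum_{m=1}^{4n}\epsilon_m\bold{e}_{m+1\,m}$ with all $\epsilon_m=\pm1$, i.e. $f$ is a full signed sub-diagonal in $\mathfrak{gl}$.

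For part (2), I would prove $X_k\in\ker(\operatorname{ad}f)$ inside $\mathfrak{gl}$. Expanding $[f,X_k]$ by \eqref{Eqn:gl(m|n)} and collecting the coefficient of $\bold{e}_{k+j\,j}$ reduces the claim to the telescoping identity
\[ \epsilon_{k-1+j}\,a_j=\epsilon_j\,a_{j+1}\,(-1)^{(p(j+1)+p(j))(p(k+j)+p(j+1))}\qquad (1\le j\le 4n+1-k),\]
which I would verify by inserting the explicit parities $p(i)$ and the signs $\delta_i$ from Example~\ref{osp_4n pm1}. Once this holds, since $\theta$ is an automorphism with $\theta(f)=f$, one gets $[f,v_k]=\tfrac12(1+\theta)[f,X_k]=0$; as $v_k\in\g$, this yields $v_k\in\ker(\operatorname{ad}f)\cap\g=\g^f$.

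For part (1), I would use that the space $K_k$ of $(k-1)$-st sub-diagonal elements lying in $\ker(\operatorname{ad}f)$ is one-dimensional: the recursion above propagates $a_1$ to all the $a_i$ with no consistency obstruction, so $K_k=\C X_k$. Since $\theta$ sends the $(k-1)$-st sub-diagonal to itself ($\theta(\bold{e}_{k-1+i\,i})$ sits on the same sub-diagonal) and preserves $\ker(\operatorname{ad}f)$, it stabilizes $K_k$, whence $\theta(X_k)=\mu_k X_k$ with $\mu_k\in\{\pm1\}$ and $v_k=\tfrac{1+\mu_k}{2}X_k$. Thus $v_k=0$ exactly when $\mu_k=-1$, and $v_k=X_k\neq0$ exactly when $\mu_k=+1$. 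Comparing the coefficient of $\bold{e}_{k\,1}$ in $\theta(X_k)=\mu_k X_k$ gives $\mu_k=a_{4n+2-k}\,\tau(4n+1,4n+2-k)/a_1$, and evaluating this sign explicitly should give $\mu_k=-1$ precisely for $k\equiv1,2\Mod4$ and $\mu_k=+1$ for $k\equiv0,3\Mod4$, which is exactly the asserted dichotomy.

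The hard part will be the two sign computations — the telescoping identity for $\operatorname{ad}f$ and, above all, the evaluation of the eigenvalue $\mu_k$, since it is precisely this sign that produces the congruence condition modulo $4$. Both are elementary but require disciplined bookkeeping of $p(i)$, $\delta_i$ and $\tau$; I would organize them by the residue of $k$ modulo $4$ and, for odd $k$, cross-check using the self-paired index $i=(4n+3-k)/2$ fixed by the involution $i\mapsto 4n+3-k-i$.
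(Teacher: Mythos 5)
Your proposal is correct and, for part (2), ends up doing essentially the same telescoping computation as the paper: the paper expands $v_3$ in the $\bold{e}_{ij}$-basis as $-\sum_i\bold{e}_{i+2,i}(-1)^{\delta_i+\delta_{i+1}}$ (which is exactly your $X_3$, with $\mu_3=+1$) and checks that $[f,\cdot]$ telescopes, leaving general $k$ to ``the same argument''; your version, phrased as $[f,X_k]=0$ in $\mathfrak{gl}(2n+1|2n)$ followed by $[f,v_k]=\tfrac12(1+\theta)[f,X_k]=0$ using $\theta(f)=f$, is the same computation organized slightly more cleanly. Where you genuinely diverge is part (1): the paper dismisses it with ``(1) can be obtained by direct computations,'' whereas you reduce the whole dichotomy to a single sign. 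Your key observations --- that the $(k-1)$-st sub-diagonal component of $\ker(\operatorname{ad}f)$ in $\mathfrak{gl}$ is one-dimensional because the unfolded $f=\sum_m\epsilon_m\bold{e}_{m+1\,m}$ has all $\epsilon_m=\pm1$, that $\theta$ preserves both the sub-diagonal and $\ker(\operatorname{ad}f)$, hence $\theta(X_k)=\mu_kX_k$ with $\mu_k^2=1$ and $v_k=\tfrac{1+\mu_k}{2}X_k$ --- are all valid, and they explain structurally why the answer depends only on $k\bmod 4$ rather than leaving it as an unexplained outcome of bookkeeping. The one-coefficient formula $\mu_k=a_{4n+2-k}\,\tau(4n+1,4n+2-k)/a_1$ checks out in small cases (e.g. $\mu_1=\mu_2=-1$, $\mu_3=+1$). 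The two sign verifications you defer are genuinely routine and no worse than what the paper itself omits; just be aware that your part (1) logically presupposes the telescoping identity from part (2) (otherwise $X_k$ need not span the one-dimensional kernel and $\theta(X_k)\propto X_k$ would not follow), so $[f,X_k]=0$ must be established first.
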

\begin{proof}
(1) can be obtained by direct computations. For (2) 
let us look at the case where $k=3 \text{ or }4$. If $k=3$, 
\begin{equation} \label{k=3}
\begin{aligned}
& v_3=  \  \    \sum_{i=1}^{4n-1} (-1)^{i} E_{i+2, i} (-1)^{\delta_i+ \delta_{i+1}} 
=-\sum_{i=1}^{4n-1}  \bold{e}_{i+2, i} (-1)^{\delta_{i}+ \delta_{i+1}}.
\end{aligned}
\end{equation}
Now, since 
\begin{equation}
\begin{aligned}
\,  [f, \bold{e}_{i+2, i} (-1)^{\delta_{i}+ \delta_{i+1}}] =  \bold{e}_{i+3, i} (-1)^{\delta_i+ \delta_{i+1} + \delta_{i+2}} -\bold{e}_{i+2, i-1} (-1)^{\delta_{i-1}+ \delta_i+ \delta_{i+2}},& 
\end{aligned}
\end{equation}
we conclude $v_{k} \in \g^{f}$. 
By the same argument, one can check that $v_{k} \in  \g^{f}$ for $k=1, \cdots, 4n+1$. 
\end{proof} 

\begin{lemma}\label{Lemma2}
Recall the elements $w_1, w_2, \cdots, w_{4n+1}$ of  $\mathcal{W}(\bar{\g}, f)$ in Corollary \ref{Cor:matrix_osp}. Then we have $\Delta_{w_{4n+1-t}}=\frac{4n+1-t}{2}$.
\end{lemma}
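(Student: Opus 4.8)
The plan is to show that the full row determinant $rdet(\mathcal{A}_{\mathfrak{osp}}^{(2n+1|2n)})$ is homogeneous for the conformal-weight grading $\Delta$ of \eqref{Eqn:conformal weight}, of total weight $\frac{4n+1}{2}$, and then to read off the weight of each coefficient $w_{4n+1-i}$ from the power $D^i$ it multiplies. First I would assign a $\Delta$-weight to each entry of the matrix. Since $\theta$ in \eqref{Theta} preserves the principal grading \eqref{grading} (because $\bold{e}_{ji}$ and $\theta(\bold{e}_{ji})=\tau(j,i)\bold{e}_{i'j'}$ both lie in grade $\frac{i-j}{2}$), the basis element $E_{ji}$ lies in $\g_{(i-j)/2}$, so $\Delta_{\bar E_{ji}}=\frac12-\frac{i-j}{2}=\frac{1+j-i}{2}$. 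Consequently, for $j\geq i$ the entry $\mathcal{A}_{ij}=\delta_{ij}D+\bar E_{ji}$ is $\Delta$-homogeneous of weight $\frac{1+j-i}{2}$; the only point needing care is the diagonal, where $\mathcal{A}_{ii}=D+\bar E_{ii}$ is homogeneous precisely because $\Delta_D=\frac12=\Delta_{\bar E_{ii}}$ coincide.

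Next I would use the row-determinant expansion, which (as in the $\mathfrak{gl}$ case \eqref{row_det:gl(n+1|n)} and in the proof of Lemma~\ref{Lemma:W(gl,f_osp)}, applicable here since $\mathcal{A}_{\mathfrak{osp}}^{(2n+1|2n)}$ has the same upper-triangular-plus-subdiagonal shape by Corollary~\ref{Cor:matrix_osp}) writes $rdet(\mathcal{A}_{\mathfrak{osp}}^{(2n+1|2n)})$ as a signed sum of products $\mathcal{A}_{i_0+1,i_1}\mathcal{A}_{i_1+1,i_2}\cdots\mathcal{A}_{i_N+1,i_{N+1}}$ over chains $0=i_0<i_1<\cdots<i_{N+1}=4n+1$, each factor being an on-or-above-diagonal entry. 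The $k$-th factor $\mathcal{A}_{i_k+1,\,i_{k+1}}$ then carries weight $\frac{1+i_{k+1}-(i_k+1)}{2}=\frac{i_{k+1}-i_k}{2}$, and these telescope:
\[ \sum_{k=0}^{N}\frac{i_{k+1}-i_k}{2}=\frac{i_{N+1}-i_0}{2}=\frac{4n+1}{2}, \]
independently of the chain. The scalar signs $(-1)^{n+\sum_j\delta_j}$ do not affect weights, so every term, and hence $rdet(\mathcal{A}_{\mathfrak{osp}}^{(2n+1|2n)})$ itself, is $\Delta$-homogeneous of weight $\frac{4n+1}{2}$.

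Finally I would pass from the homogeneity of the whole determinant to the individual coefficients. Collecting the noncommutative product into the form $\sum_i w_{4n+1-i}D^i$ uses only the relation $Da=a'+(-1)^{p(a)}aD$; both $a'$ and $aD$ have weight $\Delta_a+\frac12$, so this reorganization preserves $\Delta$. Since $D^i$ has weight $\frac{i}{2}$, the coefficient $w_{4n+1-i}$ must have weight $\frac{4n+1}{2}-\frac{i}{2}=\frac{4n+1-i}{2}$; writing $t=i$ gives $\Delta_{w_{4n+1-t}}=\frac{4n+1-t}{2}$, as claimed. The one genuinely load-bearing step is the homogeneity of the diagonal entries $D+\bar E_{ii}$, i.e.\ the coincidence $\Delta_D=\Delta_{\bar E_{ii}}=\frac12$, since without it the choice of $D$ versus $\bar E_{ii}$ in a diagonal factor could a priori produce terms of differing weight; everything else is telescoping bookkeeping that the signs and the noncommutativity of $D$ leave untouched.
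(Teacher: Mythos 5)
Your proof is correct and follows essentially the same route as the paper: extend the $\Delta$-grading to $\mathcal{V}(\bar{\mathfrak{p}})[D]$ with $\Delta_D=\frac12$, observe $\Delta_{\mathcal{A}_{ij}}=\frac{j-i+1}{2}$, and telescope along each chain in the row-determinant expansion to get $\Delta_{rdet(\mathcal{A}_{\mathfrak{osp}}^{(2n+1|2n)})}=\frac{4n+1}{2}$. The extra details you supply (that $\theta$ preserves the principal grading, and that normal-ordering the $D$'s to the right preserves $\Delta$) are correct and merely make explicit what the paper leaves implicit.
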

\begin{proof}
Define a gradation $\Delta$ on $\mathcal{V}(\bar{\mathfrak{p}})[D]$:
\[ \Delta_{\bar{a}} = \frac{1}{2}-j_a, \quad \Delta_D= \frac{1}{2}, \quad \Delta_{AB}= \Delta_A + \Delta_B\]
where $a \in \g_{j_a}$ and $A,B\in \mathcal{V}(\bar{\mathfrak{p}})$. Note that for an element of $\mathcal{V}(\bar{\mathfrak{p}})$, this gradation coincides with the $\Delta$-grading in \eqref{Eqn:conformal weight}. Since $\Delta_{\mathcal{A}_{ij}}=\frac{j-i+1}{2}$ for $i\leq j-1$, we have $\Delta_{rdet(\mathcal{A})} = \frac{4n+1}{2}$. Hence $\Delta_{w_{4n+1-t}}= \frac{4n+1-t}{2}$.
\end{proof}

\begin{theorem}
Let $rdet(\mathcal{A}_{\mathfrak{osp}}^{(2n+1|2n)})= \sum_{t=0}^{4n+1} w_{4n+1-t} D^t$. Then \[\{ \, w_t\, | \, t\in I, \ t \equiv 0, 3  \ (\text{mod $4$}) \}\] generates $\mathcal{W}(\bar{\g}, f)$ as a differential algebra. In other words, 
\[ \mathcal{W}(\bar{\g}, f)= \mathbb{C}[w_t^{(m)}|m\in \mathbb{Z}_{\geq 0}, \, \, t\in I, \ t \equiv 0, 3 \Mod4 ]. \]
\end{theorem}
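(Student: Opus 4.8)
The plan is to deduce the statement from Proposition \ref{Prop:property_generator}. By Corollary \ref{Cor:matrix_osp} we already know that $w_1,\dots,w_{4n+1}\in\mathcal{W}(\bar{\g},f)$, so it suffices to exhibit a basis of $\g^{f}$ with respect to which the subfamily $\{w_t\mid t\equiv 0,3\Mod 4\}$ satisfies conditions (i) and (ii) of that proposition. The natural candidate is the family $V^f=\{v_t\mid t\equiv 0,3\Mod 4\}$ from \eqref{basis of ker ad f}, with $v_t$ given explicitly by \eqref{sum of E's}. First I would record that $V^f$ is indeed a basis of $\g^{f}$: by Lemma \ref{Lemma1}(2) each $v_t$ lies in $\g^{f}$, by Lemma \ref{Lemma1}(1) it is nonzero precisely when $t\equiv 0,3\Mod 4$, these elements lie in the pairwise distinct $\operatorname{ad}x$-eigenspaces $\g_{(1-t)/2}$ and are therefore linearly independent, and there are $n+n=2n=\dim\g^{f}$ of them.

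Condition (i) is immediate. Lemma \ref{Lemma2} gives $\Delta_{w_t}=\tfrac{t}{2}$, while $v_t\in\g_{(1-t)/2}$ together with \eqref{Eqn:conformal weight} yields $\Delta_{\bar v_t}=\tfrac12-\tfrac{1-t}{2}=\tfrac{t}{2}$; hence $\Delta_{w_t}=\Delta_{\bar v_t}$ for all $t$.

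The heart of the argument is condition (ii): that the part of $w_t$ which is linear in the generators and free of $D$ equals $\bar v_t$, i.e. that $w_t$ has the shape \eqref{generator_form}. Here I would expand $rdet(\mathcal{A}_{\mathfrak{osp}}^{(2n+1|2n)})$ as the signed sum over consecutive block decompositions $0=i_0<i_1<\cdots<i_{N+1}=4n+1$ of the products $\mathcal{A}_{i_0+1,i_1}\cdots\mathcal{A}_{i_N+1,i_{N+1}}$, carrying the sign $(-1)^{n+\sum_{j\in J_N}\delta_j}$ from the proof of Lemma \ref{Lemma:W(gl,f_osp)}. Since $\mathcal{A}_{a,b}=\delta_{ab}D+\bar{E}_{ba}$ for $a\le b$, a monomial that is linear in the $\bar{E}$'s and carries the power $D^{4n+1-t}$ must come from a decomposition with exactly one block $[c,d]$ of length $\ge 2$, contributing the single generator $\bar{E}_{d,c}$, every remaining block being a singleton from which the diagonal $D$ is chosen; counting the $D$'s forces $d-c+1=t$, so with $c=i$ one gets $d=t-1+i$ and $1\le i\le 4n+2-t$. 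Thus the linear part of $w_t$ is $\sum_{i=1}^{4n+2-t}\varepsilon_i\,\bar{E}_{t-1+i,\,i}$ for signs $\varepsilon_i$, and it remains to evaluate $\varepsilon_i$. The factor $(-1)^{\sum_{j\in J_N}\delta_j}=(-1)^{\sum_{\text{all }j}\delta_j}\,(-1)^{\sum_{\ell=0}^{t-2}\delta_{i+\ell}}$ reproduces, up to the global sign $(-1)^{n+\sum_{\text{all }j}\delta_j}$ (which is independent of $t$ and $i$), the $\delta$-sign appearing in \eqref{sum of E's}; the remaining factor $(-1)^{ti}$ must then be extracted from the parities $p(\cdot)$ and the normalization $E_{ji}=\tfrac{(-1)^{j+1}}{2}F_{ji}$ entering the entries, keeping in mind that $D$ is \emph{odd}, so commuting the chosen $D$-factors past the odd generators produces exactly the position-dependent signs. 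Carrying out this sign bookkeeping so that $\varepsilon_i\,\bar{E}_{t-1+i,i}$ becomes the $i$-th summand of $\bar v_t$ (up to the overall sign, which is harmless since it may be absorbed by rescaling $w_t$) is the \emph{main obstacle}; once it is settled, the linear part of $w_t$ is $\bar v_t$ as required.

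With (i) and (ii) verified against the basis $\{v_t\mid t\equiv 0,3\Mod 4\}$ of $\g^{f}$, the ``moreover'' part of Proposition \ref{Prop:property_generator} shows that $\{w_t\mid t\equiv 0,3\Mod 4\}$ freely generates $\mathcal{W}(\bar{\g},f)$ as a differential algebra, which is precisely the asserted equality $\mathcal{W}(\bar{\g},f)=\mathbb{C}[w_t^{(m)}\mid m\in\mathbb{Z}_{\ge 0},\ t\in I,\ t\equiv 0,3\Mod 4]$. In particular, the remaining $w_t$ with $t\equiv 1,2\Mod 4$ have vanishing linear part by Lemma \ref{Lemma1}(1) and are therefore redundant.
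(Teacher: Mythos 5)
Your proposal is correct and follows essentially the same route as the paper: both reduce to Proposition \ref{Prop:property_generator} via the basis $\{v_t\}$ of Lemma \ref{Lemma1} and the weight computation of Lemma \ref{Lemma2}, and both identify the linear part of $w_t$ with $\pm\bar v_t$ by isolating the terms of the row-determinant expansion containing exactly one off-diagonal entry. The sign bookkeeping you flag as the main obstacle (the factor $(-1)^{t(i-1)}$ arising from commuting the odd operator $D$ past $\bar E_{t-1+i,\,i}$, combined with the $\delta$-signs) is exactly what the paper compresses into ``one can easily check that $w_t^1=(-1)^t\bar v_t$,'' so your level of detail matches the paper's.
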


\begin{proof}
Note that if there are elements $w_t\in \mathcal{W}(\bar{\g},f)$ which have the form \eqref{generator_form} for $v_t$ in Lemma \ref{Lemma1} then they form a generating set of $\mathcal{W}(\bar{\g},f) $.
Let us denote by $w^1_t$ the linear terms in $w_t$ which are not total derivatives.  Thanks to Lemma \ref{Lemma2}, it is enough to show that 
\[w^1_t= c_t \bar{v}_t,\]
where $c_t$'s are  nonzero constants. Note that $w^1_t$ is the linear term without derivatives in  
\[ \sum_{i=1}^{4n+1-t}\mathcal{A}_{11}\cdots \mathcal{A}_{i-1,i-1} \mathcal{A}_{i, i+t}  \mathcal{A}_{t+i+1, t+i+1} \cdots \mathcal{A}_{4n+1, 4n+1}(-1)^{n-\sum_{j=i}^{t+i-1}\delta_j},\]
where $\mathcal{A}_{ij}$ is the $ij$-entry of $\mathcal{A}_{\mathfrak{osp}}^{(2n+1|2n)}$.
One can easily check that $ w_t^1 = (-1)^{t} \bar{v}_t.$

Now, if we denote $\mathcal{I}:=\{\, t\in I\, |   \, t\in I, \ t \equiv 0, 3  \ (\text{mod $4$}) \}$ then   
since $\{v_t| t\in \mathcal{I} \}$ is a basis of $\g^{f}$ we can conclude that $\mathcal{W}(\bar{\g},f)= \mathbb{C}[w_t^{(n)}|n\in \mathbb{Z}_{\geq 0}, \, t\in \mathcal{I}]$. 
\end{proof}


\subsection{$\mathcal{W}(\overline{\mathfrak{osp}}(2n-1|2n),f)$} \label{Sec: osp(2n-1|2n)} \

As in Section \ref{Sec:osp(2n+1|2n)}, consider the elements
\[\  \  E_{ji}:=  \frac{(-1)^j}{2}F_{ji}\in \g \quad\text{ for }  i,j\in I, \]
and the $(4n-1)\times (4n-1)$ matrix $\mathcal{A}_{\mathfrak{osp} }^{(2n-1|2n)}$ 
whose $ij$-entry \[\mathcal{A}_{ij}= \left\{ \begin{array}{ll} \delta_{ij}D + \bar{E}_{ji} & \text{ if } j\geq i \\ -\delta_{i,j+1} (-1)^{\delta_j} & \text{ if } j<i\end{array} \right.\] is in $\mathcal{V}(\bar{\mathfrak{p}})[D]$.
Using the same arguments as in Section \ref{Sec:osp(2n+1|2n)}, we get the following theorem.

\begin{theorem} \label{thm:osp(2n-1|2n)}
Let $\mathcal{C}:=\{ (i,j)\in\mathcal{B}| i\geq j\}$ and
$\{F^{ji}\}_{(i,j)\in \mathcal{C}}$ be the basis of $\mathfrak{p}_+$ such that $\left(F^{ij}|F_{kl}\right)=\delta_{jk}\delta_{il}$. Then
for $q_{\mathfrak{osp}}^{(2n-1|2n)} = \sum_{(i,j)\in \mathcal{C}}F_{ij}\otimes \bar{F}^{ji} \in \mathfrak{p}_+ \otimes \mathcal{V}( \bar{\mathfrak{p}})$, we have 
\[ \mathcal{A}_{\mathfrak{osp}}^{(2n-1|2n)} = \sum_{i\in I}\bold{e}_{ii}D+ q_{\mathfrak{osp}}^{(2n-1|2n)}-f \otimes 1\in \mathfrak{gl}\otimes\mathcal{V}(\bar{\mathfrak{p}})[D].\]
Moreover, if we denote $rdet (\mathcal{A}_{\mathfrak{osp}}^{(2n-1|2n)}) = \sum_{i=0}^{4n-1} w_{4n-1-i} D^i$ for $w_i \in \mathcal{V}( \bar{\mathfrak{p}})$ then \[w_1, w_2, \cdots, w_{4n-1}\in \mathcal{W}(\bar{\g}, f_{\mathfrak{osp}}).\] 

\end{theorem}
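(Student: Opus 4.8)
The plan is to transport, step by step, the entire chain of results proved for $\mathfrak{osp}(2n+1|2n)$ in Section \ref{Sec:osp(2n+1|2n)}, verifying at each stage that nothing used there depended on the specific value $M=2n+1$ beyond the common folding data of Example \ref{osp_4n pm1}. First I would set up the folding. Writing $\mathfrak{gl}=\mathfrak{gl}(2n-1|2n)$ and $\check\theta=-\theta$ with $\theta$ as in \eqref{Theta}, one has $\mathfrak{gl}=\g\oplus\mathfrak{gl}^{\check\theta}$ as vector superspaces. The two bracket inclusions $[\g,\mathfrak{gl}^{\check\theta}]\subset\mathfrak{gl}^{\check\theta}$ and $[\mathfrak{gl}^{\check\theta},\mathfrak{gl}^{\check\theta}]\subset\g$, i.e. the analogue of Lemma \ref{Lem:osp}, hold by the identical direct computation, since they rely only on $\theta$ being an involutive automorphism preserving $(\ |\ )$, not on $M$. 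Consequently $S(R)$ with $R=(\mathbb{C}[D]\otimes\bar\g)\oplus(\mathbb{C}[D]\otimes\overline{\mathfrak{gl}}^{\check\theta})^{\otimes 2}$ inherits a SUSY PVA structure, and the composite $\kappa$ of \eqref{kappa} is again a SUSY PVA homomorphism onto $S(\mathbb{C}[D]\otimes\bar\g)$.

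Next I would install the two engines that drive the conclusion. The first is the $\mathfrak{gl}$-level statement, the analogue of Lemma \ref{Lemma:W(gl,f_osp)}: for $\mathcal{A}=\sum_{i\leq j}\bold{e}_{ij}\otimes(\delta_{ij}D+\bar{\bold{e}}^{ji})-f\otimes 1$ over $\mathfrak{gl}(2n-1|2n)$, writing $rdet(\mathcal{A})=\sum_{i}w_{4n-1-i}D^{i}$, the coefficients $w_i$ lie in (and freely generate) $\mathcal{W}(\overline{\mathfrak{gl}},f)$. This follows from the same four elementary $\text{ad}_\chi^{\mathfrak{J}}\,\bar{\bold{e}}_{i,i+1}$ identities on the entries $\mathcal{A}_{ij}$ and their products, which telescope to $\text{ad}_\chi^{\mathfrak{J}}\,\bar{\bold{e}}_{i,i+1}(rdet(\mathcal{A}))=0$, so Corollary \ref{Cor2} yields membership; the membership alone is what the target theorem will use. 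The second engine is $\kappa$-invariance, the analogue of Theorem \ref{Theorem:osp(2n+1|2n)}: if $w\in\mathcal{W}(\overline{\mathfrak{gl}},f)$ then $\kappa(w)\in\mathcal{W}(\bar\g,f)$, because the decomposition \eqref{decomposition} forces $\{\bar n\,{}_\chi\,\pi_1(w)\}$ into $\mathbb{C}[\chi]\otimes(\mathfrak{J}\cap S(R))$ and the PVA homomorphism $\psi$ carries $\mathfrak{J}$ onto $\kappa(\mathfrak{J})$.

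Finally I would verify the matrix identity, the only place where the parity pattern of $M=2n-1$ genuinely enters. Since $\kappa$ is a differential algebra homomorphism commuting with $D$, it commutes with the ordered products in $rdet$, so $w_i^{\mathfrak{osp}}=\kappa(w_i^{\mathfrak{gl}})$ once one shows $\mathcal{A}_{\mathfrak{osp}}^{(2n-1|2n)}=(\mathrm{id}\otimes\kappa)(\mathcal{A})$. Using $E_{ji}=\tfrac{(-1)^{j}}{2}F_{ji}$, the relation $F_{ij}=\bold{e}_{ij}+\theta(\bold{e}_{ij})$, and $E_{ij}=\tau(j,i)E_{j'i'}$ with $i'=4n-i$ and anti-diagonal locus $i+j=4n$, I would expand each summand $F_{ij}\otimes\bar F^{ji}$ of $q_{\mathfrak{osp}}^{(2n-1|2n)}$ into $\bold{e}_{ij}\otimes\bar E^{ji}$-type terms exactly as in Corollary \ref{Cor:matrix_osp}, the anti-diagonal entries picking up the compensating factor $\tfrac12$; this gives $q_{\mathfrak{osp}}^{(2n-1|2n)}=\sum_{i\leq j}\bold{e}_{ij}\otimes\kappa(\bar{\bold{e}}^{ji})$ and hence the asserted expression. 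Combining with the two engines yields $w_1,\dots,w_{4n-1}\in\mathcal{W}(\bar\g,f)$. I expect the main obstacle to be precisely this last expansion: because the roles of $\mathfrak{p}$ and $\mathfrak{p}_+$ are dualized relative to Corollary \ref{Cor:matrix_osp} (here $q_{\mathfrak{osp}}^{(2n-1|2n)}=\sum_{(i,j)\in\mathcal{C}}F_{ij}\otimes\bar F^{ji}\in\mathfrak{p}_+\otimes\mathcal{V}(\bar{\mathfrak{p}})$) and the normalization is $E_{ji}=\tfrac{(-1)^{j}}{2}F_{ji}$ rather than $\tfrac{(-1)^{j+1}}{2}F_{ji}$, the signs, the parity function $p$, and the $\delta_k$ of Example \ref{osp_4n pm1} must be tracked consistently so that the folded matrix matches $(\mathrm{id}\otimes\kappa)(\mathcal{A})$ entry by entry.
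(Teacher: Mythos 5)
Your proposal is correct and coincides with the paper's intended argument: the paper's own proof of this theorem is just the remark ``using the same arguments as in Section \ref{Sec:osp(2n+1|2n)}'', and your step-by-step transport of the folding decomposition (Lemma \ref{Lem:osp}), the $\mathfrak{gl}$-level row-determinant lemma (Lemma \ref{Lemma:W(gl,f_osp)}), the $\kappa$-invariance (Theorem \ref{Theorem:osp(2n+1|2n)}), and the entry-by-entry identification of $\mathcal{A}_{\mathfrak{osp}}^{(2n-1|2n)}$ with $(\mathrm{id}\otimes\kappa)(\mathcal{A})$ is exactly what that remark elides. The sign bookkeeping you flag is real but routine, since the normalization $E_{ji}=\tfrac{(-1)^{j}}{2}F_{ji}$ is still $\tfrac{(-1)^{p(j)}}{2}F_{ji}$ once one recalls that for $M=2n-1$ the odd indices carry parity $1$.
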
\qed
\\
Furthermore, by observing the linear part of $w_t$ for each $t$, one can find a minimal generating set of $\mathcal{W}(\bar{\g},f).$

\begin{theorem}
  Let $rdet\big(\mathcal{A}_{\mathfrak{osp}}^{(2n-1|2n)}\big)= \sum_{t=0}^{4n-1} w_{4n-1-t} D^t$ as in Theorem \ref{thm:osp(2n-1|2n)}. Then \[\{ \, w_t\, | \, t\in I, t\equiv 0 \text{ or } 3 \Mod4\, \}\] generates $\mathcal{W}(\bar{\g}, f)$ as a differential algebra. More precisely,
  \[ \mathcal{W}(\bar{\g}, f)= \mathbb{C}[w_t^{(m)}|m\in \mathbb{Z}_{\geq 0}, \,  t\in I, t\equiv 0 \text{ or } 3 \Mod4\, ]. \]
  \end{theorem}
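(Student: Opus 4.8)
The plan is to follow, almost verbatim, the strategy used for $\mathcal{W}(\overline{\mathfrak{osp}}(2n+1|2n),f)$ in Section~\ref{Sec:osp(2n+1|2n)}. Theorem~\ref{thm:osp(2n-1|2n)} already guarantees that the coefficients $w_1,\dots,w_{4n-1}$ of $rdet(\mathcal{A}_{\mathfrak{osp}}^{(2n-1|2n)})$ lie in $\mathcal{W}(\bar{\g},f)$, so by Proposition~\ref{Prop:property_generator} it suffices to produce a basis $\{v_t\mid t\equiv 0,3\pmod 4\}$ of $\g^f$ for which the corresponding $w_t$ have conformal weight $\Delta_{w_t}=\Delta_{\bar v_t}=\frac{t}{2}$ and linear, derivative-free part equal to a nonzero scalar multiple of $\bar v_t$, i.e. are of the form \eqref{generator_form}. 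Once these two facts are in place, the freeness is automatic and the stated minimal generating set follows.

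First I would set up the $M=2n-1$ analogue of Lemma~\ref{Lemma1}. Using the parity convention and the map $\delta$ of Example~\ref{osp_4n pm1} for $M=2n-1$, I would define, for $1\le k\le 4n-1$,
\[
 v_k:=\sum_{i=1}^{4n-k}(-1)^{ki}\,E_{k-1+i\,i}\,(-1)^{\sum_{\ell=0}^{k-2}\delta_{i+\ell}},
\]
and check, by the same sign-cancellation computation that underlies \eqref{sum of E's}, that $v_k=0$ for $k\equiv 1,2\pmod 4$, that $v_k\neq 0$ for $k\equiv 0,3\pmod 4$, and that $v_k\in\g^f$. A dimension count then confirms that $\{v_t\mid t\equiv 0,3\pmod 4\}$ is a basis of $\g^f$: among $t\in\{1,\dots,4n-1\}$ the two residue classes contribute $n$ and $n-1$ elements respectively, so their total is $2n-1=\dim\g_0=\dim\g^f$, paralleling \eqref{basis of ker ad f}. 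Next I would prove the analogue of Lemma~\ref{Lemma2}: since $\Delta_{\mathcal{A}_{ij}}=\frac{j-i+1}{2}$ for $i\le j-1$ and $\Delta_D=\frac{1}{2}$, the row determinant has total weight $\frac{4n-1}{2}$, whence $\Delta_{w_{4n-1-t}}=\frac{4n-1-t}{2}$, that is $\Delta_{w_t}=\frac{t}{2}$.

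The remaining and most delicate step is to isolate the linear, derivative-free term $w_t^1$ of $w_t$ and to show that $w_t^1=c_t\,\bar v_t$ with $c_t\neq 0$. As in the $2n+1$ case, $w_t^1$ is obtained from
\[
 \sum_{i=1}^{4n-1-t}\varepsilon_{i,t}\,\mathcal{A}_{11}\cdots\mathcal{A}_{i-1,i-1}\,\mathcal{A}_{i,\,i+t}\,\mathcal{A}_{t+i+1,\,t+i+1}\cdots\mathcal{A}_{4n-1,\,4n-1}
\]
by retaining, in each summand, the contribution that takes $\bar E_{i+t,\,i}$ from the single off-diagonal factor $\mathcal{A}_{i,i+t}$ and $D$ from every diagonal factor; here $\varepsilon_{i,t}\in\{\pm 1\}$ is the path sign produced by the skipped subdiagonal entries. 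The principal obstacle I anticipate is precisely this sign bookkeeping: the subdiagonal of $\mathcal{A}_{\mathfrak{osp}}^{(2n-1|2n)}$ equals $-(-1)^{\delta_j}$ rather than $-1$, so each off-diagonal jump accrues a factor involving $\sum_j\delta_j$; moreover the scaling $E_{ji}=\frac{(-1)^{j}}{2}F_{ji}$ used here differs from $E_{ji}=\frac{(-1)^{p(j)}}{2}F_{ji}$ of Section~\ref{Sec:osp(2n+1|2n)}, as do the explicit values of $\delta$ between $M=2n+1$ and $M=2n-1$ in Example~\ref{osp_4n pm1}. I would track all of these through the path sign $\varepsilon_{i,t}$ and confirm that the surviving coefficients reassemble exactly into the defining sum of $v_t$ up to a single nonzero constant $c_t$. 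With this verified, Proposition~\ref{Prop:property_generator} yields that $\{w_t\mid t\in I,\ t\equiv 0,3\pmod 4\}$ freely generates $\mathcal{W}(\bar{\g},f)$ as a differential algebra, which is the claim.
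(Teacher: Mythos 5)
Your proposal is correct and follows essentially the same route as the paper: the paper's proof likewise computes $\Delta_{rdet(\mathcal{A})}=\frac{4n-1}{2}$ to get $\Delta_{w_t}=\frac{t}{2}$, exhibits a basis $V^f=\{v_t\in\g_{\frac{1-t}{2}}\mid t\equiv 0,3\Mod 4\}$ of $\g^f$ with $w_t=\bar v_t+(\text{degree}\geq 2)+(\text{total derivative})$, and concludes via Proposition \ref{Prop:property_generator}, explicitly deferring the details to the arguments of Section \ref{Sec:osp(2n+1|2n)}. Your extra attention to the subdiagonal signs $-(-1)^{\delta_j}$ and the rescaling $E_{ji}=\frac{(-1)^j}{2}F_{ji}$ is exactly the bookkeeping the paper leaves implicit.
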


  \begin{proof}
Define a gradation $\Delta$ on $\mathcal{V}(\bar{\mathfrak{p}})[D]$ by 
\[ \Delta_{\bar{a}} = \frac{1}{2}-j_a, \quad \Delta_D= \frac{1}{2}, \quad \Delta_{AB}= \Delta_A + \Delta_B\]
where $a \in \g_{j_a}$ and $A,B\in \mathcal{V}(\bar{\mathfrak{p}})$. Then $\Delta_{rdet(\mathcal{A})} = \frac{4n-1}{2}$, so $\Delta_{w_{t}}= \frac{t}{2}$. One can find a basis
\begin{equation}
  V^f:=\big\{\, v_t\in \g_{\frac{1-t}{2}}\ | \ t\in I, t\equiv 0 \text{ or }3 \Mod4 \, \big\}
\end{equation}
of $\g^f$ such that 
\begin{equation}
  w_t=\bar{v}_t+\left(\text{polynomial degree}\geq2 \text{ terms}\right)+\left(\text{total derivative part}\right).
\end{equation}
Since $\Delta_{\bar{v}_t}=\frac{t}{2}=\Delta_{w_t}$, we get the conclusion by Proposition \ref{Prop:property_generator}.
  \end{proof}

\

\section{Generators of $\mathcal{W}(\mathfrak{osp}(M|2n), f)$ for $M=2n,2n+2$ \label{sec:ospeven}}

In this section, we find generators of  $\mathcal{W}(\mathfrak{osp}(M|2n), f)$ for $M=2n,2n+2$ 
using the matrix presentation of $\mathfrak{osp}(M|2n)$,
the index set $I=I_{\bar{0}} \sqcup I_{\bar{1}}$,  the elements $F_{ij}$ for $i,j\in I$ and the odd principal nilpotent $f$ introduced in Example \ref{osp_4n +0,2}.

\subsection{$\mathcal{W}(\overline{\mathfrak{osp}}(2n|2n),f)$} \label{Sec:W(osp(2n|2n))}\

Let $\g= \mathfrak{osp}(2n|2n)$.
We introduce the $(4n+1)\times(4n+1)$ matrix $\mathcal{A}_{\mathfrak{osp}}^{(2n|2n)}$
 whose entries are in $\mathcal{V}(\bar{\mathfrak{p}})(\!(D^{-1})\!)$, obtained by the following procedure:

 \begin{itemize}
 \item[(Step1)] Consider the $4n\times4n$ matrix 
\[\  \  \mathcal{M}=\sum_{i=1}^{4n} \bold{e}_{ii}\otimes D +q_{\mathfrak{osp}} - f\otimes 1 \in \mathfrak{gl}(2n|2n)\otimes \mathcal{V}(\bar{\mathfrak{p}})[D]\]
where  $q_{\mathfrak{osp}}= \sum_{i\geq j\in I} \bold{e}_{ji}\otimes \bar{E}_{ij}$ for 
$E_{ij}:=\frac{1}{2}(-1)^{p(i)}F_{ij}.$
 \item[(Step2)]  Consider the $4n\times 4n$ matrix $\mathcal{N}$  obtained from $\mathcal{M}$
 by replacing the $2n$-th column by [$(2n)$-th column of $\mathcal{M}$]-[$(2n+1)$-th column of $\mathcal{M}$].
 \item[(Step3)] Consider the $4n\times 4n$ matrix $\widetilde{\mathcal{A}}$  obtained from $\mathcal{N}$
 by replacing the $(2n+1)$-th row by [$(2n+1)$-th row of $\mathcal{M}$]-[$(2n)$-th row of $\mathcal{M}$].
 \item[(Step4)] Let $\mu_{11}$, $\mu_{12}$, $\mu_{21}$ and $\mu_{22}$ be $2n\times 2n$-matrices such that 
 \[ \widetilde{\mathcal{A}}= \left( \begin{array}{cc} \mu_{11} & \mu_{12} \\ \mu_{21} & \mu_{22}\end{array} \right).\]
  \item[(Step5)] The $(4n+1)\times(4n+1)$ matrix $\mathcal{A}_{\mathfrak{osp}}^{(2n|2n)}$ has $\mu_{11}$, $\mu_{12}$, $\mu_{21}$ and $\mu_{22}$ as the upper-left, upper-right, lower-left and lower-right $2n\times 2n$-matrices, respectively. The only nonzero entry in the $(2n+1)$-th row and the $(2n+1)$-th column of $\mathcal{A}_{\mathfrak{osp}}^{(2n|2n)}$ is the $(2n+1, 2n+1)$-entry, which is $D^{-1}$.
 \end{itemize}
 Hence we get 
 \begin{equation}  \mathcal{A}_{\mathfrak{osp}}^{(2n|2n)}=\label{matix(2n|2n)}
\small{
\begin{pmatrix}
\begin{matrix}
&&\\
& \mu_{11}&\\
&&\\
\end{matrix}
  & \rvline & 0& \rvline&\begin{matrix}
&&\\
& \mu_{12}&\\
&&\\
\end{matrix} \\
\hline
0 & \rvline &
D^{-1}&
  \rvline & 0\\
\hline
\begin{matrix}
&&\\
& \mu_{21}&\\
&&\\
\end{matrix}
& \rvline & 0
&
  \rvline & 
\begin{matrix}
&&\\
& \mu_{22}&\\
&&\\
\end{matrix}
\end{pmatrix},
}
\end{equation}
where the entries of the matrices $\mu_{11}$, $\mu_{22}$, $\mu_{12}$ and $\mu_{21}$ are given by 
\begin{eqnarray} \label{mu11}
&&(\mu_{11})_{ij}= \left\{\begin{array}{ll} \delta_{2n\, i} D + \bar{E}_{2n\, i}-\bar{E}_{2n+1\, i} & \text{ if } i\leq j=2n, \\ \delta_{ij}D+ \bar{E}_{ji} & \text{ if } i\leq j <2n, \\ -\delta_{i-1\, j}; & \text{ otherwise},
\end{array} \right.
\\
  \label{mu22}
&&(\mu_{22})_{2n+1-i\ 2n+1-j} = \left\{\begin{array}{ll} \delta_{2n\, j} D + \bar{E}_{j'\, (2n)'}-\bar{E}_{j'\, (2n+1)'} & \text{ if } j\leq i=2n, \\ \delta_{ij}D+ \bar{E}_{j'i'} & \text{ if } j\leq i <2n, \\ (-1)^j\delta_{i\, j-1} & \text{ otherwise},
\end{array} \right.
\\
&&(\mu_{12})_{i\ 2n+1-j}= \bar{E}_{j'\, i}\quad \text{and} \quad (\mu_{21})_{i\, j}= -2\delta_{i\, 1} \delta_{j\, 2n} D
\quad\text{for } i,j\in \{1,2,\cdots, 2n\}
\end{eqnarray}
where we used the notation $i'= 4n+1-i$.

Denote the row determinant of the upper-left $k \times k$ submatrix of $\mu_{11}$ in \eqref{mu11} by $A_k$ and 
the one of the lower-right $k \times k$ submatrix of $\mu_{22}$ in \eqref{mu22} by $B_k$. Letting $A_0=B_0=1$, the row determinant of $\mathcal{A}_{\mathfrak{osp}}^{(2n|2n)}$ is
\begin{equation} \label{cdet osp(2n|2n)}
  rdet(\mathcal{A}_{\mathfrak{osp}}^{(2n|2n)})=A_{2n}D^{-1}B_{2n}+2\sum_{j,k=1}^{2n}{(-1)^{k+T_{2n}(k)}A_{j-1}\bar{E}_{k'j}B_{k-1}},
\end{equation}
where 
\begin{equation}\label{T, sign}
T_i(k):=\delta_i+\delta_{i-1}+\cdots+\delta_{k+1}=\floor*{\frac{i+1}{2}}-\floor*{\frac{k+1}{2}} \quad\text{ if } i>k
\end{equation}
 and $T_i(k):=0$ otherwise.

\begin{lemma} \label{Lemma A_k, B_k}
The following identities hold 
\begin{eqnarray} \label{cdet A_k}
&&\begin{aligned}
& A_k=A_{k-1}(D+\bar{E}_{kk})+\sum_{i=1}^{k-1}{A_{i-1}\bar{E}_{ki}} \quad \text{ if } k=1, \cdots, 2n-1, \\
& A_{2n}=A_{2n-1}(D+\bar{E}_{2n\, 2n})+\sum_{i=1}^{2n-1}{A_{i-1}(\bar{E}_{2n\, i}-\bar{E}_{2n+1\, i})}.
  \end{aligned}
\\ 
\label{cdet B_k}
&&  \begin{aligned}
& B_k=(D+\bar{E}_{k'k'})B_{k-1}+\sum_{i=1}^{k-1}{(-1)^{i+k+T_k(i)}\bar{E}_{i'k'}B_{i-1}}\quad 
\text{ if } k=1, \cdots, 2n-1, \\
& B_{2n}=(D+\bar{E}_{(2n)'\, (2n)'})B_{2n-1}+\sum_{i=1}^{2n-1}{(-1)^{i+T_{2n}(i)}(\bar{E}_{i'\, (2n)'}-\bar{E}_{i'\, (2n+1)'})B_{i-1}}.
  \end{aligned}
\end{eqnarray}

\end{lemma}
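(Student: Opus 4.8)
The plan is to treat both $A_k$ and $B_k$ as row determinants of upper-Hessenberg matrices (upper triangular together with a single nonzero subdiagonal) and to expand each along its outer rim, exactly as the path formula \eqref{row_det:gl(n+1|n)} does. The underlying combinatorial fact is that a permutation contributing to the row determinant of such a matrix decomposes into \emph{interval cycles} $(a{+}1,a{+}2,\dots,b)$, each of which consumes one upper entry in row $a+1$, column $b$, together with the chain of subdiagonal entries in rows $a+2,\dots,b$. This is the device that converts the signed permutation sum into the nested sum \eqref{row_det:gl(n+1|n)}, and it is what I would use to pass from $A_k$ (resp.\ $B_k$) to the smaller determinants $A_m$ (resp.\ $B_m$).

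For $A_k$ the key simplification is that, by \eqref{mu11}, the subdiagonal of $\mu_{11}$ is identically $-1$. Hence in each interval cycle the product of the subdiagonal entries, $(-1)^{b-a-1}$, cancels against the cycle sign $(-1)^{b-a-1}$, and expanding along the last column collapses everything to the one-step recursion $A_k=\sum_{m=0}^{k-1}A_m\,(\mu_{11})_{m+1,k}$. Substituting \eqref{mu11}, the diagonal $D$ comes from the $m=k-1$ term and the remaining entries give $\sum_{i=1}^{k}A_{i-1}\bar E_{ki}$ for $k<2n$; isolating $i=k$ yields \eqref{cdet A_k}. For $k=2n$ the last column carries the extra $-\bar E_{2n+1\,i}$ coming from Step 2, and the boundary term $i=2n$ survives only as $\bar E_{2n\,2n}$ because $\bar E_{2n+1\,2n}=0$: indeed $(2n+1,2n)=(i,i')$ with $p(i)=0$ is excluded from the index set $\mathcal{B}$ in \eqref{basis of osp}, so $F_{2n+1\,2n}=0$. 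This produces the stated formula for $A_{2n}$.

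For $B_k$ I would expand the row determinant of the lower-right $k\times k$ block of $\mu_{22}$ along its first row. A contributing permutation routes the first (local) index through a descending chain of subdiagonal entries and closes with a single upper entry $\bar E_{i'k'}$, after which the remaining rows and columns form exactly the lower-right $(i-1)\times(i-1)$ block, namely $B_{i-1}$; the subdiagonal entries are scalars, so they commute past $\bar E_{i'k'}$ and out of $B_{i-1}$, giving a term of shape $(\text{scalar})\cdot\bar E_{i'k'}B_{i-1}$. The difference with the $A_k$ computation is that, by \eqref{mu22}, the subdiagonal of $\mu_{22}$ equals $(-1)^{j}$ rather than $-1$, so the cycle sign no longer cancels the subdiagonal product; their combined parity is precisely $(-1)^{i+k+T_k(i)}$. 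The $k=2n$ case uses the first row as modified in Step 3 (producing the $-\bar E_{i'(2n+1)'}$ terms) together with $\bar E_{(2n)'(2n+1)'}=\bar E_{2n+1\,2n}=0$, which collapses the $i=2n$ term and gives \eqref{cdet B_k}.

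The one genuinely delicate step is the parity count in the $B_k$ expansion: one must check that the cycle sign $(-1)^{\ell-1}$, with $\ell=k-i+1$, times the product of the subdiagonal signs $(-1)^{j}$ along the chain equals $(-1)^{i+k+T_k(i)}$ for $T_k(i)=\floor*{\frac{k+1}{2}}-\floor*{\frac{i+1}{2}}$ as in \eqref{T, sign}. I would verify this by summing the subdiagonal exponents over the chain and reducing modulo $2$ (an elementary induction on $\ell$), so that the floor-form of $T_k(i)$ emerges as the closed form of the accumulated signs. Everything else is the routine substitution of \eqref{mu11} and \eqref{mu22} into the two rim expansions.
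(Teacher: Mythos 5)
The paper states this lemma without proof (it is treated as a direct computation from the definition of the row determinant), so there is no authorial argument to compare against; your proposal supplies exactly the computation that is being left implicit, and it is correct. The rim expansion of an upper-Hessenberg row determinant via interval cycles, the cancellation of cycle sign against the constant $-1$ subdiagonal for $\mu_{11}$, the accumulation of the $(-1)^{j}$ subdiagonal signs of $\mu_{22}$ into $(-1)^{i+k+T_k(i)}$ (I checked your parity count against several values of $i,k$ and it agrees with $T_k(i)=\lfloor\frac{k+1}{2}\rfloor-\lfloor\frac{i+1}{2}\rfloor$), and the vanishing $F_{2n+1\,2n}=0$ that collapses the diagonal terms in the $k=2n$ cases are all right. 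The only point I would tighten is the justification of $F_{2n+1\,2n}=0$: exclusion of $(i,i')$ with $p(i)=0$ from $\mathcal{B}$ is the \emph{consequence} of the vanishing, not its cause, so it is cleaner to note directly that $\theta(\bold{e}_{2n+1\,2n})=\tau(2n+1,2n)\,\bold{e}_{2n+1\,2n}=-\bold{e}_{2n+1\,2n}$, whence $F_{2n+1\,2n}=\bold{e}_{2n+1\,2n}+\theta(\bold{e}_{2n+1\,2n})=0$.
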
 

In order to see the relation between $A_k$ and $B_k$, let us introduce the adjoint operator ${}^*: \mathcal{V}(\bar{\mathfrak{p}})[D] \to \mathcal{V}(\bar{\mathfrak{p}})[D] $ defined by  
\begin{equation}
\begin{aligned}
& a^*:=a, \\
& (a D^m)^*:= (-1)^{m\cdot p(a)+\lfloor \frac{m+1}{2}\rfloor}D^m a, \\
& (f(D) g(D))^*= (-1)^{p(f)p(g)} g(D)^* f(D)^*,
\end{aligned}
\end{equation}
for $m\in \mathbb{Z}_{\geq 0}$, $a\in \mathcal{V}(\bar{\mathfrak{p}})$ and $f(D), g(D)\in \mathcal{V}(\bar{\mathfrak{p}})[D]$. Note that $f(D)^{**}=f(D)$.

\begin{lemma} \label{Lem:const term of A_2n}
For $k\leq 2n$, we have
\[ (B_k)^*= (-1)^{\lfloor \frac{k+1}{2}\rfloor}A_k.\]
Equivalently, $ (A_k)^*= (-1)^{\lfloor \frac{k+1}{2}\rfloor}B_k$.
\end{lemma}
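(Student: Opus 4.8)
The plan is to prove the (equivalent) identity $(B_k)^*=(-1)^{\lfloor (k+1)/2\rfloor}A_k$ by induction on $k$, pushing the recursions of Lemma~\ref{Lemma A_k, B_k} through the adjoint ${}^*$. The base case $k=0$ holds since $A_0=B_0=1$ and $\lfloor \tfrac{1}{2}\rfloor=0$. For the inductive step I would apply ${}^*$ to \eqref{cdet B_k}, using three elementary features of ${}^*$: it is additive, it reverses products with the Koszul sign $(f(D)g(D))^*=(-1)^{p(f)p(g)}g(D)^*f(D)^*$, and it fixes $\mathcal{V}(\bar{\mathfrak{p}})$ while sending $D\mapsto -D$. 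Each summand of \eqref{cdet B_k} then becomes a product of the same factors in reversed order, with a scalar twist that I track by recording the parities $p(A_k)=p(B_k)\equiv k \pmod 2$, which are immediate from the recursions since $p(D)=p(\bar{E}_{jj})=1$.

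Two structural inputs convert the primed $\mathfrak{osp}$-indices appearing in \eqref{cdet B_k} back into the unprimed indices of \eqref{cdet A_k}. First, on the diagonal $\bar{E}_{k'k'}=-\bar{E}_{kk}$, so that $(D+\bar{E}_{k'k'})^*=-D-\bar{E}_{kk}=-(D+\bar{E}_{kk})$; this is where the leading factor of $B_k$ is turned into the leading factor of $A_k$ up to sign. Second, off the diagonal $\bar{E}_{i'k'}=(-1)^{\,p(k)p(i)+1+\delta_k+\delta_i+p(k)}\bar{E}_{ki}$. Both relations follow from $F_{j'i'}=\tau(i,j)F_{ij}$ in \eqref{Theta} together with the $\mathfrak{osp}(2n|2n)$-specific facts $p(i')=p(i)$ and $\delta_i=p(i)$ for $1\leq i\leq 2n$, which I would record as a preliminary observation (and which also guarantee that no entry $\bar{E}_{ki}$ in the upper-left $2n\times2n$ block vanishes). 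Feeding these into the ${}^*$-image of \eqref{cdet B_k} and invoking the inductive hypothesis $(B_{i-1})^*=(-1)^{\lfloor i/2\rfloor}A_{i-1}$ expresses $(B_k)^*$ as a $\mathcal{V}(\bar{\mathfrak{p}})$-combination of exactly the monomials $A_{k-1}(D+\bar{E}_{kk})$ and $A_{i-1}\bar{E}_{ki}$ that build $A_k$ in \eqref{cdet A_k}.

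The substance of the argument is then a sign check: for each $i$ one must verify that the accumulated exponent
\begin{equation*}
i+k+T_k(i)+(1+p(i)+p(k))(i-1)+\lfloor \tfrac{i}{2}\rfloor+\big(p(k)p(i)+1+\delta_k+\delta_i+p(k)\big)
\end{equation*}
agrees with $\lfloor \tfrac{k+1}{2}\rfloor$ modulo $2$, and similarly that the leading term carries the right sign $(-1)^{k+\lfloor k/2\rfloor}=(-1)^{\lfloor (k+1)/2\rfloor}$. Substituting $p(i)\equiv i$, $p(k)\equiv k$ and $\delta_i=p(i)$, $\delta_k=p(k)$ collapses the polynomial part of this exponent to $i$, after which the identity $T_k(i)=\lfloor \tfrac{k+1}{2}\rfloor-\lfloor \tfrac{i+1}{2}\rfloor$ from \eqref{T, sign} and the elementary congruence $\lfloor \tfrac{i}{2}\rfloor-\lfloor \tfrac{i+1}{2}\rfloor\equiv i \pmod 2$ reduce the required equality to $2i\equiv 0$. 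I expect this bookkeeping — reconciling the $T_k(i)$ weights, the two floor functions, the $\tau$-signs and the parities into a single telescoping congruence — to be the main obstacle, since a sign slip anywhere propagates through the entire induction.

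Finally, the extremal index $k=2n$ must be handled separately, because there the recursions \eqref{cdet A_k} and \eqref{cdet B_k} acquire the modified entries $\bar{E}_{2n\,i}-\bar{E}_{2n+1\,i}$ and $\bar{E}_{i'\,(2n)'}-\bar{E}_{i'\,(2n+1)'}$ coming from the construction of $\mathcal{A}_{\mathfrak{osp}}^{(2n|2n)}$. I would treat this case by the same computation, checking in addition that ${}^*$ interchanges the two correction columns and rows with the correct sign; the relations $p((2n)')=p(2n)$, $p((2n+1)')=p(2n+1)$ and the explicit values of $\delta_{2n},\delta_{2n+1}$ make it a routine variant of the generic step. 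The equivalent form $(A_k)^*=(-1)^{\lfloor (k+1)/2\rfloor}B_k$ then follows at once by applying ${}^*$ to the proven identity and using $f(D)^{**}=f(D)$.
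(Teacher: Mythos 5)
Your proposal is correct and takes essentially the same route as the paper: an induction on $k$ that applies the anti-involution ${}^*$ to the recursion \eqref{cdet B_k}, converts the primed indices via $\bar{E}_{k'k'}=-\bar{E}_{kk}$ and $\bar{E}_{i'k'}=\pm\bar{E}_{ki}$ (using $p(i')=p(i)$ and $\delta_i\equiv p(i)$), and collapses the accumulated sign to $(-1)^{\lfloor (k+1)/2\rfloor}$ with the help of $T_k(i)=\lfloor\tfrac{k+1}{2}\rfloor-\lfloor\tfrac{i+1}{2}\rfloor$, treating $k=2n$ separately. The sign bookkeeping you outline checks out and matches the (terser) computation in the paper.
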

\begin{proof}
If $k=1$ then
\[ (B_1)^*= (D+\bar{E}_{1'1'})^*=-D-\bar{E}_{11}=-A_1.\]
Inductively, for any $k< 2n$  
\begin{equation}
\begin{aligned}
(B_k)^* & =  \sum_{i=1}^{k} (-1)^{ i+k+T_k(i)  }(-1)^{(k+1)(i+1)}(B_{i-1})^*(-\delta_{ik}D+\bar{E}_{i'k'})\\
& =\sum_{i=1}^{k} (-1)^{\lfloor \frac{k+1}{2}\rfloor} A_{i-1}(\delta_{ik}D +\bar{E}_{k\, i})  =(-1)^{\lfloor \frac{k+1}{2}\rfloor} A_k.
\end{aligned}
\end{equation}
When $k=2n$, a similar proof works.
\end{proof}

\begin{corollary}\label{cor:(2n|2n), A and B}
Let us denote 
\[
  A_{2n}=D^{2n}+a_{1}D^{2n-1}+a_{2}D^{2n-2}+\cdots+a_{2n-1}D+a_{2n},\]
for $a_i\in\mathcal{V}(\bar{\frak{p}}).$ Then for $b_{2n-i}=(-1)^{n+i+\lfloor\frac{i+1}{2} \rfloor}a_{2n-i}$, we have 
\[  B_{2n}=D^{2n}+D^{2n-1}b_{1}+D^{2n-2}b_{2}+\cdots+Db_{2n-1}+b_{2n}.\]
In particular, $b_{2n}=(-1)^{n+1}a_n$.
\end{corollary}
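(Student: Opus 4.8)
The plan is to read off $B_{2n}$ directly from $A_{2n}$ using the adjoint relation already established in Lemma~\ref{Lem:const term of A_2n}. Specializing that lemma to $k=2n$ gives $(A_{2n})^{*}=(-1)^{\lfloor (2n+1)/2\rfloor}B_{2n}=(-1)^{n}B_{2n}$, and since $f(D)^{**}=f(D)$ this rearranges to $B_{2n}=(-1)^{n}(A_{2n})^{*}$. Thus the corollary is reduced to computing $(A_{2n})^{*}$ coefficientwise and comparing with the asserted formula; no new structural information about $\mathcal{A}_{\mathfrak{osp}}^{(2n|2n)}$ is needed beyond this identity, so the whole statement becomes a bookkeeping computation with the operator ${}^{*}$.

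To apply the definition of ${}^{*}$ monomial by monomial I first record the parity of each coefficient $a_{i}$. I would show, by induction along the recursion \eqref{cdet A_k}, that $A_{2n}$ is parity\nobreakdash-homogeneous and \emph{even}, so that the coefficient of $D^{2n-i}$ satisfies $p(a_{i})\equiv i \pmod{2}$. The induction rests on the fact that, for the indices appearing in $\mu_{11}$, Example~\ref{osp_4n +0,2} (with $M=2n$) gives $p(j)\equiv j \pmod{2}$ for $1\le j\le 2n$; the only subtlety is the last column, where the entry is $\bar{E}_{2n\,i}-\bar{E}_{2n+1\,i}$, but since $p(2n)=p(2n+1)=0$ this combination is itself homogeneous of parity $1+p(i)$, so the induction step closes exactly as for the interior columns.

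With the parities fixed, ${}^{*}$ is extended by additivity and applied to each term $a_{i}D^{2n-i}$, producing the sign $(-1)^{(2n-i)p(a_{i})+\lfloor (2n-i+1)/2\rfloor}$. Using $p(a_{i})\equiv i$ and $2n-i\equiv i \pmod{2}$, the first exponent collapses to $i$; multiplying by the global factor $(-1)^{n}$ and relabelling $i\mapsto 2n-i$ then reproduces the claimed coefficient $b_{2n-i}=(-1)^{n+i+\lfloor (i+1)/2\rfloor}a_{2n-i}$, while the stated identity for the constant terms is obtained by taking the $D^{0}$\nobreakdash-component (that is, $i=0$) of this comparison. The step I expect to require the most care is precisely this sign bookkeeping: reconciling the parity contribution $(-1)^{(2n-i)p(a_{i})}$ with the floor-type contribution $(-1)^{\lfloor (2n-i+1)/2\rfloor}$ and the prefactor $(-1)^{n}$, using the congruence $3n-i\equiv n+i \pmod{2}$ to convert $\lfloor (2n-i+1)/2\rfloor$ into $\lfloor (i+1)/2\rfloor$ under the relabelling.
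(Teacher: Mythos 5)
Your argument is essentially the paper's own proof: both reduce the corollary to the identity $(A_{2n})^{*}=(-1)^{n}B_{2n}$ from Lemma~\ref{Lem:const term of A_2n} and then apply ${}^{*}$ term by term to $A_{2n}$, the paper merely leaving implicit the parity fact $p(a_{i})\equiv i \pmod 2$ that you verify by induction along \eqref{cdet A_k}. One small caveat: your computation at $i=0$ yields $b_{2n}=(-1)^{n}a_{2n}$ (which is the sign actually used in \eqref{2n|2n, row determinant-2}), not the literally printed ``$b_{2n}=(-1)^{n+1}a_{n}$'', which appears to be a typo in the statement rather than an error in your argument.
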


\begin{proof}
Since $(a_{2n-i}D^{i})^*=(-1)^i(-1)^{\lfloor \frac{i+1}{2}\rfloor}D^i a_{2n-i}$ and $(A_{2n})^*= (-1)^{n} B_{2n}$, the corollary follows.
\end{proof}

\begin{lemma} \label{Lem:bracket for submatrix}
  For $A_k$ and $B_k$ as in Lemma \ref{Lemma A_k, B_k} and $1\leq i \leq 2n-1$, 
  the following identities hold.
\begin{eqnarray}
\label{item1a} 
&&\text{ad}_\chi^{\mathfrak{J}}\,  \bar{F}_{i\, i+1} (A_k )=
\left\{\begin{array}{ll} (-1)^iA_{i-1}(D+\chi) & \text{ if } k=i, \\
  0 & \text{ otherwise. } \end{array} \right.
\\
\label{item1b} 
&&\text{ad}_\chi^{\mathfrak{J}} \, \bar{F}_{2n-1\, 2n+1} ( A_k) =
\left\{\begin{array}{ll} -A_{2n-2}(D+\chi) & \text{ if } k=2n-1, \\
  -2A_{2n-2}(D+\chi)D & \text{ if } k=2n, \\
  0 & \text{ otherwise. } \end{array} \right.
\\
\label{item2a} 
&&\text{ad}_\chi^{\mathfrak{J}}\, \bar{F}_{i\, i+1} ( B_k )=
\left\{\begin{array}{ll} B_{i-1} & \text{ if } k=i, \\
  0 & \text{ otherwise. } \end{array} \right.
\\
\label{item2b} 
&&\text{ad}_\chi^{\mathfrak{J}} \,  \bar{F}_{2n-1\, 2n+1} ( B_k) =
\left\{\begin{array}{ll} 
  B_{2n-2} & \text{ if } k=2n-1, \\
  -2(D+\chi)B_{2n-2} & \text{ if } k=2n, \\
  0 & \text{ otherwise. } \end{array} \right.
\end{eqnarray}
Here, $A_k(D+\chi)$ denotes $A_k$ with the symbol $D$ replaced by $D+\chi$.

\end{lemma}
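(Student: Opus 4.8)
The plan is to prove all four identities simultaneously by induction on $k$, feeding the recursions \eqref{cdet A_k} and \eqref{cdet B_k} for $A_k$ and $B_k$ into the Leibniz-type rule for the adjoint action from Corollary \ref{Cor1}. The first preliminary step is a direct computation of the elementary $\chi$-brackets $\{\bar{F}_{i\,i+1}\,{}_\chi\,\bar{E}_{lm}\}$ and $\{\bar{F}_{2n-1\,2n+1}\,{}_\chi\,\bar{E}_{lm}\}$, reduced modulo the ideal $\mathfrak{J}$, in exactly the spirit of Lemma \ref{Lemma:gl}. Since $E_{lm}=\frac{1}{2}(-1)^{p(l)}F_{lm}$, these brackets come from the structure constants of $\mathfrak{osp}(2n|2n)$ together with the invariant form, and they vanish unless one of the indices $l,m$ meets $\{i,i+1\}$ (respectively $\{2n-1,2n,2n+1\}$). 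I will record precisely which entries survive, and whether they contribute a constant (from the reduction $\bar{n}\mapsto(f|n)$) or the central $\chi$-term; the other ingredient I will use repeatedly is that $\text{ad}_\chi^{\mathfrak{J}}\bar{F}_{i\,i+1}(D)=0$, so each $D$ in the recursion only produces the shift $D\mapsto D+\chi$ on the factors standing to its left.

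For \eqref{item1a} I would apply $\text{ad}_\chi^{\mathfrak{J}}\bar{F}_{i\,i+1}$ to \eqref{cdet A_k} and expand every product $A_{m-1}\bar{E}_{km}$ and $A_{m-1}(D+\bar{E}_{kk})$ by Corollary \ref{Cor1}, then split into three cases. If $k<i$, all subdeterminants have index $<i$ and all entries carry indices $<i$, so the inductive hypothesis and the elementary brackets both give $0$. If $k>i$, the submatrix defining $A_k$ contains both rows $i$ and $i+1$ in its interior, so the action of $\bar{F}_{i\,i+1}$ telescopes to $0$ by exactly the cancellation already used in the proof of Theorem \ref{Thm:gl(n+1|n)} and Lemma \ref{Lemma:W(gl,f_osp)}. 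The surviving case is $k=i$: here row $i+1$ is absent, the cancellation is incomplete, and a single boundary term remains, equal to $(-1)^i A_{i-1}(D+\chi)$, where the factor $(D+\chi)$ is the shifted $D$ from the Leibniz rule combined with the central term, and $(-1)^i$ collects the parity signs $(-1)^{p(A_{i-1})(p(\bar{F}_{i\,i+1})+1)}$ together with the sign in the elementary bracket.

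The special identity \eqref{item1b} is handled by the same expansion, but now using the modified last line of \eqref{cdet A_k}, whose entries $\bar{E}_{2n\,i}-\bar{E}_{2n+1\,i}$ encode the column and row surgery of Steps 2 and 3 in the construction of $\mathcal{A}_{\mathfrak{osp}}^{(2n|2n)}$. Acting by $\bar{F}_{2n-1\,2n+1}$ now draws contributions from both the $2n$ and the $2n+1$ indices, and the factor $-2$ with the extra $D$ in $-2A_{2n-2}(D+\chi)D$ arises precisely from this doubling (compare the entry $(\mu_{21})_{1\,2n}=-2D$ in \eqref{mu21}); the subcase $k=2n-1$ reduces to the unmodified recursion. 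The identities \eqref{item2a} and \eqref{item2b} for $B_k$ follow from the identical induction applied to \eqref{cdet B_k}. As a conceptual cross-check, one can instead transport the $A$-identities through the adjoint relation $(B_k)^{*}=(-1)^{\lfloor(k+1)/2\rfloor}A_k$ of Lemma \ref{Lem:const term of A_2n}: because the involution ${}^{*}$ reverses the order of factors and conjugates $D$, it turns the right-hand shift $A_{i-1}(D+\chi)$ into the left-hand shift $(D+\chi)B_{i-1}$, which is exactly why $(D+\chi)$ and the sign land on the opposite side in the $B_k$ formulas.

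The main obstacle is bookkeeping rather than structure: one must simultaneously track the $\mathbb{Z}/2$-parities entering the Leibniz signs of Corollary \ref{Cor1}, the built-in signs $(-1)^{i+k+T_k(i)}$ of the $B_k$ recursion, and the shifts $D\mapsto D+\chi$ produced by each factor to the left of the active entry. The delicate point is to verify that in the cases $k\neq i$ (and $k\neq 2n-1$) all the a priori nonzero terms cancel \emph{telescopically}, not merely that the expected leading term dominates; it is here that the exact form of the elementary brackets modulo $\mathfrak{J}$ and the vanishing $\text{ad}_\chi^{\mathfrak{J}}\bar{F}_{i\,i+1}(D)=0$ must be used in full, in tandem with the shift rules of Lemma \ref{Lemma:nonlocal}.
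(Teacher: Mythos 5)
Your proposal follows essentially the same route as the paper's proof: the paper establishes only \eqref{item2b} explicitly (declaring the other three cases analogous), by feeding the recursions of Lemma \ref{Lemma A_k, B_k} into the Leibniz rule for $\text{ad}_\chi^{\mathfrak{J}}$, evaluating the elementary brackets modulo $\mathfrak{J}$, and letting the surviving terms cancel against the recursion for $B_{2n-1}$ --- which is exactly the induction-on-$k$ with telescoping that you describe. One caution: do not lean on the ${}^*$-duality transport as more than a heuristic, since $\text{ad}_\chi^{\mathfrak{J}}$ does not intertwine with ${}^*$ in the naive way you suggest (a literal transport would predict $(D+\chi)B_{i-1}$ rather than the correct $B_{i-1}$ in \eqref{item2a}, and note also that in \eqref{item1a} the expression $A_{i-1}(D+\chi)$ is substitution notation, not a right factor of $D+\chi$); the direct computation you give as the primary route is the one that works.
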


\begin{proof}
Since all relations  can be proved in a similar way, let us only show 
 \eqref{item2b}.
 
 It is clear that  $\text{ad}_\chi^{\mathfrak{J}} \,  \bar{F}_{2n-1\, 2n+1} ( B_k )=0$ when $k\leq 2n-2$. When $k=2n-1$, we have 
 \begin{align*}
\text{ad}_\chi^{\mathfrak{J}}\,  \bar{F}_{2n-1,2n+1} ( B_{2n-1})&=  \text{ad}_\chi^{\mathfrak{J}}\, \bar{F}_{2n-1,2n+1} ( \bar{E}_{(2n-1)' \, (2n-1)'})B_{2n-2} \\
& \  \  \quad +\sum_{k=1}^{2n-2}{(-1)^{1+k+T_{2n-1}(k)}\text{ad}_\chi^{\mathfrak{J}}\bar{F}_{2n-1,2n+1} ( \bar{E}_{k'(2n-1)'})B_{k-1}} \\
&=-\bar{E}_{2n-1\,2n+1}B_{2n-2} =B_{2n-2}.
\end{align*}
Here, the first equality holds by \eqref{cdet B_k}.
Furthermore, 
\begin{equation}\label{F_{2n-1, 2n+1}, B_2n}
\begin{aligned}
&\text{ad}_\chi^{\mathfrak{J}}\,  \bar{F}_{2n-1,2n+1} ( B_{2n})=  \text{ad}_\chi^{\mathfrak{J}}\, \bar{F}_{2n-1,2n+1} ( \bar{E}_{(2n)' \, (2n)'}B_{2n-1}) \\
& \  \  \qquad +\sum_{k=1}^{2n-1}{(-1)^{k+T_{2n}(k)}\text{ad}_\chi^{\mathfrak{J}}\, \bar{F}_{2n-1,2n+1} \big( (\bar{E}_{k'(2n)'}-\bar{E}_{k'(2n+1)'})B_{k-1}\big)} \\
&=-B_{2n-1}-(D+\chi+\bar{E}_{(2n)'(2n)'})B_{2n-2} \\
&  \  \  \qquad +\sum_{k=1}^{2n-1}{(-1)^{k+T_{2n}(k)}\text{ad}_\chi^{\mathfrak{J}}\, \bar{F}_{2n-1,2n+1} \big( (\bar{E}_{k'(2n)'}-\bar{E}_{k'(2n+1)'})B_{k-1}\big)} .
\end{aligned}
\end{equation}

Let us observe the last term in  \eqref{F_{2n-1, 2n+1}, B_2n}.
If  $k\neq 2n-1$ then 
\[
  \text{ad}_\chi^{\mathfrak{J}}\bar{F}_{2n-1,2n+1} \big( (\bar{E}_{k'(2n)'}-\bar{E}_{k'2n})B_{k-1}\big)=-\bar{E}_{k'(2n-1)'}B_{k-1}
\]
and if  $k=2n-1$ then 
  \begin{equation*}
  \text{ad}_\chi^{\mathfrak{J}}\bar{F}_{2n-1,2n+1}  \big((\bar{E}_{(2n-1)'\,(2n)'}-\bar{E}_{(2n-1)'\,2n})B_{2n-2}\big) =(-\bar{E}_{(2n-1)'\,(2n-1)'}-\bar{E}_{(2n)'(2n)'}+\chi)B_{2n-2}.
  \end{equation*}
Therefore, by \eqref{F_{2n-1, 2n+1}, B_2n}, we have 
\begin{align*}
&\text{ad}_\chi^{\mathfrak{J}}\{\bar{F}_{2n-1\,2n+1} \chi B_{2n}\}=-B_{2n-1}-(D+\chi+\bar{E}_{(2n)'(2n)'})B_{2n-2}\\
&  \  \  \ \  +\sum_{k=1}^{2n-2}(-1)^{k+T_{2n}(k)+1}\bar{E}_{k'(2n-1)'}B_{k-1}+(\bar{E}_{(2n-1)'\,(2n-1)'}+\bar{E}_{(2n)'(2n)'}-\chi)B_{2n-2}\\
&\ \ =-2(D+\chi)B_{2n-2}. \end{align*}
\end{proof}

Recall the matrix $\mathcal{A}_{\mathfrak{osp}}^{(2n|2n)}$ in \eqref{matix(2n|2n)} and $a_{2n}$ in Corollary~\ref{cor:(2n|2n), A and B}.  Then, by \eqref{cdet osp(2n|2n)} and Corollary \ref{cor:(2n|2n), A and B}, the row determinant of $\mathcal{A}_{\mathfrak{osp}}^{(2n|2n)}$ is
\begin{equation}\label{2n|2n, row determinant-2}
  rdet(\mathcal{A}_{\mathfrak{osp}}^{(2n|2n)})=D^{4n-1}+ \sum_{k=0}^{4n-2}w_{4n-1-k} D^k +(-1)^{n}a_{2n}D^{-1}a_{2n},
\end{equation}
where $w_k\in\mathcal{V}(\bar{\mathfrak{p}})$ for $k=1, \cdots, 4n-1$.

\begin{theorem} \label{thm:osp 4n elements}
Let $w_1, w_2, \cdots, w_{4n-1}$ be in \eqref{2n|2n, row determinant-2} and 
let $\widetilde{w}_{2n}:=a_{2n}$.
  Then  \[w_{1}, w_{2}, \cdots,w_{4n-1}, \widetilde{w}_{2n} \in \mathcal{W}(\bar{\g},f).\]
\end{theorem}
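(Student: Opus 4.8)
The plan is to invoke Corollary \ref{Cor2}: a series in $\mathcal{V}(\bar{\mathfrak{p}})(\!(D^{-1})\!)$ has all its coefficients in $\mathcal{W}(\bar{\g},f)$ exactly when it is annihilated by $\text{ad}_\chi^{\mathfrak{J}}a$ for every $a\in\n$. Since $\n=\bigoplus_{i>0}\g_i$ is generated as a Lie superalgebra by $\g_{1/2}$, and the $\chi$-bracket satisfies the Jacobi identity modulo $\mathfrak{J}$, it suffices to test the vanishing for $a$ ranging over the basis $\{\bar{F}_{i\,i+1}\}_{1\le i\le 2n-1}\cup\{\bar{F}_{2n-1\,2n+1}\}$ of $\g_{1/2}$ that generates $\n$; this is precisely the set for which Lemma \ref{Lem:bracket for submatrix} records the action on the principal minors $A_k$ and $B_k$. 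The proof then splits into showing $\text{ad}_\chi^{\mathfrak{J}}a\big(rdet(\mathcal{A}_{\mathfrak{osp}}^{(2n|2n)})\big)=0$, which yields $w_1,\dots,w_{4n-1}$ by Corollary \ref{Cor2}, and a separate argument for $\widetilde{w}_{2n}=a_{2n}$, which is not itself a coefficient of the row determinant.

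For the generators $a=\bar{F}_{i\,i+1}$ with $1\le i\le 2n-1$, I would expand $\text{ad}_\chi^{\mathfrak{J}}a$ over the expression \eqref{cdet osp(2n|2n)} using Corollary \ref{Cor1}. By Lemma \ref{Lem:bracket for submatrix} the brackets with $A_k$ and $B_k$ are nonzero only when the subscript equals $i$, so the outer minors $A_{2n},B_{2n}$ of the nonlocal term $A_{2n}D^{-1}B_{2n}$ are untouched; as $\text{ad}_\chi^{\mathfrak{J}}a(D^{-1})=0$, that term drops out entirely. It then remains to apply $\text{ad}_\chi^{\mathfrak{J}}\bar{F}_{i\,i+1}$ to the finite sum $2\sum_{j,k}(-1)^{k+T_{2n}(k)}A_{j-1}\bar{E}_{k'j}B_{k-1}$; substituting the recursions \eqref{cdet A_k}, \eqref{cdet B_k} together with the elementary brackets $\text{ad}_\chi^{\mathfrak{J}}\bar{F}_{i\,i+1}(\bar{E}_{k'j})$ should produce a telescoping cancellation, exactly as in the local cases of Section \ref{sec:ospodd}.

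The delicate case, which I expect to be the main obstacle, is $a=\bar{F}_{2n-1\,2n+1}$, where both the nonlocal term and the finite sum contribute. Here Lemma \ref{Lem:bracket for submatrix} gives $\text{ad}_\chi^{\mathfrak{J}}\bar{F}_{2n-1\,2n+1}(A_{2n})=-2A_{2n-2}(D+\chi)D$ and $\text{ad}_\chi^{\mathfrak{J}}\bar{F}_{2n-1\,2n+1}(B_{2n})=-2(D+\chi)B_{2n-2}$, so $A_{2n}D^{-1}B_{2n}$ genuinely moves. Applying Corollary \ref{Cor1} to the three factors $A_{2n},D^{-1},B_{2n}$ and using Lemma \ref{Lemma:nonlocal}(2) to commute $D^{-1}$ (i.e. $(D+\chi)^{-1}$ after the shift) past the factors of $D$ and $(D+\chi)$ produced by the brackets, the inverse powers cancel and the nonlocal contribution collapses to the local expression $-2A_{2n-2}(D+\chi)B_{2n}\mp 2A_{2n}(D+\chi)B_{2n-2}$. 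The crux is to verify that this surviving polynomial in $D$ is cancelled exactly by the image of the finite sum under $\text{ad}_\chi^{\mathfrak{J}}\bar{F}_{2n-1\,2n+1}$; this hinges on careful bookkeeping of the signs $(-1)^{k+T_{2n}(k)}$ and the level shifts $T_i(k)$ of \eqref{T, sign}. Once $\text{ad}_\chi^{\mathfrak{J}}a(rdet)=0$ is established for all generators, Corollary \ref{Cor2} delivers $w_1,\dots,w_{4n-1}\in\mathcal{W}(\bar{\g},f)$ simultaneously.

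Finally, for $\widetilde{w}_{2n}=a_{2n}$ I would argue separately, since $a_{2n}$ is the constant term of $A_{2n}$ rather than a coefficient of $rdet$. Writing $A_{2n}=\sum_{m=0}^{2n}a_{2n-m}D^m$ with $a_0=1$, one has $\text{ad}_\chi^{\mathfrak{J}}a(A_{2n})=\sum_{m=0}^{2n}\text{ad}_\chi^{\mathfrak{J}}a(a_{2n-m})D^m$, so $\text{ad}_\chi^{\mathfrak{J}}a(a_{2n})$ is precisely the $D^0$-coefficient of $\text{ad}_\chi^{\mathfrak{J}}a(A_{2n})$. By Lemma \ref{Lem:bracket for submatrix} this $D^0$-coefficient vanishes for every generator: for $1\le i\le 2n-1$ the entire bracket $\text{ad}_\chi^{\mathfrak{J}}\bar{F}_{i\,i+1}(A_{2n})$ is zero, while $\text{ad}_\chi^{\mathfrak{J}}\bar{F}_{2n-1\,2n+1}(A_{2n})=-2A_{2n-2}(D+\chi)D$ carries a right factor of $D$ and hence has no constant term. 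Thus $\text{ad}_\chi^{\mathfrak{J}}a(a_{2n})=0$ for all $a$, and Corollary \ref{Cor2} applied to the single coefficient $a_{2n}$ gives $\widetilde{w}_{2n}\in\mathcal{W}(\bar{\g},f)$, completing the proof.
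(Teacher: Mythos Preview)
Your proposal is correct and follows essentially the same route as the paper: reduce to the generators $\bar F_{i\,i+1}$ and $\bar F_{2n-1\,2n+1}$ of $\n$, show $\text{ad}_\chi^{\mathfrak J}$ of each kills $rdet(\mathcal{A}_{\mathfrak{osp}}^{(2n|2n)})$ via Lemma~\ref{Lem:bracket for submatrix} and the recursions of Lemma~\ref{Lemma A_k, B_k}, and handle $\widetilde w_{2n}=a_{2n}$ separately by reading off the $D^0$-coefficient of $\text{ad}_\chi^{\mathfrak J}a(A_{2n})$. Your treatment of the nonlocal term for $\bar F_{2n-1\,2n+1}$, where the factors $(D+\chi)D\cdot D^{-1}$ and $(D+\chi)^{-1}\cdot(D+\chi)$ cancel to give the local expression $2A_{2n}(D+\chi)B_{2n-2}-2A_{2n-2}(D+\chi)B_{2n}$, matches the paper's equation \eqref{eq:(2n-1, 2n+1)} exactly (so the sign in your ``$\mp$'' is $-$), and your argument for $\widetilde w_{2n}$ is in fact more explicit than the paper's, which simply asserts that Lemma~\ref{Lem:bracket for submatrix} already establishes it.
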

\begin{proof}
  Since  $\{F_{i\,i+1}\,|\,i=1,2,\cdots,{2n-1}\}\cup \{F_{2n-1\,2n+1}\}$ generates $\n$, the element $w_k$ is in $\mathcal{W}(\bar{\g},f)$ if and only if 
\begin{equation}\label{goal of the theorm}
\text{ad}_\chi^{\mathfrak{J}}\bar{F}_{i\,i+1}( w_k)=0 \text{ and }\text{ad}_\chi^{\mathfrak{J}}\bar{F}_{2n-1\,2n+1}( w_k )=0
\end{equation}
for $1\leq i \leq 2n-1$. Note that in Lemma \ref{Lem:bracket for submatrix},
 we already showed \eqref{goal of the theorm} for $\widetilde{w}_{2n}$.
Thus, $\widetilde{w}_{2n} \in \mathcal{W}(\bar{\g},f)$.

To see \eqref{goal of the theorm} for $k=1,2,\cdots, 4n-1$, we claim that 
\begin{equation}\label{Eq:aim_2} 
\text{ad}_\chi^{\mathfrak{J}}\bar{F}_{i\,i+1} ( rdet(\mathcal{A}_{\mathfrak{osp}}^{(2n|2n)}))=0 \text{ \ and\  }\text{ ad}_\chi^{\mathfrak{J}}\bar{F}_{2n-1\,2n+1}(  rdet(\mathcal{A}_{\mathfrak{osp}}^{(2n|2n)}))=0
\end{equation}
 for any $1\leq i \leq 2n-1$.    

By Lemma \ref{Lem:bracket for submatrix},
\begin{equation} \label{eq:(ii+1)}
    \begin{aligned}
   &   \text{ad}_\chi^{\mathfrak{J}}\bar{F}_{i\,i+1} \big( rdet(\mathcal{A}_{\mathfrak{osp}}^{(2n|2n)})\big) =2\sum_{k=1}^{2n}(-1)^{n+k+i+\floor*{\frac{k+1}{2}}}A_{i-1}(D+\chi)\bar{E}_{k'i+1}B_{k-1}\\
 &  \qquad  \quad  \hskip 2cm +2\sum_{j=1}^{2n}(-1)^{n+1+\floor*{\frac{i}{2}}}A_{j-1}(D+\chi)\bar{E}_{i+1'j}B_{i-1} \\
   &  \qquad  \quad  \hskip 2cm+2\sum_{j,k=1}^{2n}(-1)^{n+j+k+1+\floor*{\frac{k+1}{2}}}A_{j-1}(D+\chi)\text{ad}_\chi^{\mathfrak{J}}\bar{F}_{i\, i+1} ( \bar{E}_{k'j})B_{k-1}.
\end{aligned}
\end{equation}
Observe that the last term in \eqref{eq:(ii+1)} is
\begin{equation} \label{eq:(ii+1)-2}
\begin{aligned}
& \  \  2\sum_{j,k=1}^{2n}(-1)^{n+j+k+1+\floor*{\frac{k+1}{2}}}A_{j-1}(D+\chi)\text{ad}_\chi^{\mathfrak{J}}\bar{F}_{i\, i+1} ( \bar{E}_{k'j})B_{k-1}\\
&  =  \  \  2\sum_{k\neq i}(-1)^{n+i+k+1+\floor*{\frac{k+1}{2}}}A_{i-1}(D+\chi)\bar{E}_{k'i+1}B_{k-1} \\
& + \  \   2\sum_{j\neq i}(-1)^{n+i+\floor*{\frac{i+1}{2}}}A_{j-1}(D+\chi)\bar{E}_{i+1'j}B_{i-1}\\
& + \  \   2(-1)^{n+\floor*{\frac{i+1}{2}}}\left((-1)^{i+1}+1-1+(-1)^i\right)A_{i-1}(D+\chi)\bar{E}_{i+1'i}B_{i-1}.
\end{aligned}
\end{equation} 
By \eqref{eq:(ii+1)} and \eqref{eq:(ii+1)-2}, we get the first assertion in \eqref{Eq:aim_2}:
\begin{equation}
 \text{ad}_\chi^{\mathfrak{J}}\bar{F}_{i\,i+1} \big( rdet(\mathcal{A}_{\mathfrak{osp}}^{(2n|2n)})\big)=0.
\end{equation}

 
Let us show the second assertion in \eqref{Eq:aim_2}.
By Lemma \ref{Lem:bracket for submatrix} and the sesquilinearity, we have 
\begin{equation}\label{eq:(2n-1, 2n+1)}
    \begin{aligned}
   &\text{ad}_\chi^{\mathfrak{J}}\bar{F}_{2n-1\,2n+1} \big(  rdet(\mathcal{A}_{\mathfrak{osp}}^{(2n|2n)})\big) 
    =2A_{2n}(D+\chi)B_{2n-2}-2A_{2n-2}(D+\chi)B_{2n} \\
    & \qquad\qquad\qquad\qquad\qquad +2\sum_{j,k=1}^{2n}(-1)^{n+k+\floor*{\frac{k+1}{2}}}\text{ad}_\chi^{\mathfrak{J}}\bar{F}_{2n-1\,2n+1}\big( A_{j-1}\bar{E}_{k'j}B_{k-1}\big).
    \end{aligned}
\end{equation}

Take a look at the last line in \eqref{eq:(2n-1, 2n+1)}. By Lemma \ref{Lem:bracket for submatrix},
\begin{equation} \label{eq:(2n-1, 2n+1)-2}
  \begin{aligned}
  & \  \  \sum_{j,k=1}^{2n}(-1)^{n+k+\floor*{\frac{k+1}{2}}}\text{ad}_\chi^{\mathfrak{J}}\bar{F}_{2n-1\,2n+1}\big( A_{j-1}\bar{E}_{k'j}B_{k-1}\big) \\
  & \  \  =\sum_{k=1}^{2n}(-1)^{n+k+1+\floor*{\frac{k+1}{2}}}A_{2n-2}(D+\chi)\bar{E}_{k'\,2n}B_{k-1}\\
    & \  \  +\sum_{k=1}^{2n}(-1)^{n+j+k+1+\floor*{\frac{k+1}{2}}}A_{j-1}(D+\chi)\text{ad}_\chi^{\mathfrak{J}}\bar{F}_{2n-1\,2n+1}(\bar{E}_{k'\,j})B_{k-1}\\
  & \  \  +\sum_{k=1}^{2n}A_{j-1}(D+\chi)\bar{E}_{(2n)'\,j}B_{2n-2}.\\
\end{aligned}
\end{equation}
The second term in the RHS of \eqref{eq:(2n-1, 2n+1)-2} equals to
\begin{equation} \label{eq:(2n-1, 2n+1)-3}
  \begin{aligned}
  & \  \ \sum_{k=1}^{2n}(-1)^{n+j+k+1+\floor*{\frac{k+1}{2}}}A_{j-1}(D+\chi)\text{ad}_\chi^{\mathfrak{J}}\bar{E}_{k'\,j}B_{k-1} \\
  &= \  \ \sum_{k=1}^{2n-2}(-1)^{n+k+\floor*{\frac{k+1}{2}}}A_{2n-2}(D+\chi)\bar{E}_{k'\,(2n)'}B_{k-1}\\
  &+ \  \ \sum_{k=1}^{2n-2}(-1)^{n+k+1+\floor*{\frac{k+1}{2}}}A_{2n-1}(D+\chi)\bar{E}_{k'\,(2n-1)'}B_{k-1}\\
  &- \  \ \sum_{j=1}^{2n-2}A_{j-1}(D+\chi)\bar{E}_{2n\,j}B_{2n-2}-2A_{2n-2}(D+\chi)\bar{E}_{2n\,2n-1}B_{2n-2}\\
  &-A_{2n-1}(D+\chi)\left(\bar{E}_{2n-1\,2n-1}+\bar{E}_{2n\,2n}+\chi\right)B_{2n-2}\\
  & -\ \ \sum_{j=1}^{2n-2}A_{j-1}(D+\chi)\bar{E}_{2n-1\,j}B_{2n-1}\\
  &-A_{2n-2}(D+\chi)\left(\bar{E}_{2n-1\,2n-1}+\bar{E}_{2n\,2n}-\chi\right)B_{2n-1}.
  \end{aligned}
\end{equation}
Combining \eqref{eq:(2n-1, 2n+1)-2} with \eqref{eq:(2n-1, 2n+1)-3} and using Lemma \ref{Lemma A_k, B_k}, one can show that 
\begin{equation}\label{eq:(2n-1, 2n+1)-4}
\begin{aligned}
 \  \ \sum_{j,k=1}^{2n} & (-1)^{n+k+\floor*{\frac{k+1}{2}}}\text{ad}_\chi^{\mathfrak{J}}\bar{F}_{2n-1\,2n+1}\big( A_{j-1}\bar{E}_{k'j}B_{k-1}\big)\\
 & = A_{2n-2}(D+\chi)B_{2n}-A_{2n}(D+\chi)B_{2n-2}.
 \end{aligned}
 \end{equation}
By \eqref{eq:(2n-1, 2n+1)} and \eqref{eq:(2n-1, 2n+1)-4}, we get the second assertion of \eqref{Eq:aim_2}.
\end{proof}

  Now, we want to find a freely generating set of $\mathcal{W}(\bar{\g},f)$ using Proposition \ref{Prop:property_generator}. Since $f\in\g$ is a principal element, $\dim\g^f$ equals to the dimension of $\g_0$, that is, $\dim\g^f=2n$. Furthermore, we have that for $-n<k\leq 0$,
  \begin{equation}
    \dim\g_k=\left\{\begin{array}{lll} k+2n+\frac{1}{2} & \text{ if } 2k\text{ is odd, } \\
        k+2n & \text{ if } 2k\equiv 0 \Mod4, \\
        k+2n+1 & \text{ if } 2k\equiv 2 \Mod4. \end{array}\right.
  \end{equation}
Also for $-2n+1\leq k \leq -n$,
\begin{equation}
    \dim\g_k=\left\{\begin{array}{lll} k+2n-\frac{1}{2} & \text{ if } 2k\text{ is odd, } \\
        k+2n-1 & \text{ if } 2k\equiv 0 \Mod4, \\
        k+2n & \text{ if } 2k\equiv 2 \Mod4. \end{array}\right.
  \end{equation}
Hence, one can find a basis
\begin{equation}
    V^f:=\big\{v_l\in \g_{\frac{1-l}{2}}\,|\, 1 < l \leq 4n-1,\ l\equiv 0 \text{ or }3 \Mod 4\big\} \cup \big\{\widetilde{v}_{2n}\in \g_{\frac{1-2n}{2}}\big\}
  \end{equation}
of $\g^f$, consisting of homogeneous elements. Explicitly, $v_l$ and $\widetilde{v}_{2n}$ can be written as follows.

  \begin{lemma} \label{Lem:osp basis of keradf}
    \begin{enumerate}
      \item For an integer $l$ such that $2n+1 \leq l \leq 4n-1$, let 
      \begin{equation} \label{v-1}
        \begin{aligned}
      v_l :=&2\sum_{k=1}^{4n-l}(-1)^{T_{2n}(k)+kl+k+l+1}E_{k'\,4n+1-l-k}. \\
        \end{aligned}
      \end{equation}
      Then $v_l=0 \text{ if }l \equiv 1 \text{ or }2 \Mod 4$ and $v_l \neq0 \text{ if }l \equiv 0 \text{ or }3 \Mod 4$.
      \item For an integer $l$ such that $1 < l \leq 2n$, let $t=2n-l-2$ and   
      \begin{equation} \label{v-2}
        \begin{aligned}
        v_l :=& (-1)^t\left(E_{2n\,t+1}-E_{2n+1\,t+1}\right)+(-1)^{T_{2n}(t+1)+1}\left(E_{(t+1)'\,(2n)'}-E_{(t+1)'\,2n}\right)\\
        &+\sum_{k=1}^{t}(-1)^{t(k+1)}E_{2n-1-t+k\,k}+\sum_{k=1}^{t}(-1)^{tk+1+T_{2n-1-t+k}(k)}E_{k'\,(2n-1-t+k)'}\\
        &+2\sum_{k=t+1}^{2n}(-1)^{T_{2n}(k)+kl+k+l}E_{k'\,l+2-k}
        \end{aligned}
      \end{equation}
      Then $v_l=0 \text{ if }l \equiv 1 \text{ or }2 \Mod 4$ and $v_l \neq0 \text{ if }l \equiv 0 \text{ or }3 \Mod 4$.
      \item Let $\tilde{v}_{2n}:=E_{2n\, 1}-E_{2n+1\,1}$. Then for $v_l$'s in \eqref{v-1} and \eqref{v-2}, 
      \begin{equation}
        V^f:=\{v_l\,|\, 1 < l \leq 4n-1,\ l\equiv 0 \text{ or }3 \Mod 4\} \cup \{\widetilde{v}_{2n}\}
      \end{equation}
      forms a basis of $\g^{f}$.
    \end{enumerate}
  \end{lemma}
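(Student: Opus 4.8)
The plan is to handle the three parts in turn, postponing the heaviest bookkeeping to the verification that each $v_l$ lies in $\g^{f}$. For the vanishing assertions in (1) and (2) I would argue purely from the $\mathfrak{osp}$-symmetry. Recall from Example \ref{osp_4n +0,2} that the basis elements satisfy $F_{j'\,i'}=\tau(i,j)F_{ij}$, so after the rescaling $E_{ij}=\tfrac12(-1)^{p(i)}F_{ij}$ each summand $E_{k'\,\ast}$ appearing in \eqref{v-1}--\eqref{v-2} equals, up to an explicit sign produced by $\tau$ and the $\delta$-data, the reflected summand obtained by sending $k$ to its partner index. Pairing the $k$-th term with its reflection, the whole sum equals $\epsilon_l$ times itself for some $\epsilon_l\in\{\pm1\}$ assembled from the signs $(-1)^{kl+k+l}$, $(-1)^{T_{2n}(k)}$ and the parity exponents. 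One then checks that $\epsilon_l=-1$ exactly when $l\equiv 1,2 \Mod 4$, forcing $v_l=0$, whereas $\epsilon_l=+1$ when $l\equiv 0,3 \Mod 4$ and a surviving term is manifestly nonzero. This is a finite sign computation with no structural subtlety.

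For membership $v_l\in\g^{f}$ I would compute $[f,v_l]$ directly, using $f=\sum_{i=1}^{2n-1}F_{i+1\,i}+F_{2n+1\,2n-1}$ from Example \ref{osp_4n +0,2}. Each bracket $[F_{i+1\,i},E_{ab}]$ shifts one index of $E_{ab}$ by one and yields a difference of two terms, so summing over the index ranges in \eqref{v-1}--\eqref{v-2} produces a telescoping cancellation, exactly as in the proof of Lemma \ref{Lemma1}. For the elements of range (1), whose indices stay in the generic region away from the central block $\{2n-1,2n,2n+1,2n+2\}$, this telescoping is verbatim the $\mathfrak{gl}$-type argument and gives $[f,v_l]=0$, while $\widetilde v_{2n}=E_{2n\,1}-E_{2n+1\,1}$ is settled by a one-line check. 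The hard part will be range (2): there the sum reaches the middle, where both the extra generator $F_{2n+1\,2n-1}$ of $f$ and the boundary terms of the telescope contribute, and the several correction summands in \eqref{v-2} are present precisely to cancel these central contributions. Verifying $[f,v_l]=0$ then amounts to matching these boundary terms while carefully tracking the signs $\delta_k$ and the functions $T_i(k)$ of \eqref{T, sign}; this is the main obstacle.

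Finally, for (3) I would combine the above with a dimension count. Since $f$ is principal, $\dim\g^{f}=\dim\g_0=2n$. By parts (1) and (2) the nonzero members of $V^{f}$, together with $\widetilde v_{2n}$, are homogeneous for the grading \eqref{grading}, with $v_l\in\g_{(1-l)/2}$ and $\widetilde v_{2n}\in\g_{(1-2n)/2}$; elements in distinct graded components are automatically linearly independent, and the only coincidence occurs at the central degree $\tfrac{1-2n}{2}$ when $n$ is even, where both $v_{2n}$ and $\widetilde v_{2n}$ appear. There I would confirm independence by comparing the coefficient of a distinguished entry such as $E_{2n\,1}$. A count of admissible indices shows there are $n-1$ values $l\equiv 0$ and $n$ values $l\equiv 3$ in $1<l\le 4n-1$, hence $2n-1$ elements $v_l$; adding $\widetilde v_{2n}$ gives exactly $2n=\dim\g^{f}$ linearly independent vectors, so $V^{f}$ is a basis.
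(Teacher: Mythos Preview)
Your proposal is correct and follows the same computational approach as the paper, which dispatches parts (1), (2), and the membership $[f,v_l]=0$ simply with the phrase ``by direct computations'' and handles (3) by a homogeneity-plus-cardinality argument. Your outline is more explicit: you organise the vanishing claims via the reflection symmetry $E_{k',j}\propto E_{j',k}$ coming from $F_{j'\,i'}=\tau(i,j)F_{ij}$, and you correctly anticipate that the telescoping for $[f,v_l]$ in range (2) must be patched at the central block where the extra root vector $F_{2n+1\,2n-1}$ enters. You are also more careful than the paper on one point: the paper asserts that linear independence of $V^f$ is ``clear'' from the $v_l$'s living in distinct graded pieces, but when $n$ is even one has $2n\equiv 0\Mod 4$, so $v_{2n}$ and $\widetilde v_{2n}$ share the graded component $\g_{(1-2n)/2}$, and an extra check (such as the coefficient comparison you propose) is genuinely needed.
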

  \begin{proof}
    (1) and (2) follow from direct computations. Let us show (3). Since $v_l$'s for $1 < l \leq 4n-1$ are in distinct homogeneous spaces with respect to the $\frac{\Z}{2}$-grading on $\g$, the linear independency of $V^f$ is clear. Also, $|V^f|=2n=\dim{\g^f}$, so we only need to show that each $v_l\in V^f$ and $\widetilde{v}_{2n}$ are indeed in $\g^f$.   Again, by direct computations, we get $[f, v_l]=[f, \widetilde{v}_{2n}]=0$.
  \end{proof}
  
  Now we need to check if the elements of $V^f$ satisfy the conditions of Proposition \ref{Prop:property_generator}. In particular, to check the condition (i), define a gradation $\Delta$ on $\mathcal{V}(\bar{\mathfrak{p}})(\!(D^{-1})\!)$ by 
\begin{equation} \label{eqn:conf wt extension}
      \Delta_{\bar{a}} = \frac{1}{2}-j_a, \quad \Delta_D= \frac{1}{2}, \quad \Delta_{D^{-1}}= -\frac{1}{2},\quad \Delta_{AB}= \Delta_A + \Delta_B
\end{equation}
where $a \in \g_{j_a}$ and $A,B\in \mathcal{V}(\bar{\mathfrak{p}})$. Note that on $\mathcal{V}(\bar{\frak{p}})$, $\Delta$ coincides with the gradation defined in \eqref{Eqn:conformal weight}.

\begin{lemma}\label{weight}
Recall the elements $w_1, \cdots, w_{4n-1}, \widetilde{w}_{2n}$ of $\mathcal{W}(\bar{\g},f)$ in Theorem \ref{thm:osp 4n elements} and the elements $v_l$'s and $\widetilde{v}_{2n}$ of $V^f$ in Lemma \ref{Lem:osp basis of keradf}. Then we have 
\begin{enumerate}
 \item $\Delta_{\bar{v}_{l}}=\frac{l}{2}$ and $\Delta_{\overline{\widetilde{v}}_{2n}}=n$,
 \item $\Delta_{w_{l}}= \frac{l}{2}$ for $l=1.\cdots,4n-1$ and $\Delta_{\widetilde{w}_{2n}}=n$.
 \end{enumerate}
    \end{lemma}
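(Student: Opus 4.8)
The plan is to handle the two claims separately: claim (1) follows at once from the definition of $\Delta$, while claim (2) is extracted from the $\Delta$-homogeneity of $rdet(\mathcal{A}_{\mathfrak{osp}}^{(2n|2n)})$. Both rest on first recording the $\Delta$-weight of the building blocks $\bar{E}_{ij}$. The principal grading \eqref{grading} is $\operatorname{ad}x$ with $f\in\g_{-1/2}$; solving $[x,f]=-\tfrac12 f$ for the fork-shaped $f=F_{2n+1\,2n-1}+\sum_{i=1}^{2n-1}F_{i+1\,i}$ yields the eigenvalues $x_i=n-\tfrac{i}{2}$ for $1\le i\le 2n$ and $x_i=n+\tfrac12-\tfrac{i}{2}$ for $2n+1\le i\le 4n$. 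Hence $E_{ij}\in\g_{x_i-x_j}$ and, by \eqref{eqn:conf wt extension},
\[
\Delta_{\bar{E}_{ij}}=\tfrac12-(x_i-x_j)=
\begin{cases}
\tfrac{i-j+1}{2}, & i,j\le 2n \ \text{ or }\ i,j\ge 2n+1,\\
\tfrac{i-j+2}{2}, & i\le 2n<j,\\
\tfrac{i-j}{2}, & j\le 2n<i.
\end{cases}
\]
The $\pm\tfrac12$ shift across the middle, produced by the fork in $f$, is exactly what makes this case more delicate than the straight-chain case $\mathfrak{osp}(2n+1|2n)$ treated in Lemma \ref{Lemma2}.

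Claim (1) is then immediate. Since $\Delta_{\bar a}=\tfrac12-j_a$ for $a\in\g_{j_a}$, the grades $v_l\in\g_{\frac{1-l}{2}}$ and $\widetilde v_{2n}\in\g_{\frac{1-2n}{2}}$ recorded in Lemma \ref{Lem:osp basis of keradf} give $\Delta_{\bar v_l}=\tfrac12-\tfrac{1-l}{2}=\tfrac{l}{2}$ and $\Delta_{\overline{\widetilde v}_{2n}}=\tfrac12-\tfrac{1-2n}{2}=n$. Homogeneity of the explicit sums \eqref{v-1} and \eqref{v-2} can be cross-checked from the table above; for instance $\widetilde v_{2n}=E_{2n\,1}-E_{2n+1\,1}$ has both summands of grade $\tfrac{1-2n}{2}$, by the first and third cases respectively.

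For claim (2) I would first show by induction on the recursion \eqref{cdet A_k} that $\Delta_{A_k}=\tfrac{k}{2}$ for $0\le k\le 2n$: the base case is $A_0=1$, the leading term $A_{k-1}(D+\bar E_{kk})$ contributes $\tfrac{k-1}{2}+\tfrac12$, and each $A_{i-1}\bar E_{ki}$ contributes $\tfrac{i-1}{2}+\tfrac{k-i+1}{2}=\tfrac{k}{2}$, while the mixed terms $\bar E_{2n+1\,i}$ appearing at $k=2n$ also give $\tfrac{i-1}{2}+\tfrac{2n+1-i}{2}=n$ by the lower-left case of the table. Then $\Delta_{B_k}=\tfrac{k}{2}$ follows from Lemma \ref{Lem:const term of A_2n}, since the adjoint ${}^{*}$ preserves $\Delta$ (because $(aD^m)^{*}=\pm D^m a$ has the same weight as $aD^m$, and $\Delta$ is order-independent). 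Feeding these into \eqref{cdet osp(2n|2n)} shows every monomial has weight $\tfrac{4n-1}{2}$: the central term $A_{2n}D^{-1}B_{2n}$ gives $\tfrac{2n}{2}-\tfrac12+\tfrac{2n}{2}$, and each $A_{j-1}\bar E_{k'j}B_{k-1}$, for which $k'=4n+1-k\ge 2n+1$ and $j\le 2n$ so that $\bar E_{k'j}$ lies in the lower-left region, gives $\tfrac{j-1}{2}+\tfrac{(4n+1-k)-j}{2}+\tfrac{k-1}{2}=\tfrac{4n-1}{2}$. Thus $rdet(\mathcal{A}_{\mathfrak{osp}}^{(2n|2n)})$ is $\Delta$-homogeneous of weight $\tfrac{4n-1}{2}$. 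Comparing with the expansion \eqref{2n|2n, row determinant-2} and using $\Delta_{D^k}=\tfrac{k}{2}$, $\Delta_{D^{-1}}=-\tfrac12$, I read off $\Delta_{w_l}=\tfrac{4n-1}{2}-\tfrac{4n-1-l}{2}=\tfrac{l}{2}$ and, from the term $(-1)^n a_{2n}D^{-1}a_{2n}$, $2\Delta_{a_{2n}}-\tfrac12=\tfrac{4n-1}{2}$, i.e. $\Delta_{\widetilde w_{2n}}=\Delta_{a_{2n}}=n$.

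The one genuinely delicate point is the weight table for $\bar E_{ij}$: the fork in $f$ forces the $\tfrac12$-jumps across the middle, and the construction of $\mathcal{A}_{\mathfrak{osp}}^{(2n|2n)}$ (the modified rows and columns of Steps 2--3 together with the inserted $D^{-1}$ entry) is precisely what realigns these weights so that both families of monomials in \eqref{cdet osp(2n|2n)} land on the common value $\tfrac{4n-1}{2}$. Once the table is correct, every step telescopes and the remainder is bookkeeping.
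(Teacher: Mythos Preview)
Your proof is correct and follows essentially the same approach as the paper's: both compute the $\text{ad}\,x$-eigenvalues (equivalently the map $i\mapsto\bar\imath$), deduce $\Delta_{\bar E_{ij}}$, establish $\Delta_{A_k}=\Delta_{B_k}=\tfrac{k}{2}$ by induction on the recursions in Lemma \ref{Lemma A_k, B_k}, and then read off the weights from \eqref{cdet osp(2n|2n)} and \eqref{2n|2n, row determinant-2}. The only cosmetic difference is that you obtain $\Delta_{B_k}$ via the $*$-duality of Lemma \ref{Lem:const term of A_2n} rather than by a parallel induction on \eqref{cdet B_k}; either route works.
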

    \begin{proof}
    For $i\in I$, let 
    \begin{equation}
  \bar{\imath}= \left\{\begin{array}{ll} i & \text{ if } 1\leq{i}\leq{2n}, \\ i-1 & \text{ if } 2n+1\leq{i}\leq{4n}.\end{array} \right.
\end{equation}
Then the degree of  
$F_{ij}$ is $\frac{\bar{\jmath}-\bar{\imath}}{2}$ with respect to the $\frac{\Z}{2}$-grading of $\g$, so $\Delta_{F_{ij}}=\frac{1-\bar{\jmath}+\bar{\imath}}{2}$. Hence, (1) follows. Now recall that $rdet\big(\mathcal{A}_{\frak{osp}}^{(2n|2n)}\big)$ can be computed as \eqref{cdet osp(2n|2n)}. Using Lemma \ref{Lemma A_k, B_k}, one can inductively show that $\Delta_{A_k}=\frac{k}{2}=\Delta_{B_k}$ for $k=1,\cdots,2n$. Therefore, $\Delta_{rdet\left(\mathcal{A}_{\frak{osp}}^{(2n|2n)}\right)}=\frac{4n-1}{2}$ and so $\Delta_{w_l}=\frac{l}{2}$. Since $\widetilde{w}_{2n}=a_{2n}$, we also have $\Delta_{\widetilde{w}_{2n}}=n$.
    \end{proof}

  \begin{theorem} \label{Thm:osp(2n|2n) generators}
    As in \eqref{2n|2n, row determinant-2}, write
    \[rdet(\mathcal{A}_{\mathfrak{osp}}^{(2n|2n)})=\sum_{l=0}^{4n-1}w_{4n-1-l} D^l +(-1)^{n}\widetilde{w}_{2n}D^{-1}\widetilde{w}_{2n}.\]
    Then the set
    $\{w_l\,|\, 1\leq l \leq 4n-1, i\equiv 0 \text{ or }3\Mod 4\} \cup \{\widetilde{w}_{2n}\}$
    freely generates $\mathcal{W}(\bar{\g}, f)$.
  \end{theorem}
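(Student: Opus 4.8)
The plan is to verify the three hypotheses of Proposition \ref{Prop:property_generator}, applied to the basis $V^f$ of $\g^f$ constructed in Lemma \ref{Lem:osp basis of keradf} together with the candidate generators $\{w_l \mid 1 \le l \le 4n-1,\ l \equiv 0, 3 \Mod 4\} \cup \{\widetilde w_{2n}\}$. Two of these hypotheses are already in hand: membership $w_l, \widetilde w_{2n} \in \mathcal W(\bar\g, f)$ is Theorem \ref{thm:osp 4n elements}, and the weight equalities $\Delta_{w_l} = \tfrac l2 = \Delta_{\bar v_l}$ and $\Delta_{\widetilde w_{2n}} = n = \Delta_{\overline{\widetilde v}_{2n}}$ are Lemma \ref{weight}. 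Since $V^f$ and the candidate set each have exactly $2n = \dim \g^f$ elements, it remains only to establish condition (ii): that the linear, $D$-free part of each generator equals the matching vector of $\bar V^f$, a nonzero scalar discrepancy being harmless since it can be absorbed into the choice of basis vector.

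To read off these linear parts I would isolate the degree-one component of $rdet(\mathcal A_{\mathfrak{osp}}^{(2n|2n)})$ from \eqref{cdet osp(2n|2n)}. The recursions of Lemma \ref{Lemma A_k, B_k} show that the degree-zero parts of $A_k$ and $B_k$ are simply $D^k$, so the degree-one component of the local summand $2\sum_{j,k}(-1)^{k+T_{2n}(k)}A_{j-1}\bar E_{k'j}B_{k-1}$ is $2\sum_{j,k=1}^{2n}(-1)^{k+T_{2n}(k)} D^{j-1}\bar E_{k'j} D^{k-1}$. Commuting every $D$ to the right by sesquilinearity writes each monomial $D^{j-1}\bar E_{k'j}D^{k-1}$ as a genuinely $D$-free leading term $\pm \bar E_{k'j} D^{\,j+k-2}$ plus terms $D^a(\bar E_{k'j}) D^{\,b}$ with $a \ge 1$; the latter are total derivatives, lying in $\bigoplus_{m \ge 1} D^m(\bar{\mathfrak p})$, so by condition (ii) they may be discarded. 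Reading off the coefficient of $D^{4n-1-l}$, the $D$-free linear part of $w_l$ is therefore the signed sum of those $\bar E_{k'j}$ with $j+k = 4n+1-l$.

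For $\widetilde w_{2n} = a_{2n}$, which is the constant term of $A_{2n}$ by Corollary \ref{cor:(2n|2n), A and B}, the second recursion in \eqref{cdet A_k} gives its $D$-free linear part directly as $\bar E_{2n\,1} - \bar E_{2n+1\,1} = \overline{\widetilde v}_{2n}$. For $2n+1 \le l \le 4n-1$ the signed sum above is indexed by $k$ with $j = 4n+1-l-k$, which is exactly the shape of \eqref{v-1}; and for $1 < l \le 2n$ it should reproduce the several families appearing in \eqref{v-2}, whose terms originate from the $\mu_{11}$, $\mu_{22}$, and $\mu_{12}$ blocks. Once these identifications are in place, Proposition \ref{Prop:property_generator} yields the conclusion.

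The main obstacle is the sign bookkeeping. One must check that the sign produced by commuting the $D$'s past $\bar E_{k'j}$, namely $(-1)^{(j-1)p(\bar E_{k'j})}$ with $p(\bar E_{k'j}) = 1 + p(k') + p(j)$, combined with the prefactor $(-1)^{k+T_{2n}(k)}$, matches exactly the coefficient $(-1)^{T_{2n}(k)+kl+k+l+1}$ of \eqref{v-1} and the more intricate signs of \eqref{v-2}. The delicate case is $1 < l \le 2n$, where $v_l$ is a genuine multi-block combination; one must moreover confirm, when $l = 2n$ with $n$ even, that the linear parts of $w_{2n}$ and $\widetilde w_{2n}$ stay linearly independent, so that together they span $\g^f \cap \g_{(1-2n)/2}$.
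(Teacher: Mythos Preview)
Your approach is essentially identical to the paper's: both invoke Proposition~\ref{Prop:property_generator} via the basis $V^f$ of Lemma~\ref{Lem:osp basis of keradf}, appeal to Theorem~\ref{thm:osp 4n elements} and Lemma~\ref{weight} for membership and weights, and then extract the $D$-free linear part of each $w_l$ by reducing $A_{j-1}$ and $B_{k-1}$ to their leading $D$-powers in \eqref{cdet osp(2n|2n)} and commuting $D$'s to the right. The paper splits into the same two ranges you describe ($0\le l<2n-1$ versus $2n-1\le l<4n-2$, corresponding to your $2n+1\le l\le 4n-1$ versus $1<l\le 2n$), and in the second range likewise includes the contribution of $A_{2n}D^{-1}B_{2n}$; the sign computation you flag as the main obstacle is dispatched in the paper with ``a simple computation shows,'' so your caution there is well placed but the outcome is as you expect.
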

  \begin{proof}
  We use Proposition \ref{Prop:property_generator}. By Lemma \ref{Lem:osp basis of keradf} and Lemma \ref{weight}, 
it is enough to show that 
\[ w_l^1=\bar{v}_l, \quad w_{2n}^1= \bar{\tilde{v}}_{2n}\]
where $w_l^1\in \bar{\g}$ and $w_{2n}^1$ are  the linear parts  of $w_l$ and $w_{2n}$ which are not  total derivatives.
 
For $0 \leq l < 2n-1$, the element  $w_{4n-1-l}^1\in \bar{\g}$ is in the coefficient of $D^l$ of the following sum:
\begin{equation} \label{coeff D^l}
  \  \    2\sum_{j+k=l+2}(-1)^{T_{2n}(k)+k}D^{j-1}\bar{E}_{k'\,j}D^{k-1},
\end{equation}
    where the sum is taken over $j,k\in\{1,2, \cdots,2n\}$.
    A simple computation shows that the linear part in \eqref{coeff D^l} is 
$2\sum_{j+k=l+2}(-1)^{T_{2n}(k)+kj+j}\bar{E}_{k'\,j}+\left(\text{a total derivative part}\right).$
Therefore, $w_{4n-1-l}^1=  2\sum_{j+k=l+2}(-1)^{T_{2n}(k)+kj+j}\bar{E}_{k'\,j}=\bar{v}_{4n-1-l}$.

    Similarly, in case of $2n-1 \leq l < 4n-2$, consider the coefficient of $D^l$ of
    \[   \  \  A_{2n}D^{-1}B_{2n}+2\sum_{j+k=l+2}(-1)^{T_{2n}(k)+k}D^{j-1}\bar{E}_{k'\,j}D^{k-1}\]
    and we get $w_{4n-1-l}^1=\bar{v}_{4n-1-l}$.

    Finally, it is clear that $\overline{\tilde{v}}_{2n}=\widetilde{w}^1_{2n}$. Hence we are done.
  \end{proof}

\subsection{ $\mathcal{W}(\overline{\mathfrak{osp}}(2n+2|2n),f)$} \label{Sec:W(osp(2n+2|2n)}\

Let $\g= \mathfrak{osp}(2n+2|2n)$ and take
\begin{equation}
E_{ij}:=\frac{1}{2}(-1)^{p(i)}F_{ij} \text{ for } i,j\in I.
\end{equation}
 To make an analogous matrix to \eqref{matix(2n|2n)}, we consider the $(4n+3)\times(4n+3)$ matrix $ \mathcal{A}_{\mathfrak{osp}}^{(2n+2|2n)}$ whose entries are in $\mathcal{V}(\bar{\mathfrak{p}})(\!(D^{-1})\!)$. Precisely,  
  \begin{equation} \label{matix(2n+2|2n)}
 \mathcal{A}_{\mathfrak{osp}}^{(2n+2|2n)}=\small{
\begin{pmatrix}
\begin{matrix}
&&\\
& \mu_{11}&\\
&&\\
\end{matrix}
  & \rvline & 0& \rvline&\begin{matrix}
&&\\
& \mu_{12}&\\
&&\\
\end{matrix} \\
\hline
0 & \rvline &
D^{-1}&
  \rvline & 0\\
\hline
\begin{matrix}
&&\\
& \mu_{21}&\\
&&\\
\end{matrix}
& \rvline & 0
&
  \rvline & 
\begin{matrix}
&&\\
& \mu_{22}&\\
&&\\
\end{matrix}
\end{pmatrix}
}
\end{equation}
where the entries of the $(2n+1)\times (2n+1)$ matrices $\mu_{11}$, $\mu_{22}$, $\mu_{12}$ and $\mu_{21}$ are given by 
\begin{eqnarray}
&&\quad(\mu_{11})_{ij}= \left\{\begin{array}{ll} \delta_{2n\, i} D + \bar{E}_{2n+1\, i}-\bar{E}_{2n+2\, i} & \text{ if } i\leq j=2n+1, \\ \delta_{ij}D+ \bar{E}_{ji} & \text{ if } i\leq j <2n+1, \\ -\delta_{i-1\, j}; & \text{ otherwise},
\end{array} \right.
\\
&&\quad(\mu_{22})_{2n+2-i\ 2n+2-j}= \left\{\begin{array}{ll} \delta_{2n+2\, j} D + \bar{E}_{j'\, (2n+1)'}-\bar{E}_{j'\, (2n+2)'} & \text{ if } j\leq i=2n+1, \\ \delta_{ij}D+ \bar{E}_{j'i'} & \text{ if } j\leq i <2n+1, \\ (-1)^i\delta_{i\, j-1} & \text{ otherwise},
\end{array} \right.
\\
&&\quad(\mu_{12})_{i\ 2n+2-j}= \bar{E}_{j'\, i}\quad \text{and} \quad (\mu_{21})_{i\, j}= -2\, \delta_{i\, 1} \delta_{j\, 2n+1} D
\end{eqnarray}
with $i'= 4n+3-i\in I$.

Denote the row determinant of the upper-left $k \times k$ submatrix of $\mu_{11}$ by $A_k$, and that of the lower-right $k \times k$ submatrix of $\mu_{22}$ by $B_k$. Let $A_0=B_0=1$. Then the row determinant of $\mathcal{A}_{\mathfrak{osp}}^{(2n+2|2n)}$ is 
\begin{equation} \label{cdet osp(2n+2|2n)}
  \begin{aligned}
  rdet(\mathcal{A}_{\mathfrak{osp}}^{(2n+2|2n)})&=A_{2n+1}D^{-1}B_{2n+1}+2\sum_{j,k=1}^{2n+1}{(-1)^{n+\floor*{\frac{k+1}{2}}}A_{j-1}\bar{E}_{k'j}B_{k-1}}.
  \end{aligned}
\end{equation}
If we denote 
\[ A_{2n+1}= \sum_{i=0}^{2n+1} a_{2n+1-i} D^i\]
for $a_{2n+1-i} \in \mathcal{V}(\bar{\mathfrak{p}})$, then the row determinant of $\mathcal{A}_{\mathfrak{osp}}^{(2n+2|2n)}$ is
\begin{equation}\label{2n+2|2n, row determinant-2}
  rdet(\mathcal{A}_{\mathfrak{osp}}^{(2n+2|2n)})=D^{4n+1}+ \sum_{k=0}^{4n}w_{4n+1-k} D^k +(-1)^{n+1}a_{2n+1}D^{-1}a_{2n+1}
\end{equation}
for $w_1, w_2, \cdots, w_{4n+1}\in\mathcal{V}(\bar{\mathfrak{p}})$.

Using an analogous argument to Section \ref{Sec:W(osp(2n|2n))}, one can show the following theorem.

  \begin{theorem}
    Let $w_1,\cdots, w_{4n+1}$ be as in \eqref{2n+2|2n, row determinant-2} and let $\widetilde{w}_{2n+1}:=a_{2n+1}$. Then 
    \begin{enumerate}
    \item $w_1, \cdots, w_{4n+1},\widetilde{w}_{2n+1}\in\mathcal{W}(\bar{\g},f)$,
    \item the set
    $\{w_l\,|\, 1\leq l \leq 4n+1, i\equiv 0 \text{ or }3\Mod 4\} \cup \{\widetilde{w}_{2n+1}\}$
    freely generates $\mathcal{W}(\bar{\g}, f)$.
    \end{enumerate}
  \end{theorem}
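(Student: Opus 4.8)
The plan is to transport the entire argument of Section~\ref{Sec:W(osp(2n|2n))} to the present setting, the only changes being the block size $2n+1$ of $\mu_{11},\mu_{22}$ and the shifted $\delta$-pattern coming from $M=2n+2$. First I would record the analogues of Lemma~\ref{Lemma A_k, B_k}: expanding $A_k$ along its last column and $B_k$ along its last row yields the same cofactor recursions, with the modified entries $\bar{E}_{2n+1\,i}-\bar{E}_{2n+2\,i}$ entering at $k=2n+1$. From these recursions the adjoint relation of Lemma~\ref{Lem:const term of A_2n}, namely $(B_k)^*=(-1)^{\floor*{(k+1)/2}}A_k$, and its consequence Corollary~\ref{cor:(2n|2n), A and B} relating the coefficients of $A_{2n+1}$ and $B_{2n+1}$, follow verbatim, so that $rdet(\mathcal{A}_{\mathfrak{osp}}^{(2n+2|2n)})$ has the palindromic shape \eqref{2n+2|2n, row determinant-2} with tail $(-1)^{n+1}a_{2n+1}D^{-1}a_{2n+1}$.

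The nilpotent subalgebra $\n$ is generated by $\{\bar{F}_{i\,i+1}\mid 1\le i\le 2n\}$ together with the single fork element $\bar{F}_{2n\,2n+2}$. For part~(1) I would first prove the analogue of Lemma~\ref{Lem:bracket for submatrix}: $\text{ad}_\chi^{\mathfrak{J}}\bar{F}_{i\,i+1}$ acts on $A_k$ (resp.\ $B_k$) only when $k=i$, producing $(-1)^iA_{i-1}(D+\chi)$ (resp.\ $B_{i-1}$), while $\text{ad}_\chi^{\mathfrak{J}}\bar{F}_{2n\,2n+2}$ is supported on $k\in\{2n,2n+1\}$, producing $-A_{2n-1}(D+\chi)$ and $-2A_{2n-1}(D+\chi)D$ on the $A_k$ side together with the mirror expressions on the $B_k$ side. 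Feeding these into Corollary~\ref{Cor1} and the expression \eqref{cdet osp(2n+2|2n)} gives $\text{ad}_\chi^{\mathfrak{J}}\bar{F}_{i\,i+1}(rdet)=0$ at once, and for the fork the bracket of the double sum cancels exactly against the bracket of $A_{2n+1}D^{-1}B_{2n+1}$, exactly as in the proof of Theorem~\ref{thm:osp 4n elements} (see \eqref{eq:(2n-1, 2n+1)-4}). By Corollary~\ref{Cor2} every coefficient $w_l$ then lies in $\mathcal{W}(\bar{\g},f)$. For $\widetilde{w}_{2n+1}=a_{2n+1}$, note that $a_{2n+1}$ is the $D^0$-coefficient of $A_{2n+1}$; the simple-root brackets annihilate all of $A_{2n+1}$ (since $k=2n+1\neq i$ for every $i\le 2n$), while the fork bracket yields $-2A_{2n-1}(D+\chi)D$, whose rightmost $D$ forces its $D^0$-coefficient to vanish, so $\widetilde{w}_{2n+1}\in\mathcal{W}(\bar{\g},f)$ as well.

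For part~(2) I would grade $\mathcal{V}(\bar{\mathfrak{p}})(\!(D^{-1})\!)$ by $\Delta$ as in \eqref{eqn:conf wt extension} and show inductively from the recursions that $\Delta_{A_k}=\Delta_{B_k}=k/2$; hence $\Delta_{rdet}=\frac{4n+1}{2}$, giving $\Delta_{w_l}=l/2$ and $\Delta_{\widetilde{w}_{2n+1}}=\frac{2n+1}{2}$. Next I would exhibit, exactly as in Lemma~\ref{Lem:osp basis of keradf}, an explicit homogeneous basis $V^f=\{v_l\mid 1<l\le 4n+1,\ l\equiv0,3\Mod4\}\cup\{\widetilde{v}_{2n+1}\}$ of $\g^f$, with $\widetilde{v}_{2n+1}=E_{2n+1\,1}-E_{2n+2\,1}$, checking $[f,v_l]=[f,\widetilde{v}_{2n+1}]=0$ by direct computation. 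Finally, reading off the degree-one, non-derivative part of the coefficient of $D^{4n+1-l}$ in \eqref{cdet osp(2n+2|2n)} shows $w_l^1=\bar{v}_l$ and $\widetilde{w}_{2n+1}^1=\overline{\widetilde{v}}_{2n+1}$, so the hypotheses of Proposition~\ref{Prop:property_generator} are met and the listed elements freely generate $\mathcal{W}(\bar{\g},f)$, precisely as in Theorem~\ref{Thm:osp(2n|2n) generators}.

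The main obstacle I anticipate is purely combinatorial: verifying the fork cancellation in part~(1). Because $M=2n+2$ shifts every $\delta_k$ relative to the $\mathfrak{osp}(2n|2n)$ case, the signs $(-1)^{T_{2n+1}(k)}$, the parities $(-1)^{\floor*{(k+1)/2}}$, and the factor $(-1)^{n+1}$ in the tail must all be re-tracked so that the analogue of \eqref{eq:(2n-1, 2n+1)-4} produces $A_{2n-1}(D+\chi)B_{2n+1}-A_{2n+1}(D+\chi)B_{2n-1}$ with the correct sign to annihilate $\text{ad}_\chi^{\mathfrak{J}}\bar{F}_{2n\,2n+2}\big(A_{2n+1}D^{-1}B_{2n+1}\big)$. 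Once this single identity is confirmed, everything else is a transcription of Section~\ref{Sec:W(osp(2n|2n))}.
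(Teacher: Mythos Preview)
Your proposal is correct and follows essentially the same approach as the paper's proof, which simply refers back to Theorem~\ref{thm:osp 4n elements} and Lemma~\ref{Lem:bracket for submatrix} for part~(1) and to Proposition~\ref{Prop:property_generator} for part~(2), exactly as you outline. One small remark: the paper writes the extra generator of $\n$ as $F_{2n+2\,2n}$, whereas you use $F_{2n\,2n+2}$; your choice is the element that actually lies in $\g_{1/2}\subset\n$ (its transpose $F_{2n+2\,2n}$ is part of $f$), so this appears to be a typographical slip in the paper rather than a discrepancy in your argument.
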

\begin{proof}

 One can show that 
 \[\text{ad}_\chi^{\mathfrak{J}}\bar{F}_{i\,i+1}( w_k)=0 \quad \text{ and }\quad \text{ad}_\chi^{\mathfrak{J}}\bar{F}_{2n+2\,2n}( w_k )=0\]
 for $i=1,\cdots,2n$ as in Theorem \ref{thm:osp 4n elements}. Since the set $\{F_{i\,i+1}\,|\,i=1,2,\cdots,{2n}\}\cup \{F_{2n+2\,2n}\}$ generates $\n$, if follows that $w_1,\cdots,w_{4n+1}\in \mathcal{W}(\bar{\g},f)$. Also, one can show that $\widetilde{w}_{2n+1}\in \mathcal{W}(\bar{\g},f)$ as in Lemma \ref{Lem:bracket for submatrix}.
 \\We use Proposition \ref{Prop:property_generator} once more to prove (2).
    We have that for $-n\leq k\leq 0$,
    \begin{equation}
      \dim\g_k=\left\{\begin{array}{lll} k+2n+\frac{3}{2} & \text{ if } 2k\text{ is odd, } \\
          k+2n+1 & \text{ if } 2k\equiv 0 \Mod4, \\
          k+2n+2 & \text{ if } 2k\equiv 2 \Mod4, \end{array}\right.
    \end{equation}
  and for $-2n\leq k <-n$,
  \begin{equation}
      \dim\g_k=\left\{\begin{array}{lll} k+2n+\frac{1}{2} & \text{ if } 2k\text{ is odd, } \\
          k+2n & \text{ if } 2k\equiv 0 \Mod4, \\
          k+2n+1 & \text{ if } 2k\equiv 2 \Mod4. \end{array}\right.
    \end{equation}
Hence, one can find a basis
    \begin{equation}
        V^f=\big\{v_l\in\g_{\frac{1-l}{2}}\,|\, 1 < l \leq 4n+1,\ l\equiv 0\text{ or }3 \Mod 4\big\} \cup \big\{\tilde{v}_{2n+1}\in \g_{-n}\big\}
      \end{equation}
  of $\g^f$, whose elements are all homogeneous with respect to the $\frac{\Z}{2}$-grading on $\g$.
  To check the conditions of Proposition \ref{Prop:property_generator}, define a gradation $\Delta$ on $\mathcal{V}(\bar{\mathfrak{p}})(\!(D^{-1})\!)$ as in \eqref{eqn:conf wt extension}.
 If we denote
    \begin{equation}
  \bar{\imath}= \left\{\begin{array}{ll} i+1 & \text{ if } 1\leq{i}\leq{2n+1}, \\ i & \text{ if } 2n+2\leq{i}\leq{4n+2},\end{array} \right.
\end{equation}
for $i\in I$ then $\Delta_{F_{ij}}=\frac{1-\bar{\jmath}+\bar{\imath}}{2}$ for $i,j\in I$. Therefore, $\Delta_{rdet(\mathcal{A}_{\mathfrak{osp}}^{(2n+2|2n)})} = \frac{4n+1}{2}$ and $\Delta_{A_{2n}}=\frac{2n+1}{2}$. In consequence, $\Delta_{w_l}= \frac{l}{2}$ for $l=1,\cdots, 4n+1$ and $\Delta_{\widetilde{w}_{2n+1}}=\frac{2n+1}{2}$. Now, one can properly choose elements of $V^f$ so that 
  \[ w_l^1=\bar{v}_l,\quad \widetilde{w}^1_{2n+1}=\bar{\tilde{v}}_{2n+1},\]
  where $w_l^1,  \widetilde{w}^1_{2n+1}\in \bar{\g}$ are  the linear parts  of $w_l$,   $\widetilde{w}_{2n+1}$ which are not  total derivatives. Since $\Delta_{\bar{v}_l}=\frac{l}{2}$ and $\Delta_{\bar{\tilde{v}}_{2n+1}}=\frac{2n+1}{2}$, the assertion (2) follows.
\end{proof}


\begin{thebibliography}{00}

 
\bibitem{AraMo} T. Arakawa, A. Molev, \textit{Explicit generators in rectangular affine W-algebras of type A}, 
Lett. Math. Phys. \textbf{107} (2017) 47.


\bibitem{BTvD91}  F.A. Bais, T. Tjin and P. Van Driel, \textit{Covariantly coupled chiral algebras},
NucI. Phys. \textbf{B357} (1991) 632.


\bibitem{BFOFW} J. Balog, L. Fehér, L. O’Raifeartaigh, P. Forgacs and A. Wipf, 
\textit{Toda theory and W algebra from a gauged WZNW point of view},
Ann. Phys. (NY) \textbf{203} (1990) 76.


 
\bibitem{BoSch} P. Bouwknegt, K. Schoutens, 
\textit{W -symmetry in conformal field theory}, 
Phys. Rep. \textbf{223} (1993) 183.


\bibitem{CS21} S. Carpentier, U.R. Suh, \textit{ Supersymmetric bi-Hamiltonian systems} Comm. Math. Phys. \textbf{382} (2021), no. 1, 317.

\bibitem{Cheng12}
S.-J. Cheng, W.~Wang, \textit{Dualities and representations of {L}ie superalgebras}, Graduate Studies in Mathematics Series, 144. AMS, Providence, RI, 2012.

\bibitem{DeRaSo} F. Delduc, E. Ragoucy, P. Sorba, 
\textit{Super-Toda Theories and W-Algebras from Superspace Wess-Zumino-Witten Models},
Commun. Math. Phys. \textbf{146} (1992) 403.

\bibitem{deBTj} J. de Boer, T. Tjin, \textit{The relation between quantum W algebras and Lie algebras},
 Commun. Math. Phys. \textbf{160} (1994) 317.

\bibitem{deSole14} A. De Sole, \textit{On classical finite and affine W-algebras}, 
In: Gorelik, M., Papi, P. (eds.) Advances in Lie
Superalgebras. Springer INdAM Series, \textbf{7}, (2014) 51.


\bibitem{DK06} A. De Sole, V.G. Kac, \textit{Finite vs affine W-algebras}, Jpn. J. Math. \textbf{1},  (2006) 137.

\bibitem{deSoKaVa} A. De Sole, V.G. Kac, D. Valeri, \textit{Classical W-algebras and generalized Drinfeld–Sokolov bi-Hamiltonian systems within the theory of Poisson vertex algebras}, Commun. Math. Phys. \textbf{323} 
(2013) 663.

\bibitem{DKV06} A. De Sole, V.G. Kac, D. Valeri,  \textit{Structure of classical (finite and affine) W-algebras.} J. Eur. Math. Soc. \textbf{18} (2016), no. 9, 1873.

\bibitem{DrSo} V. G. Drinfeld, V. V. Sokolov,
\textit{Lie algebras and equations of Korteweg–de Vries type},
J. Sov. Math. \textbf{30} (1985) 1975.




\bibitem{Dynkin} E.~Dynkin, \textit{Semisimple subalgebras of semisimple Lie algebras},
Amer. Math. Soc. Transl. Ser. \textbf{2} (1957) 111.

\bibitem{EvHo} J. Evans,T. Hollowood,
\textit{Supersymmetric Toda field theories},
Nucl. Phys. \textbf{B352} (1991) 723.

\bibitem{FatLuky} V.A. Fateev, S.L. Lukyanov, 
\textit{The models of two-dimensional conformal quantum field theory with Zn symmetry},
Int. J. Mod. Phys. \textbf{A3} (1988) 507.


\bibitem{FORTW} L. Fehér, L. O’Raifeartaigh, P. Ruelle, I. Tsutsui, A. Wipf,
\textit{On hamiltonian reductions of the Wess-Zumino-Novikov-Witten theories}, 
Phys. Rep. \textbf{222} (1992) 1.

\bibitem{FeFr} B.Feigin, E.Frenkel, 
\textit{Quantization of the Drinfeld–Sokolov reduction},
Phys. Lett. \textbf{B246} (1990) 75.

\bibitem{folding} 
 L. Frappat, E. Ragoucy and P. Sorba,
 \textit{Folding the $\calW$-algebras},
 Nucl. Phys. \textbf{B404} (1993) 805, \texttt{hep-th/9301040}.

\bibitem{FrRaSo} L. Frappat, E. Ragoucy, P. Sorba,
\textit{W-algebras and superalgebras from constrained WZW models: a group theoretical classification},
 Comun. Math. Phys. \textbf{157} (1993) 499.


\bibitem{HK06} R. Heluani, V.G. Kac,  \textit{ Supersymmetric vertex algebras}, Commun. Math. Phys. \textbf{271} (2007)  no. 1, 103.


\bibitem{Kac} V. G. Kac, \textit{Vertex algebras for beginners}, Second edition. University Lecture Series, 10. AMS, Providence, RI, 1998.



\bibitem{KRW} V. G. Kac, S.-S. Roan and M. Wakimoto, \textit{Quantum reduction for affine
superalgebras}, Comm. Math. Phys. \textbf{241} (2003) 307.


\bibitem{KW1} V. G. Kac and M.Wakimoto, \textit{Quantum reduction and representation
theory of superconformal algebras}, Adv. Math. \textbf{185} (2004), 400,
Corrigendum, Adv. Math. \textbf{193} (2005), 453.



\bibitem{KoMoNo} S. Komata, K. Mohri, H. Nohara, 
\textit{Classical and quantum extended superconformal algebra}, 
Nucl. Phys. \textbf{B359} (1991) 168.

\bibitem{LeiSaSe} D.A. Leites, M.V. Saveliev, V.V. Serganova, 
\textit{Embeddings of the Lie superalgebra $osp(1|2)$ and the nonlinear supersymmetric equations associated with them},
Group Theoretical Methods in Physics, Vol. 1 (Russian), Editors Markov M. et. al., Yurmala, 1985, Nauka, Moscow, 1986, 377–394, English translation: VNU Sci. Press, Utrecht, (1986) 255.

\bibitem{MR94} J.O. Madsen, E. Ragoucy,   \textit{ Quantum hamiltonian reduction in superspace formalism}, Nucl. Phys. \textbf{B429} (1994) 277.

\bibitem{MRS20} A.I. Molev, E. Ragoucy and U.R. Suh,  \textit{ Supersymmetric W-algebras}, Lett. Math. Phys. \textbf{111} (2021), no. 1, Paper No. 6, 25 pp.

\bibitem{Suh16} U.R.Suh,  \textit{ Classical affine W-algebras associated to Lie superalgebras.} J. Math. Phys. \textbf{57} (2016) no. 2, 021703, 34 pp.

\bibitem{Suh} U.R.Suh,  \textit{ Structures of (supersymmetric) classical W-algebras}, J. Math. Phys. \textbf{61} (2020), no. 11, 111701, 27 pp.

\bibitem{MR} A.I. Molev and E. Ragoucy,  \textit{ Classical W-algebras in Types A,B,C,D and G}, Commun. Math. Phys. \textbf{336} (2015) no.2 1053.




\bibitem{Zam} A.B. Zamolodchikov,
\textit{Infinite extra symmetries in two-dimensional conformal quantum field theory},
Teoret. Mat. Fiz. \textbf{65} (1985) 347.








 






\end{thebibliography}
\end{document}